\newtheorem{defn}[theorem]{Definition}
\newtheorem{prop}[theorem]{Proposition}
\newtheorem{obs}[theorem]{Observation}
\newtheorem{inv}[theorem]{Invariant}
\newtheorem{cor}[theorem]{Corollary}
\newtheorem{asp}[theorem]{Assumption}
\newtheorem{rul}[theorem]{Rule}
\DeclareMathOperator{\onl}{\mathrm{onl}}
\DeclareMathOperator{\offl}{\mathrm{offl}}
\DeclareMathOperator{\offlinecolors}{\mathrm{blue}, \mathrm{red}, \mathrm{yellow}}
\newcommand{\Vast}{\bBigg@{5}}
\tikzset{myblue/.style={fill=white, draw=black,  circle, pattern=horizontal lines light blue, minimum size=6mm, inner sep=.5mm}}
\tikzset{myyellow/.style={draw=black, fill=Goldenrod, circle,  minimum size=6mm,inner sep=.5mm}}
\tikzset{myred/.style={draw=black,  circle, pattern=mycheck, pattern color=black, minimum size=6mm,inner sep=.5mm}}
\tikzset{neutral/.style={draw=black,  circle, fill=white, minimum size=6mm,inner sep=.5mm}}
\newlength{\flexcheckerboardsize}
\newcommand{\defineflexcheckerboard}[4]{
    \setlength{\flexcheckerboardsize}{#2}
    \pgfdeclarepatterninherentlycolored{#1}
        {\pgfpointorigin}{\pgfqpoint{2\flexcheckerboardsize}    
        {2\flexcheckerboardsize}}
        {\pgfqpoint{2\flexcheckerboardsize}
        {2\flexcheckerboardsize}}%
        {
            \pgfsetfillcolor{#4}
            \pgfpathrectangle{\pgfpointorigin}{
            \pgfqpoint{2.1\flexcheckerboardsize}    
                {2.1\flexcheckerboardsize}}
          \pgfusepath{fill}
          \pgfsetfillcolor{#3}
          \pgfpathrectangle{\pgfpointorigin}
            {\pgfqpoint{\flexcheckerboardsize}
            {\flexcheckerboardsize}}
          \pgfpathrectangle{\pgfqpoint{\flexcheckerboardsize}
            {\flexcheckerboardsize}}
            {\pgfqpoint{\flexcheckerboardsize}
            {\flexcheckerboardsize}}
            \pgfusepath{fill}
        }
}
\title{Online Makespan Scheduling under Scenarios} 
\author{Ekin Ergen}{Technical University of Berlin, Germany \and \url{http://www.myhomepage.edu} }{ergen@math.tu-berlin.de}{https://orcid.org/0000-0002-1825-0097}{}
\authorrunning{E. Ergen} 
\keywords{online scheduling, scenario-based model, online algorithms} 
\begin{document}
\maketitle
\begin{abstract}
We consider a natural extension of online makespan scheduling on identical parallel machines by introducing scenarios. A scenario is a subset of jobs, and the task of our problem is to find a global assignment of the jobs to machines so that the maximum makespan under a scenario, i.e., the maximum makespan of any schedule restricted to a scenario, is minimized. 

For varying values of the number of scenarios and machines, we explore the competitiveness of online algorithms. We prove tight and near-tight bounds, several of which are achieved through novel constructions. In particular, we leverage the interplay between the unit processing time case of our problem and the hypergraph coloring problem both ways: We use hypergraph coloring techniques to steer an adversarial family of instances proving lower bounds, which in turn leads to lower bounds for several variants of online hypergraph coloring.
\end{abstract}

\newpage
\section{Introduction}
We study a natural extension of \emph{online makespan scheduling}, also known as \emph{online load balancing}, one of the most well-known problems in the scope of online optimization. In online makespan scheduling, we know a number $m$ of identical parallel machines in advance, in addition to which jobs $j$ are revealed one at a time along with their processing times $p_j\geq 0$. Our task is to assign the jobs to a machine as soon as they are revealed, solely under the knowledge of the preceding jobs, so as to minimize the \emph{makespan}, i.e., the maximum sum of processing times any machine is assigned to. 
This problem was initially studied by Graham almost 60 years ago, who proved that assigning every job to a currently least loaded machine is a $(2-\frac{1}{m})$-competitive algorithm \cite{graham}. It was observed that no better deterministic algorithm can exist for $m\in \{2,3\}$ \cite{graham}\cite{grahamlb}, while for larger values of $m$, such a tight ratio is still not known. 

We extend online makespan scheduling by a \emph{scenario-based model}: We are given a number $m$ of machines. There is a known number $K$ of job sets (called \emph{scenario}s) $S_k\subseteq \{1,\ldots,n\}$, $k\in [K]$, over a common ground set $\{1,\ldots, n\}$. Although the number of scenarios is known, neither the number of jobs nor their processing times $p_j$ ($j\in [n]$) are known in advance. The ground set $[n]$ is revealed in increasing order of jobs; more precisely, we are incrementally informed of $p_1,\ldots, p_j$ as well as $S_k\cap \{1,\ldots, j\}$ for increasing $j$.
The task is to find a global assignment $\tau \colon \{1,\ldots, n\}\to\{1,\ldots, m\}$ of the jobs to machines that delivers a reasonable solution restricted to every scenario. To be more precise, the objective is to minimize the makespan of the \emph{worst-case scenario}, i.e., the scenario that yields the highest makespan. As in the classical online scheduling problem, a job $j$ must be assigned irrevocably to a machine $\tau(j)$ as soon as it is revealed. In essence, we are solving a finite number of instances of Online Makespan Scheduling over the same ground set of jobs simultaneously.

The simplicity of the scenario model offers a variety of interpretations and applications. For instance, we can view the scenarios as distinct services that are provided in $m$ (number of machines) sites to $n$ (number of jobs) customers, where the customers are revealed one by one with their desired services and need to be assigned to one of the sites immediately.
Another interpretation of our problem may concern a fair balancing of goods according to multiple agents in the following setting: Each of the $K$ agents are interested in partitioning a subset $S_k$, $k\in \{1,\ldots, K\}$ of $n$ goods with weights $p_j$ among $m$ buckets, where the partition should be as balanced for all agents as possible (i.e., we would like to minimize the maximum bucket load for any agent). Here, the goods are revealed one by one together with the agents that are interested in them.

Moreover, even special cases of online makespan scheduling under scenarios correspond to interesting questions in their own regard. The special case $p_j\equiv 1$ of unit processing times can be related to some cases of discrepancy minimization as well as hypergraph coloring.  

In the large array of previous research on combinatorial optimization under scenarios, it was shown for several offline problems that optimizing under a number of scenarios is computationally harder than their underlying single-scenario counterparts. Accordingly, our goal in this work is to gain a better understanding of the intersection of scenario-based models and competitiveness, in particular how competitiveness becomes harder under a growing number of scenarios.
\subsection{Our Contribution}\label{contribution}

Implementing a miscellany of ideas to analyze competitiveness bounds for the varying numbers $m$ of machines and $K$ of scenarios, we achieve several tight bounds and further nontrivial bounds. An overview of our results, depending on the numbers $m$ and $K$, can be seen in Table \ref{tab:results}. Several of the entries are obtained via a generalization of Graham's List Scheduling Algorithm which we extend using the pigeonhole principle. 

By a simple example, we show that for any number $m\geq 2$ of machines and (at least) $3$ scenarios, no algorithm with a competitive ratio better than $2$ can exist, even for jobs with unit processing times. To underline the competitiveness gap between two and three scenarios, we contrast the aforementioned result with an algorithm for two machines and two scenarios, which is one of our main results.

\begin{theorem}\label{thm:53ub}
    There exists a $\nicefrac{5}{3}$-competitive algorithm for \textsc{Online Makespan Scheduling under Scenarios} for $m=K=2$.
\end{theorem}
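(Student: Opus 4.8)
The plan is to maintain an online-computable lower bound on $\mathrm{OPT}$, run a greedy rule in the style of List Scheduling whose tie-breaking is adapted to the scenario structure, and prove inductively that the algorithm's worst-scenario makespan never exceeds $\tfrac53$ times that lower bound.

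\emph{Setup and lower bounds.} Jobs in no scenario are irrelevant (assign them arbitrarily), so every revealed job is \emph{common} ($\in S_1\cap S_2$), \emph{private-$1$} ($\in S_1\setminus S_2$), or \emph{private-$2$} ($\in S_2\setminus S_1$). For machine $i\in\{1,2\}$ write $b_i,a^1_i,a^2_i$ for the load it currently carries from common, private-$1$, private-$2$ jobs; then its scenario-$k$ load is $L^k_i=b_i+a^k_i$ and the objective equals $\max_{i,k}L^k_i$. Over the revealed prefix, let $\Lambda$ be the largest of the numbers $p^{(1)}_k$, $\;p^{(2)}_k+p^{(3)}_k$, and $\;\tfrac12\sum_{j\in S_k}p_j$ taken over $k\in\{1,2\}$, where $p^{(1)}_k\ge p^{(2)}_k\ge p^{(3)}_k$ are the three largest processing times occurring in scenario $k$ (or $0$ if there are fewer). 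Each of these is a lower bound on $\mathrm{OPT}$ --- the middle one because among the three largest jobs of a scenario two must share a machine --- and $\Lambda$ is non-decreasing, so it suffices to keep $\max_{i,k}L^k_i\le\tfrac53\Lambda$ at all times; at the end $\Lambda\le\mathrm{OPT}$, which gives the claimed ratio.

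\emph{The algorithm.} Call a job \emph{large} if its size exceeds $\tfrac23\Lambda$; by the $p^{(2)}_k+p^{(3)}_k$ bound at most two jobs relevant to any fixed scenario are ever large. When job $j$ of size $p$ arrives: update $\Lambda$; if $j$ is private-$k$, put it on a machine of minimum scenario-$k$ load $L^k_i$ (this leaves the other scenario's loads untouched); if $j$ is common, put it on a machine minimizing $\max(L^1_i,L^2_i)$, i.e.\ greedily with respect to the worse scenario. Ties are broken in favour of the less loaded machine, and secondarily so as to keep the common loads $b_1,b_2$ balanced; the purpose of the rules is precisely to prevent the imbalance among the common jobs and the imbalance among a scenario's private jobs from accumulating on the same machine.

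\emph{Analysis and the main obstacle.} One argues by induction on the number of revealed jobs that $\max_{i,k}L^k_i\le\tfrac53\Lambda$ is preserved. For a \emph{small} job this is the classical List Scheduling estimate: the machine picked for a private-$k$ (resp.\ common) job had, beforehand, scenario-$k$ load at most $\tfrac12\bigl(L^k_1+L^k_2\bigr)\le\Lambda$ (resp.\ $\max(L^1_i,L^2_i)$ at most that), so after adding $p\le\tfrac23\Lambda$ it is at most $\tfrac53\Lambda$, while all untouched loads stay as they were. The delicate part --- which I expect to be the crux of the whole argument --- is the \emph{large} jobs, above all a large common job, which must be placed on one of two machines whose scenario loads may both already be close to $\Lambda$, and which may arrive only after large private jobs of both scenarios have been committed. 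Handling this seems to require strengthening the invariant so that, before any large job can arrive, the configuration cannot be too skewed: e.g.\ that a machine carrying a large private-$k$ job carries little common load, and that the common loads of the two machines differ by at most the size of one large job. Formulating such an invariant and checking that every job type, in every arrival order, preserves it is where essentially all the work lies; granting that, evaluating the invariant after the last job yields $\max_{i,k}L^k_i\le\tfrac53\Lambda\le\tfrac53\,\mathrm{OPT}$.
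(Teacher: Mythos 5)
You have correctly identified the crux---keeping the induction alive when a large common job arrives after large private jobs of both scenarios have been committed---but you have not carried it out. Your proposal closes by saying that formulating a strengthened invariant and checking it for every job type and arrival order ``is where essentially all the work lies; granting that,'' the bound follows. That concession means the argument is missing exactly the step that separates this theorem from the single-scenario Graham analysis. It is moreover unclear that your purely greedy private-job rule (send a private-$k$ job to the machine of smallest scenario-$k$ load, with the stated tie-breaks) can support any sufficiently strong invariant: the paper's algorithm deliberately does \emph{not} use greedy placement for private-$S_2$ jobs but instead tests whether the favoured machine can receive the job while preserving the invariant and falls back otherwise---a rule one cannot even state without already having the invariant in hand.

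The paper takes a genuinely different route, and the concrete objects it introduces are exactly what your sketch leaves abstract. Rather than maintaining $\max_{i,k}L^k_i\le\tfrac53\Lambda$ directly, it replaces the competitive ratio by a \emph{proxy} ratio whose denominator uses only the average scenario load and the sizes of jobs whose completion time equals the makespan; this weakening allows Observation~\ref{obs:bottleneck}, which restricts attention to instances ending in a prescribed two-job tail ($n\in S_1\cap S_2$ and a preceding single-scenario job with matching completion time). It then quantifies the ``configuration cannot be too skewed'' condition you gesture at through the \emph{anticipation} parameter $\alpha$ and maintains the three-part Invariant~\ref{asp:induction}: proxy ratio at most $\tfrac53$, anticipation $\alpha\le 2$, and a dominance inequality on the makespan machine. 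Your lower bounds on $\mathrm{OPT}$ are valid (and you use $p^{(2)}_k+p^{(3)}_k$, which the paper does not), and your small-job step is the standard List Scheduling estimate; but without the strengthened invariant the large-common-job case---which you yourself call the crux---is absent, so the proposal does not yet constitute a proof.
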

This algorithm as well as its analysis combine techniques  well-known from algorithms for online makespan scheduling with novel ideas that concern the scenario-based properties of a schedule. 
What sets our analysis apart from previous work is that we alternate between forward and reverse engineering: Assuming that our algorithm obeys some fixed rules, we explore what we may assume about a well-structured worst-case instance. 
In spite of seemingly strong restrictions that we impose, the ratio of our algorithm is not far from the best achievable bound by any deterministic algorithm, which we show is at least $\frac{9+\sqrt{17}}{8}\approx 1.640$. Altogether, we obtain a near-complete picture of competitiveness for $m=2$.

The remainder of the paper demonstrates a disparity between increasing the number $m$ of machines and increasing the number $K$ of scenarios. 
For a fixed but arbitrary $m$, we show that there exists a sufficiently large number $K(m)$ of scenarios and a family of instances with $K(m)$ scenarios on which no deterministic online algorithm can achieve a ratio better than $m$. Furthermore, we can even choose all instances to have unit processing times $p_j\equiv 1$. This result is particularly noteworthy, considering that any online algorithm on $m$ machines is trivially $m$-competitive: Indeed, the online output is at most the total load and an offline optimum is at least the average load, and these are by at most a factor of $m$ apart. 

To find the suitable number $K(m)$, we interpret the instances as hypergraphs, passing to a problem we call \textsc{Online Makespan Hypergraph Coloring$(m)$} instead (cf.~Section \ref{sec:prelim}). Strikingly, such a family can be created using hypertrees on which no deterministic algorithm can achieve a coloring without a monochromatic edge simultaneously. 

\begin{theorem}\label{thm:general}
   Let $m\in \mathbb N$. There exists a number $K(m)$ such that for any $r<m$, no deterministic algorithm for \textsc{Online Makespan Hypergraph Coloring$(m)$} is $r$-competitive, even when restricted to hyperforests with $K(m)$ hyperedges.  
\end{theorem}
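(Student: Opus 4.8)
The plan is to pass entirely to the hypergraph-coloring reformulation. With unit processing times an assignment is just an $m$-coloring of the vertex (job) set, and the makespan of a scenario (hyperedge) $S_k$ equals the size of the largest color class inside $S_k$; so $S_k$ being monochromatic means makespan $|S_k|$, while a perfectly balanced coloring of $S_k$ has makespan $|S_k|/m$. Hence it suffices to build, for any prescribed $r<m$ and with a number of hyperedges bounded by a single constant $K(m)$, an adaptively revealed hypertree on which (i) every deterministic online algorithm is forced to create a monochromatic hyperedge, while (ii) some offline coloring is (nearly) perfectly balanced on every hyperedge; then the ratio between online and offline is arbitrarily close to $m$, hence above $r$. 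Part (ii) I would make cheap by revealing vertices in \emph{blocks} of $m$ that are always kept together, so that every hyperedge is a union of whole blocks: coloring each block with each color exactly once then balances every hyperedge perfectly, independently of the hypergraph's shape. So the substance is part (i).

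For (i) I would iterate a color-elimination step (equivalently, recurse on $m$, with the trivial base case $m=1$). The step: reveal a large batch of fresh blocks all contained in one hyperedge $G$; an online algorithm that hopes to stay $r$-competitive cannot let any color class of $G$ exceed $r\cdot\mathrm{OPT}$, yet by pigeonhole over the $m$ colors some color $c$ is used on a constant fraction of the batch, so $G$ is now ``saturated'' in $c$ and no further block placed into $G$ may use color $c$ --- such blocks are effectively $(m-1)$-colored. Feeding those into the $(m-1)$-color construction gives the recursion, and after $m-1$ successive eliminations the remaining blocks are pinned to a single color, so the hyperedge collecting them is monochromatic (and if at any earlier point the algorithm refuses to cooperate, it has already overshot $r\cdot\mathrm{OPT}$ on some hyperedge and we are done). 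Each level introduces only a bounded number of new hyperedges, so $K(m)$ is a fixed --- though possibly fast-growing --- function of $m$; the underlying host tree is grown in parallel by hanging every new batch of blocks off a common spine, so each hyperedge is the vertex set of a subtree and the scenario hypergraph is a hypertree.

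The step I expect to be the main obstacle is keeping the number of hyperedges bounded \emph{uniformly} in $r$. A naïve saturator $G$ can absorb only about $(m-r)\cdot\mathrm{OPT}$ further forced blocks before its own size forces $\mathrm{OPT}$ up, which, as $r\to m$, would demand more and more saturators; the remedy is to reuse each hyperedge as a saturator at many stages of the recursion, adding a fresh balanced batch to it each time so that it still ends up balanced for the offline solution, and then to verify that all these reuses are globally consistent and compatible with the subtree/host-tree bookkeeping. A further subtlety is that the online algorithm must not be able to dodge an ``eliminated'' color by spending it on newly revealed blocks; this is handled by always introducing a new block already inside all of the saturated hyperedges currently relevant to it, so the elimination is genuinely binding at coloring time. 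Finally, the very same construction yields the combinatorial statement advertised in the introduction: no deterministic online algorithm can $m$-color the hypertrees of this family without producing a monochromatic hyperedge.
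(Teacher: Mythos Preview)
Your high-level recursion (eliminate one color at a time) is in the same spirit as the paper's construction, but the concrete mechanism you propose does not work. The block trick is fatal to your own argument: if every hyperedge is a union of whole blocks of $m$ vertices sharing identical hyperedge membership, then the online algorithm can simply run round-robin \emph{per membership pattern}, assigning to the $i$-th vertex it ever sees with a given pattern the color $((i-1)\bmod m)+1$. Since in this model the full hyperedge membership of a vertex is known at revelation time, this is a legal online rule, and it makes every color class of every hyperedge equal to the number of blocks in that hyperedge, i.e.\ exactly $\mathrm{OPT}$. So no hyperedge is ever ``saturated'': pigeonhole only guarantees a color appearing $\geq \mathrm{OPT}$ times, which is ratio $1$, not a binding constraint for any $r>1$. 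Even against a less disciplined algorithm, adding a new block to your saturator $G$ also raises $\mathrm{OPT}(G)$ by one, so the slack $r\cdot\mathrm{OPT}-(\text{count of }c)$ never shrinks; the algorithm may keep spending color $c$ on every future block without penalty. The difficulty you flag (uniformity in $r$) is downstream of a step that simply does not go through.

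The paper's route is different in a way that matters. It keeps the offline optimum equal to $1$ throughout by using only hyperedges of size at most $m$ together with an explicitly maintained proper offline coloring, so that forcing a single monochromatic hyperedge of size $m$ already yields ratio $m$ for every $r<m$, with no dependence on $r$ anywhere in the construction. ``Saturation'' is obtained structurally rather than by pigeonhole on a batch: a fixed gadget (the seven-node $\mathcal S$ in the $m=3$ sketch) forces the online algorithm to create a monochromatic edge of size $2$; copies of the gadget then serve as \emph{palettes}, in that a fresh vertex joined to one palette node of each online color is forced into a prescribed online color while some offline color is still available for it. Iterating this grows monochromatic online edges from size $2$ up to size $m$, and the hypertree property survives because each palette is consumed once and components are glued through single new vertices. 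If you want to repair your plan, the missing ingredient is precisely such a gadget replacing the batch-plus-pigeonhole step: something that forces a short monochromatic online edge while still admitting a makespan-$1$ offline coloring.
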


The statement of Theorem \ref{thm:general} is best possible in the sense that the numbers $K(m)$ cannot be replaced by a global $K$ for which the incompetitiveness holds for all $m$ (cf.~Corollary \ref{cor:k1competitive}). 

The proof of Theorem \ref{thm:general} can be found in the full version of the paper. To showcase some of our techniques, we sketch a proof for a similar statement for $K=3$ in Section \ref{sec:hypergraphs}. We utilize a subhypergraph on $7$ nodes as a gadget and obtain a relatively compact construction, both in terms of the hypergraph and the case distinction that we apply to analyze it. 

An advantage of our approach while proving Theorem \ref{thm:general} is that it only features hyperforests. More precisely, we implicitly show that for every $m\in \mathbb N$, there is a number $n\leq (2m+1)\uparrow\uparrow5$ of nodes such that no deterministic online algorithm on $n$ nodes that uses $m$ colors can avoid a monochromatic hyperedge. In other words, in order to color a hyperforest with an arbitrary number $n$ of nodes online without monochromatic edges, we need $\Omega(\mathrm{slog_5} n)$ colors (cf.~Preliminaries for notation), even if the restrictions $e\cap \{1,\ldots, j\}$ of hyperedges $e$ to revealed nodes are known upon the revelation of nodes $\{1,\ldots, j\}$. Here, the emphasis is on the additional knowledge of partial hyperedges: This makes an adversarial revelation of hyperedges more challenging and to the best of our knowledge, none of the previous lower bound constructions \cite{NAGYGYORGY200823}\cite{hypertreecoloring} are applicable under such additional information. 
This result addresses a completely open problem from \cite{NAGYGYORGY200823} and sets a non-constant lower bound for Online Hypergraph Coloring for hypertrees under the knowledge of partial hyperedges.

\begin{cor}\label{ackermann}
    Let $n\in \mathbb N$. There exists no $o(\mathrm{slog_5} n)$-competitive deterministic algorithm for Online Hypergraph Coloring with nodes $v_1,\ldots, v_n$ (revealed in this order) even if we restrict to  hypertrees and additionally, the subsets $e\cap \{v_1,\ldots, v_j\}$ of hyperedges $e\in E$ are known upon the assignment of node $v_j$.
\end{cor}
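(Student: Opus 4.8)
The plan is to deduce Corollary~\ref{ackermann} directly from the quantitative content of the proof of Theorem~\ref{thm:general}, which I take in the following form: for every $m\in\mathbb{N}$ there is an integer $n_m\le F(2m+1)$ such that, against any deterministic online algorithm using at most $m$ colors, the construction underlying Theorem~\ref{thm:general} can reveal a hyperforest on at most $n_m$ nodes --- processing the nodes in the order $v_1,v_2,\dots$ and disclosing only each partial hyperedge $e\cap\{v_1,\dots,v_j\}$ at step $j$ --- on which the algorithm is forced to produce a monochromatic hyperedge. Granting this, what remains is only to convert ``uses at most $m$ colors'' into the language of competitive ratios and to make the statement hold for every node count $n$ rather than merely for $n=n_m$.

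For the conversion I would use that every hyperforest all of whose hyperedges have size at least $2$ has chromatic number exactly $2$: root the host forest and color each node by the parity of its depth. A hyperedge $e$ is, by definition, the node set of a subtree $S$ of the host forest, and if $|e|\ge 2$ then $S$ contains its topmost node $t$ together with some child $t'$ of $t$ lying in $S$; since $t$ and $t'$ have depths differing by $1$ they receive distinct colors, so $e$ is not monochromatic. Hence a $c$-competitive online hypergraph coloring algorithm uses at most $2c$ colors on any such hyperforest. To remove the dependence on $n_m$ I would observe that appending isolated nodes to a hyperforest keeps it a hyperforest (and, if connected host trees are demanded, one may join the host forest into a single host tree afterwards without touching any hyperedge), and that an online algorithm, being oblivious to future nodes, colors the first $n_m$ nodes identically whether or not padding follows.

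Assembling these: suppose for contradiction that $A$ is a deterministic online algorithm for \textsc{Online Hypergraph Coloring} on hypertrees, respecting partial-hyperedge knowledge, whose competitive ratio on $n$-node instances is $c(n)=o(\mathrm{slog}_5 n)$. Fix a large $n$ and set $m:=\lceil 2\,c(n)\rceil$. Then $2m+1\le 4c(n)+3=o(\mathrm{slog}_5 n)$, so $2m+1<\mathrm{slog}_5 n$ for $n$ large, and since $F$ is increasing, $F(2m+1)<F(\mathrm{slog}_5 n)=n$. Run $A$ against the construction of Theorem~\ref{thm:general} at parameter $m$; it reveals a hyperforest $H_m$ on at most $F(2m+1)<n$ nodes, and I then pad $H_m$ with isolated nodes up to exactly $n$ nodes, obtaining a hyperforest $H$ with $\chi(H)=2$ on which $A$ agrees with the above run on the nodes of $H_m$. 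If $A$ used more than $m$ colors on $H_m$, then it uses more than $m\ge 2c(n)=c(n)\,\chi(H)$ colors on $H$, contradicting $c(n)$-competitiveness; otherwise $A$ used at most $m$ colors on $H_m$, so by the quantitative statement it created a monochromatic hyperedge of $H_m\subseteq H$, contradicting that $A$ properly colors $H$. Either way we have a contradiction, and since $n$ was an arbitrary large integer, no such $A$ exists, which is Corollary~\ref{ackermann}.

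The hard part is not any of this --- which is bookkeeping and elementary estimates --- but sits entirely inside Theorem~\ref{thm:general}: mining its (full-version) proof one has to verify that the explicit node bound $n_m\le F(2m+1)$ genuinely holds (so that $m$ may be taken to grow like $\mathrm{slog}_5 n$ and not more slowly), that the construction never leaves the class of hyperforests, and, crucially, that the adversary's forcing argument still goes through while the algorithm is handed every partial hyperedge $e\cap\{v_1,\dots,v_j\}$ at step $j$ --- the feature that, as the paper emphasizes, makes this construction applicable where the earlier lower bounds of \cite{NAGYGYORGY200823,hypertreecoloring} are not.
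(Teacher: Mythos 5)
Your proposal is correct and takes essentially the same route as the paper: convert the quantitative node bound $n_m \le (2m+1)\uparrow\uparrow 5$ extracted from the proof of Theorem~\ref{thm:general} into a lower bound on the number of colors, note that hypertrees with all hyperedges of size at least $2$ are $2$-chromatic, and translate into a competitive-ratio lower bound via padding. The paper only sketches this translation inline in Section~\ref{contribution}, and your write-up fills in exactly the expected details (the depth-parity argument for $\chi=2$ and the obliviousness of the online algorithm to trailing isolated nodes).
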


For unit processing times, $K=3$ and varying $m$, we further propose a $2$-competitive algorithm, which is best possible due to the lower bounds mentioned earlier. Although the special case of unit processing times is admittedly a restriction, the result still illustrates a contrast with the tight lower bound of $m$ for a varying number of scenarios, which also holds already for instances with unit processing times.

\begin{table}[]

\begin{center}
\centering
 \begin{tblr}{width=\paperwidth,
      colspec={|c|ccc|},
      row{1}={font=\bfseries},
      row{1}={bg=orange!30},
      column{1}={font=\bfseries},
      row{even}={bg=orange!10},
    }

\hline
  m $\downarrow$ K $\rightarrow$ & 2 & 3 & 4+\\ 
  \hline
  2 & $\in \left(1.640, 
{{\frac{5}{3}}}
  \right]$ &{${=2}$}
  &{${=2}$}\\ 
  3 & ${\in\left[\frac{5}{3},2\right]}$& 
  \begin{tabular}{@{}c@{}}
 $[2,3]$  
\\ \footnotesize {${=2}$ for ${p_j\equiv 1}$}
\end{tabular} 
 &
 \begin{tabular}{@{}c@{}}
{$\in[2,3]$}
  \\\footnotesize {$=3$ even for $p_j\equiv 1$,} \\ \footnotesize{if $K\geq 233$ }
\end{tabular}
\\ 
   4+ & $\in \left[\frac{5}{3}, 3\right)$ &  \begin{tabular}{@{}c@{}}
 {$\in[2,4)$}\\ 
 {\textcolor{black}{\footnotesize {${=2}$ for ${p_j\equiv 1}$}}}\end{tabular}
   &\begin{tabular}{@{}c@{}}\small
 {$\in[2,\min\{m,K+1\})$}
   \\{\footnotesize $=m$ even for $p_j\equiv 1$,} \\\footnotesize{ if $K$ large enough}
 \end{tabular}
 \\
 \hline
  \end{tblr}

\end{center}
    \caption{Overview of the competitive ratios obtained by our results. Recall that on $m$ machines, any algorithm is trivially $m$-competitive.}
    \label{tab:results}
\end{table}
\subsection{Preliminaries}\label{sec:prelim}
Throughout this paper, for $n\in \mathbb N$, we denote by $[n]\coloneqq\{1,\ldots,n\}$ the set of numbers $1$ through $n$. An instance $ I=(n, (p_j)_{j\in [n]}, \{S_k\}_{k\in [K]})$ of \textsc{Online Makespan Scheduling under Scenarios$(m,K)$} (\textsc{OMSS$(m,K)$}) is parametrized by a number $m\in \mathbb N$ of machines and a number $K\in \mathbb N$ of scenarios, and defined as a tuple consisting of a number $n\in \mathbb N$ of jobs, a processing time $p_j\geq 0$ for each job $j$ and job subsets $S_1,\ldots, S_K\subseteq [n]$, called \emph{scenarios}. As we usually work with a single instance at a time, we often suppress this notation in the sequel as well as the parameters $m$ and $K$, which are often specified per section.

For a set $S\subseteq [n]$, we denote $p(S)\coloneqq \sum_{j\in S}p_j$. Our objective is to find a partitioning $J_1\dot \cup \ldots \dot\cup J_m$ of the set of jobs that minimizes 
$\max_{k\in [K]}\max_{i\in [m]}p(J_i\cap S_k).$
In other words, we consider schedules that arise from restricting a partition to each of the scenarios, and our goal is then to minimize the makespan of the worst-case scenario, which we also call the \emph{makespan of the schedule}. Furthermore, our instance is \emph{online} in the sense that the number $m$ of machines is known in advance, but jobs $j\in [n]$ as well as their processing times $p_j$ and the indices $k\in [K]$ such that $j\in S_k$ are revealed one job $j$ at a time. Once the information of a job $j$ is revealed in addition to already known information about the jobs in $[j-1]$, the job $j$ must be assigned to a machine $i\in [m]$, which is an irrevocable decision.

In our analyses and arguments, we use the notion of a \emph{completion time} of a job $j$. In classical scheduling, this is defined as the sum of processing times preceding (including) $j$ that are on the same machine as $j$. In our setting, the restricted schedules are evaluated without idle times, meaning that such completion times might be different for different scenarios. Accordingly, we define the \emph{completion time of job $j\in J_i\cap S_k$ in scenario $S_k$} as $C^k_j\coloneqq \sum_{j'\in J_i\cap S_k\cap [j]}p_{j'}$ and the \emph{completion time} as $C_j\coloneqq \max_{k\in [K], j\in S_k}C_j^k$. We can further define 
\[\mathrm{makespan}(j)\coloneqq \max_{j'\in [j]}C_{j'}.\]

Online algorithms are often analyzed by their performance compared to the best possible outcome under full information. In existing literature, this comparison is quantified by the \emph{competitive ratio}. An algorithm is said to have competitive ratio $\rho$ if for every instance, it is guaranteed to output a solution that has an objective value which deviates from the value of an optimal solution under full information by at most a factor of $\rho$.

We also consider the special case of unit processing times $p_j\equiv 1$. In this special case, we may view our jobs as nodes, our scenarios as hyperedges containing a subset of the nodes and the partition $J_1\dot\cup \ldots \dot \cup J_m$ as a coloring $c\colon [n]\to[m]$ of the nodes. Then, our problem can be formulated as an online hypergraph coloring problem, in which we color a hypergraph $\mathcal H=(V,E)$ with a fixed number of colors $1,\ldots, m$ and seek to minimize $\max_{i\in [m],e\in E}|\{c^{-1}(i)\}\cap e|$.
The nodes $j\in [n]$ are revealed one at a time in increasing order along with partial hyperedges $e\cap [j]$ $(e\in E, j\in [n])$.\footnote{According to this definition, hyperedges can have size $1$, although such hyperedges can and will be omitted as they pose no difference with respect to the objective function.} We shall name this problem \textsc{Online Makespan Hypergraph Coloring}$(m)$ and address it in Section \ref{sec:hypergraphs}. Our construction in Section \ref{sec:crazysec} is even restricted to \emph{hyperforests}. A \emph{hyperforest} is a hypergraph without a hypercycle, where a hypercycle is defined as a sequence $v_1,e_1,v_2\ldots, e_\ell,v_{\ell+1}=v_1$ with pairwise distinct hyperedges $e_t$ such that $v_t, v_{t+1}\in e_t$ for all $t\in [\ell]$.

Finally, it is worth mentioning some big numbers that we encounter in Section \ref{sec:crazysec}. In particular, we use Knuth's arrow notation to denote $F(x)\coloneqq x\uparrow\uparrow 5\coloneqq x^{x^{x^{x^{x}}}}$ and the superlogarithm $\mathrm{slog}_5(x)\coloneqq F^{-1}(x)$.

\subsection{Related Work}
The problem that we introduce in this paper lies in the intersection of several well-known problems, which we would like to mention briefly.

\subparagraph{Optimization under Scenarios.} In recent years, scenario-based models have received significant attention. An instance of a scenario model over a discrete optimization problem is often given as an instance of the underlying problem; and in addition, subsets of the underlying instance (called \emph{scenarios}) are specified. The objective function involves the objective values that are attained restricted to each scenario; e.g., the maximum of these values over scenarios or their average.

Although a significant portion of the literature is rather recent, some well-known problems have been scenario-based problems in disguise. For instance, the famous exact matching problem (\cite{exactmatching}), which asks whether a given graph whose edges are colored blue or red has a perfect matching with exactly $k$ blue edges for a given $k$, is equivalent to the matching problem under two scenarios, whereas a stochastic scenario-based model for matchings has recently been studied in~\cite{stocscenmatching}. Other examples include bin packing (\cite{binpacking}\cite{binpacking2}), metric spanning tree (\cite{mst}) and the traveling salesperson problem (\cite{Ee2018priori}). 

Among the scenario-based problems with the widest array of studies is machine scheduling with its many variations, see \cite{shabtay2022} for an extensive overview. The closest to our problem is \cite{feuerstein2016minimizing}, in which the worst-case makespan is being minimized for parallel machine scheduling on two machines and a variable number of scenarios. Although an instance under a variable number of scenarios is not approximable with a ratio better than the trivial bound of $2$, a constant number of scenarios allows a PTAS.

In some variants of scenario scheduling, e.g., in \cite{bosman2023total}, a complexity gap is observed between two and three scenarios. Interestingly, we obtain a distinction between two and three scenarios as well, at least when we restrict to the special case of two machines.

\subparagraph{Online Scheduling.}
As it is impossible to capture the vast amount of literature on all online scheduling problems, let us restrict to online parallel machine scheduling with the objective of minimizing the makespan. This is precisely the special case $K=1$ of our problem. 

The problem was first addressed in Graham's seminal work \cite{graham} from which we also draw inspiration in multiple ways. The competitive ratio of $2-\nicefrac{1}{m}$ for $m$ machines was not beaten for three decades, until it was improved slightly in \cite{galamboswoeginger}. Graham's List Scheduling Algorithm is evidently best possible for $m=2$, and it was observed to be best possible for $m=3$ as well \cite{grahamlb}.
In the following years, a stream of research narrowed the gap between upper and lower bounds for competitive ratios depending on $m$ (\cite{sched1}\cite{sched2}\cite{sched3}), although for $m\geq 4$, still no tight bound is known. The state-of-the-art results are Fleischer and Wahl's $1.9201$-competitive algorithm  (\cite{fleischerwahl}) and an asymptotic lower bound of $1.88$ due to Rudin III (\cite{rudinphd}). For $m=4$, the best known lower bound is $\sqrt{3}\approx 1.732$, due to Rudin III and Chandrasekaran (\cite{m4sqrt3}).

\subparagraph{Discrepancy Minimization, Hypergraph Coloring.}\label{sec:disc}

A problem similar to \textsc{Online Makespan Hypergraph Coloring}, often called Online Hypergraph Coloring, has been studied extensively over the years. In this problem, the objective is to use the smallest number of colors while avoiding a \emph{monochromatic hyperedge}, i.e., a hyperedge whose nodes all attain the same color.  For hypertrees with $n$ nodes, an $O(\log(n))$-competitive algorithm is known to be best possible (\cite{hypertreecoloring}), while for general graphs with $n$ nodes, no $o(n)$-competitive algorithm can exist (\cite{NAGYGYORGY200823}). However, this standard model differs from ours, as a hyperedge is revealed only once all of its nodes have been revealed. This nuance deems it much easier to construct families of instances to achieve lower bounds. Moreover, due to differences in their respective objective functions, this problem does not seem to have a direct relation to \textsc{Online Makespan Hypergraph Coloring}. However, our approach conveniently enables implications for a variant of Online Hypergraph Coloring in which partial hyperedges are known (see Section \ref{contribution}).

The reader may also notice similarities to the well-known discrepancy minimization problem (see e.g.~\cite{sixstddeviations}\cite{tightdisc}): Indeed, in the usual setting of discrepancy minimization, we have $m=2$ and the objective is to minimize the difference $\max_{e\in E}||\{c^{-1}(1)\}\cap e|-|\{c^{-1}(2)\}\cap e||.$ If all hyperedges $e\in E$ have the same size, this objective function is even equivalent to ours. In this regard, the problem that we study generalizes the $\ell_\infty$ norm online discrepancy minimization \cite{onlinedisc} in uniform set systems as well.

\newpage
\section{Extending Graham's List Scheduling}\label{sec:pigeon}

Our first observation is a simple algorithm that generalizes Graham's List Scheduling Algorithm for \textsc{OMSS}$(m,K)$ and performs well when there are too many machines to force the schedule to create significant disbalance. Our algorithm beats the trivial bound of $m$ if $m>K+1$. 

We first observe that if the number $m$ of machines is large enough, we can apply the pigeonhole principle to find a machine that is not among the most loaded machines of any scenario.  For $s\in [m-1]$ and $k\in [K]$, we call a machine $i$ \emph{$s$-favorable with respect to the $k$-th scenario} if there are at least $s$ machines $i'\neq i$ that have at least the load of machine $i$ in the $k$-th scenario, i.e.,  $p(J_{i'}\cap S_k)\geq p(J_i\cap S_k)$.
 
\begin{obs}
  There exists a machine $i$ that is $\left(\left\lceil\frac{m}{K}\right\rceil-1\right)$-favorable with respect to every scenario $k\in [K]$.
\end{obs}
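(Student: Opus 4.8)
The plan is a short counting argument over the machines that are ``unfavorable'' in each scenario. Fix a scenario $k\in[K]$ and order the machines as $i_1^k,\dots,i_m^k$ so that $p(J_{i_1^k}\cap S_k)\le\cdots\le p(J_{i_m^k}\cap S_k)$ (breaking ties by any fixed total order). For the machine $i_j^k$ sitting in position $j$ from the bottom of this order, the $m-j$ machines above it all have load at least its own in scenario $k$, so $i_j^k$ is $(m-j)$-favorable with respect to $S_k$, and hence $s$-favorable for every $s\le m-j$. Consequently, setting $s\coloneqq\left\lceil\frac{m}{K}\right\rceil-1$, the set $B_k$ of machines that are \emph{not} $s$-favorable with respect to scenario $k$ is contained in $\{i_{m-s+1}^k,\dots,i_m^k\}$, so $|B_k|\le s$.

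Next I would run the pigeonhole step. Suppose, for contradiction, that no single machine is $s$-favorable with respect to all $K$ scenarios; then every machine lies in some $B_k$, i.e.\ $[m]=\bigcup_{k\in[K]}B_k$, which yields
\[
m\;\le\;\sum_{k=1}^{K}|B_k|\;\le\;Ks\;=\;K\!\left(\left\lceil\tfrac{m}{K}\right\rceil-1\right).
\]
Since $\left\lceil\frac{m}{K}\right\rceil<\frac{m}{K}+1$ and both sides of $K\left\lceil\frac{m}{K}\right\rceil\le m+K$ are integers, we get $K\left\lceil\frac{m}{K}\right\rceil\le m+K-1$, hence $Ks\le m-1<m$, a contradiction. (The case $m\le K$, where $s=0$, is already covered by this inequality, since then $\bigcup_k B_k=\emptyset\neq[m]$; one can also just note that every machine is vacuously $0$-favorable.) Therefore some machine $i$ is $s$-favorable with respect to every scenario.

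The only point I would be careful to state explicitly is the handling of ties in the per-scenario load order together with the fact that ``$s$-favorable'' is defined via the non-strict inequality $p(J_{i'}\cap S_k)\ge p(J_i\cap S_k)$: this is precisely what makes the machine in position $j$ from the bottom $(m-j)$-favorable no matter how ties are resolved, so fixing an arbitrary total order consistent with the loads is harmless. Beyond that, the argument is routine; the elementary estimate $K\!\left(\left\lceil\frac{m}{K}\right\rceil-1\right)\le m-1$ is exactly what forces the favorability parameter to be $\left\lceil\frac{m}{K}\right\rceil-1$ and not something weaker, so I would make sure that inequality is written out cleanly.
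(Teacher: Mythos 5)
Your proof is correct and follows the same pigeonhole argument as the paper, which simply notes that at most $s$ machines per scenario fail to be $s$-favorable and that $m > sK$. You spell out the load-ordering that justifies the bound $|B_k|\le s$ and the integer inequality $K\bigl(\lceil m/K\rceil - 1\bigr)\le m-1$, both of which the paper leaves implicit.
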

\begin{proof}
There are at most $s$ machines per scenario that are not $s$-favorable. Let $s=\left\lceil\frac{m}{K}\right\rceil-1$.
Since $m>s\cdot K$, there must be a machine $i$ which is not non-$s$-favorable with respect to any scenario.
\end{proof}

In light of this observation, we propose the following algorithm: Upon the assignment of each job $j$, we find a machine $i$ that is $\left(\left\lceil\nicefrac{m}{K}\right\rceil-1\right)$-favorable with respect to all scenarios $k$ with $j\in S_k$ and assign the job $j$ to the machine $i$.

\begin{theorem}\label{thm:grahamgeneral}
  The above algorithm is $\left(\frac{m-1}{\left\lceil\frac{m}{K}\right\rceil}+1\right)$-competitive for $m>K$. 
\end{theorem}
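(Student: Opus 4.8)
The plan is to bound the online makespan from above and the offline optimum from below, and then take the ratio. Fix an instance and let $\ONL$ denote the makespan produced by the algorithm and $\OFFL$ the optimal offline makespan. First I would observe two easy lower bounds on $\OFFL$: since every scenario must be scheduled on $m$ machines, for each $k\in[K]$ we have $\OFFL \geq p(S_k)/m$, and moreover $\OFFL \geq p_{\max} := \max_{j\in[n]} p_j$, because the job attaining $p_{\max}$ sits on some machine in some scenario containing it (we may assume every job lies in at least one scenario, else it is irrelevant). These are the same two bounds Graham uses for $K=1$; the point is that they survive the scenario extension because the objective is a \emph{maximum} over scenarios.

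Next I would analyze a job $j^\star$ that attains the makespan, say in scenario $S_{k^\star}$ on machine $i^\star$, so $\ONL = C^{k^\star}_{j^\star} = p(J_{i^\star}\cap S_{k^\star})$ — here I am using that the worst-case load is realized at the completion time of its last job, since restricted schedules have no idle time. When the algorithm assigned $j^\star$, it chose a machine that was $\left(\lceil m/K\rceil - 1\right)$-favorable with respect to every scenario containing $j^\star$, in particular with respect to $S_{k^\star}$. This means that at that moment there were at least $\lceil m/K\rceil - 1$ other machines whose load in scenario $S_{k^\star}$ was at least the load of $i^\star$ just before $j^\star$ was placed, i.e.\ at least $\ONL - p_{j^\star}$. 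Counting $i^\star$ itself, at least $\lceil m/K\rceil$ machines had load $\geq \ONL - p_{j^\star}$ in scenario $S_{k^\star}$ at that time, and since loads only increase, this persists at the end. Summing these loads gives $p(S_{k^\star}) \geq \lceil m/K\rceil \cdot (\ONL - p_{j^\star})$, hence
\[
  \ONL \;\leq\; \frac{p(S_{k^\star})}{\lceil m/K\rceil} + p_{j^\star} \;\leq\; \frac{p(S_{k^\star})}{\lceil m/K\rceil} + p_{\max}.
\]

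Finally I would combine this with the lower bounds. Using $p(S_{k^\star}) \leq m\cdot\OFFL$ and $p_{\max}\leq\OFFL$ we get $\ONL \leq \left(\frac{m}{\lceil m/K\rceil} + 1\right)\OFFL$; but a slightly sharper accounting improves the leading constant to $\frac{m-1}{\lceil m/K\rceil}+1$, matching the claimed bound. The improvement comes from the standard trick of not double-counting $p_{j^\star}$: among the $\lceil m/K\rceil$ machines with load $\geq \ONL - p_{j^\star}$ in $S_{k^\star}$, we can write $p(S_{k^\star}) \geq (\lceil m/K\rceil - 1)(\ONL - p_{j^\star}) + \ONL = \lceil m/K\rceil(\ONL - p_{j^\star}) + p_{j^\star}$, so $\ONL \leq \frac{p(S_{k^\star}) - p_{j^\star}}{\lceil m/K\rceil} + p_{j^\star} \leq \frac{m\cdot\OFFL - p_{j^\star}}{\lceil m/K\rceil} + p_{j^\star}$, and since the coefficient $1 - 1/\lceil m/K\rceil \geq 0$ of $p_{j^\star}$ is maximized at $p_{j^\star} = \OFFL$, this is at most $\left(\frac{m-1}{\lceil m/K\rceil} + 1\right)\OFFL$. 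The hypothesis $m>K$ guarantees $\lceil m/K\rceil \geq 2$, so the $\left(\lceil m/K\rceil - 1\right)$-favorable machine from the Observation genuinely exists and the bound is meaningful. The main thing to get right — and the only real subtlety — is the bookkeeping in this last step: that the favorability guarantee refers to loads \emph{at the time of assignment}, that these only grow, and that one must avoid charging $p_{j^\star}$ twice to land on $\frac{m-1}{\lceil m/K\rceil}+1$ rather than the weaker $\frac{m}{\lceil m/K\rceil}+1$.
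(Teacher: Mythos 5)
Your proof is correct and is exactly the natural generalization of Graham's analysis that the paper intends: you pick the last job $j^\star$ that realizes the makespan on machine $i^\star$ in scenario $S_{k^\star}$, use the $(\lceil m/K\rceil-1)$-favorability guarantee at the moment of assignment to lower bound $p(S_{k^\star})$, and then combine with the standard lower bounds $\OFFL\geq p(S_{k^\star})/m$ and $\OFFL\geq p_{j^\star}$, being careful not to double-count $p_{j^\star}$ to land on $\frac{m-1}{\lceil m/K\rceil}+1$ rather than $\frac{m}{\lceil m/K\rceil}+1$. The bookkeeping you flag as the ``only real subtlety'' is indeed the crux, and you handle it correctly; plugging in $K=1$ recovers Graham's $2-\nicefrac{1}{m}$, which confirms the arithmetic.
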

\begin{proof}
Let $i\in [m]$ be a machine and $k\in [K]$ a scenario that attains the makespan of the schedule, and let $j\in [n]$ be the smallest index such that $p(J_i\cap S_k\cap [j])=\mathrm{makespan}(n)$. We define $L\coloneqq p(J_i\cap S_k\cap [j-1])=\mathrm{makespan}(n)-p_j$. By the choice of $i$ upon the insertion of $j$, there are at least $\left\lceil\frac{m}{K}\right\rceil$ machines $i'$ with $p(J_{i'}\cap S_k\cap [j-1])\geq L$, including $i$, so that the offline optimum is bounded from below by
\begin{align*}
    \offl&\geq \max\left\{p_j, \frac{p(S_k)}{m}\right\}\geq \max\left\{p_j, \frac{p(S_k\cap [j-1])+p_j}{m}\right\}&\geq\max\left\{p_j, \frac{\left\lceil\frac{m}{K}\right\rceil\cdot L+p_j}{m}\right\}.
\end{align*}
Hence, for $p_j\geq \frac{\left\lceil\frac{m}{K}\right\rceil}{m-1}L$, the competitive ratio $\rho$ is bounded by
\[\rho\leq \frac{L+p_j}{p_j}\leq 1+\frac{m-1}{\left\lceil\frac{m}{K}\right\rceil},\]
and otherwise it is also bounded by
\[\rho\leq\frac{L+p_j}{\frac{\left\lceil\frac{m}{K}\right\rceil\cdot L+p_j}{m}}=m\cdot\frac{L+p_j}{\left\lceil\frac{m}{K}\right\rceil\cdot L+p_j}\leq m\cdot\frac{1+\frac{\left\lceil\frac{m}{K}\right\rceil}{m-1}}{\left\lceil\frac{m}{K}\right\rceil+ \frac{\left\lceil\frac{m}{K}\right\rceil}{m-1}}= 1+\frac{m-1}{\left\lceil\frac{m}{K}\right\rceil}.\]
Here, the second inequality holds since $\left\lceil\frac m K\right\rceil> 1$.
\end{proof}
\begin{corollary}\label{cor:k1competitive}
    The above algorithm is $(K+1)$-competitive. 
\end{corollary}

This algorithm is also applicable in the special case where every job appears in at most $k$ scenarios for a fixed $k\in \mathbb N$, yielding a ratio of $\frac{m-1}{\left\lceil\frac{m}{k}\right\rceil}+1$. Furthermore, plugging in $K=1$ yields the well-known competitive ratio of $2-\frac{1}{m}$ of Graham's List Scheduling Algorithm.

\section{Near-Tight Bounds for $m=2$ Machines}
When the partitioning is among two machines, a clear line can be drawn: Any algorithm is trivially $2$-competitive, and if we have at least $3$ scenarios, this is best possible.

\begin{obs}\label{thm:lb2K}
    For $m=2$ and $K\geq3$, there exists no $r$-competitive algorithm for any $r<2$, even for the special case of unit processing times.
\end{obs}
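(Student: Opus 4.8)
The plan is to establish this lower bound via an adaptive adversary. Fix $m=2$, $K\geq 3$, and unit processing times; the adversary will use only three nonempty scenarios $S_1,S_2,S_3$ (the remaining $K-3$ scenarios stay empty and are irrelevant, since empty and singleton scenarios never contribute to the makespan), and it will force every online algorithm to output a schedule of makespan $2$ on an instance whose offline optimum equals $1$. Because all processing times are $1$, the load of a machine inside a scenario is simply the number of jobs of that scenario placed on it, and the offline optimum is $1$ exactly when the jobs admit a $2$-coloring that is proper on the ``scenario graph'' whose edges are the size-$2$ scenarios. So the adversary must keep that graph bipartite while still driving the algorithm into a scenario that has two of its jobs on the same machine.

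Concretely, I would reveal jobs as follows. Reveal $v_1$ belonging to $S_1$ and to no other scenario; by renaming machines, assume the algorithm places $v_1$ on machine $1$. Reveal $v_2$ belonging to $S_3$ and to no other scenario, then branch on its placement. If $v_2$ is placed on machine $2$: reveal $v_3$ belonging to $S_1$ and $S_3$ (so $S_1=\{v_1,v_3\}$ and $S_3=\{v_2,v_3\}$); whichever machine is chosen for $v_3$ agrees with the machine of $v_1$ or of $v_2$, yielding makespan $2$, and we stop. The scenario graph is the path $v_1 v_3 v_2$, and offline puts $v_1,v_2$ on one machine and $v_3$ on the other. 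If $v_2$ is placed on machine $1$: reveal $v_3$ belonging to $S_1$ and $S_2$ (so $S_1=\{v_1,v_3\}$ while $S_2$ holds only $v_3$ so far). If $v_3$ goes on machine $1$, then $S_1$ already has makespan $2$ and we stop. Otherwise $v_3$ is on machine $2$; reveal $v_4$ belonging to $S_2$ and $S_3$ (so $S_2=\{v_3,v_4\}$ and $S_3=\{v_2,v_4\}$); since $v_2$ and $v_3$ lie on different machines, every placement of $v_4$ makes one of $S_2,S_3$ have makespan $2$, and we stop. The scenario graph is then the path $v_1 v_3 v_4 v_2$, and offline alternates machines along it.

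In every terminal case the algorithm achieves makespan $2$ and the offline optimum is $1$, so no online algorithm can be $r$-competitive for any $r<2$, for all $K\geq 3$. One routine point to verify is consistency of the adversary's disclosures: the memberships of $v_1$ and $v_2$ are fixed before the algorithm reacts, and in all branches $v_1$ lies only in $S_1$ and $v_2$ lies only in $S_3$; the jobs $v_3,v_4$ enter only after the relevant branch has been determined, and at each step the disclosed partial scenarios $S_k\cap[j]$ are prefixes of the final scenarios, as the model requires.

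The main obstacle --- and the reason the construction is a little more delicate than one might first guess --- is that the obvious strategy of keeping three pairwise-conflicting size-$2$ scenarios alive would force the algorithm to fail but would make the scenario graph a triangle, whose offline optimum is also $2$, so the ratio would degrade to $1$ rather than $2$. The construction sidesteps this by keeping the scenario graph a path on at most four vertices and exploiting two features of the model: the algorithm must place $v_1$ and $v_2$ with no information linking them, and a size-$2$ scenario only reveals itself as an ``edge'' once its second endpoint appears, which lets the adversary decide adaptively which job becomes an interior vertex of the path. Checking that each resulting path is indeed bipartite with a makespan-$1$ offline assignment, and that no earlier branch inadvertently closes an odd cycle, is the part to carry out with care; the rest is a brief case check.
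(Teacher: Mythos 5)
Your adversary construction is correct and is essentially the same as the paper's: the paper presents the two branches as a pair of fixed instances $\mathcal I_1$ (a path $1\text{--}3\text{--}2$, used when jobs $1,2$ are split across machines) and $\mathcal I_2$ (a path $1\text{--}3\text{--}4\text{--}2$, used when they are placed together), which agree on the common prefix of jobs $1,2$. Up to relabeling scenarios and truncating early in one subcase, your decision tree produces exactly the same instances and the same offline bipartition argument.
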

\begin{proof}
For all jobs $j$ below, we assume $p_j=1$.    
    The first instance $\mathcal{I}_1$ is given as follows: $n=m+1$, $S_1=\{1,3,\ldots,m+1\}$, $S_2=\{2,3,\ldots, m+1\}$. 
    
    We consider an arbitrary algorithm that places the first jobs $1$ and $2$ to two distinct machines, without loss of generality, to the first and second, respectively. In this case, the makespan at the end must be $2$, as two jobs $(j,j')\neq (1,2)$ must be assigned to the same machine. However, the following assignment, which places the first two jobs together would have yielded a completion time of $1$:    
    $J_1=\{1,2\}, J_i=\{i+1\} \text{ for } 2\leq i\leq m.$
Hence, no algorithm that places two unit weight jobs in $S_1\setminus S_2$ resp.~$S_2\setminus S_1$ to two separate machines can attain a competitive ratio better than $2$.
    
    In light of this, we consider another instance $\mathcal I_2$, on which assigning the first two jobs to the same machine leads to a failure. It is given by $n=m+2$, $S_1=\{1,3\}$, $S_2=\{2,4,5,6,\ldots, m+2\}$, $S_3=\{3,4,5,\ldots, m+2\}$. The first two jobs are indeed identical to those of $\mathcal I_1$.
    
By our considerations about the instance $\mathcal I_1$, we may assume that the first two jobs are assigned to the same machine, the first one without loss of generality.    
However, any assignment that places the first two jobs to the same machine admits makespan $2$: Otherwise, due to the third scenario, the jobs $\{3,\ldots, m+2\}$ must occupy one machine each. The job $j$ that is placed to the first machine has a common scenario with either job $1$ (if $j=3$), or with job $2$ (otherwise).
\end{proof}

On the other hand, this statement cannot be extended to $K=2$. To show this, we present a minimalistic yet insightful algorithm (Algorithm \ref{alg:53}) for makespan scheduling under $K=2$ scenarios and $m=2$ machines. To this end, we combine ideas about what the algorithm can feature with observations about how a worst-case instance is bound to look under the ideas we have fixed. To this end, we first need to introduce a few notions and machinery.

\subsection{Preparation for the Algorithm}
\subparagraph{Fixing a rule.} Our general idea is to fix a simple rule for assigning \emph{double-scenario jobs}, i.e., jobs in $S_1\cap S_2$, and partition the \emph{single-scenario jobs} $j\in S_1\triangle S_2$ so that the machines are reasonably balanced even after upcoming hypothetical double-scenario jobs.

\begin{rul}\label{asp:rulefix}
If $j\in S_1\cap S_2$, then assign the job $j$ to a machine $i$ such that the makespan of the schedule was previously attained by the other machine $3-i$. In case of a tie, select the machine that received the job $j-1$ ($j=1$ is assigned to the first machine if applicable).    
\end{rul}

\subparagraph{A more approachable notion of competitiveness.} Throughout our analysis, we only use two types of lower bounds for the offline optimum: Processing times $p_j$ of selected jobs $j$ whose completion times equal the makespan, and the average load of a selected scenario $S_k$. Hence, while searching for the worst-case sequence of the upcoming double-scenario jobs, we may replace the actual definition of competitiveness with that of a proxy one:

\begin{defn}
    We define the \emph{proxy competitive ratio} of a schedule as
    \[\rho=\frac{\max_{i,k\in\{1,2\}}p(S_k\cap J_i)}{\max\left\{\max_{k\in \{1,2\}}\left\{\frac{p(S_k)}{2}\right\}, \max_{j\in [n]\cap \mathrm{argmax\{C_j\}}}\{p_j\}\right\}},\] 
    where $C_j$ denotes the completion time of the job $j$.
\end{defn}

The proxy competitive ratio is an upper bound on the competitive ratio. The lower bounds that we use are already well-known; for instance, Graham's List Scheduling Algorithm is in fact analyzed by bounding the proxy competitive ratio from above. We take this approach, however, one step further and restrict ourselves to analyzing only the instances that attain the largest possible proxy competitive ratio. This would have been significantly more difficult for the competitive ratio in the usual sense, as a witness to having good or bad offline optima highly depends on how the processing times of jobs fit together numerically. 

\subparagraph{Transforming to a worst-case instance.} In search of instances with the worst possible proxy competitive ratio, we transform arbitrary instances to those that admit a certain structure, and are no better in terms of competitiveness, through a few simple operations.

\begin{defn}
    Let $\mathcal I=((n, (p_j)_{j\in [n]}, \{S_k\}_{k\in [K]}))$ be an instance of \textsc{OMSS$(2,2)$}.  The \emph{deletion} of a job $j$ is the sole modification that $p_j$ is set to $0$, i.e., it gives rise to an instance $\mathcal I'=(n, (p'_j)_{j\in [n]}, \{S_k\}_{k\in [K]})$ with 
        \[p'_{j'}=
        \begin{cases}
        0, & j=j'\\
        p_j, &j\neq j'.
        \end{cases}\]
\end{defn}

\begin{lemma}[Deletion lemma, cf.~Figure \ref{fig:deletionlemma}]\label{lemma:caseadeletion}
Let $\mathcal I$ be an instance on which an algorithm $\mathcal A$ admits a competitive ratio $\rho$.
Let $j\in [n]$ be the largest index such that $ j\in S_1\triangle S_2$ holds. Assume furthermore that the assignment of the $j$-th job to the $i=\tau(j)$-th machine was executed so that the scenario more represented in the machine did not change after the $j$-th job, i.e.,
\[\mathrm{argmax}_{k\in \{1,2\}}p(J_i\cap S_k\cap [j-1])=\mathrm{argmax}_{k\in \{1,2\}}p(J_i\cap S_k\cap [j]).\]
Then the instance $\mathcal I'$ obtained by the deletion of $j$ admits a (proxy) competitive ratio $\rho'\geq \rho.$
\end{lemma}
\begin{proof}
    Let $\tau, \tau'\colon [n]\to\{1,2\}$ denote the assignment of the original instance and and the modified instance, respectively. Both instances admit the identical assignment up to (and including) job $j-1$. We observe that
    \[\mathrm{argmax}_{k\in \{1,2\}}p(J_i\cap S_k\cap [j-1])=\mathrm{argmax}_{k\in \{1,2\}}p(J_i\cap S_k)\]
    for $i\in \{1,2\}$ for $\mathcal I$ and analogously for $\mathcal I'$, since $\{j+1,\ldots, n\}\subseteq S_1\cap S_2$ by the choice of $j$ and therefore the difference $p(J_i\cap S_1)-p(J_1\cap S_2)$ stays constant after the assignment of the $j$-th job. For both instances, the jobs $j+1,\ldots, n$ are distributed to a machine $i$ such that the other machine $3-i$ admits makespan by the same reasoning. Since the assignment of the job $j$ does not affect $\max_{k\in \{1,2\}}p(J_i\cap S_k\cap [j'])$ for any $j'\in [n]$, we have $\tau(j')=\tau'(j')$ for $j'>j$. 
    
    The $j$-th job in instance $\mathcal{I}$ cannot be in a combination of machine and scenario that admits the makespan of the schedule. Thus the makespan $\onl_{\mathcal{I}}$ and $\onl_{\mathcal I'}$ of both instances coincide. 
        On the other hand, the offline optimal value or the denominator of the proxy competitive ratio do not increase by the modification, so that $\rho'\geq\rho$ holds.
\end{proof}
\begin{figure}
    \centering
    \begin{tikzpicture}
             \draw [fill=blue] (0,0) rectangle (2,0.5) node[pos=.5] {};
              \draw [fill=blue] (0,0.5) rectangle (1,1) node[pos=.5] {};
            \draw [fill=blue] (0,1.2) rectangle (1.3,1.7) node[pos=.5] {};
             \draw [myred] (1.3,1.2) rectangle (2.5,1.7) node[pos=.5] {{$\mathbf{j}$}};
             \draw [fill=blue] (0,1.7) rectangle (3.5,2.2) node[pos=.5] {};
                \draw [fill=blue] (1.5,1.7) rectangle (3.5,2.2) node[pos=.5] {};
                \draw [fill=blue!30] (2,0) rectangle (4,0.5) node[pos=.5] {$n-1$};
                 \draw [fill=blue!30] (1,0.5) rectangle (3,1) node[pos=.5] {$n-1$};
                    \draw [fill=blue!30] (4,0) rectangle (5.5,0.5) node[pos=.5] {$n$};
                         \draw [fill=blue!30] (3,1) rectangle (4.5,0.5) node[pos=.5] {$n$};
           
              \node at(-1, 0.2){$J_2\cap S_2$};
              \node at(-1, 0.7){$J_2\cap S_1$};
              \node at(-1, 1.4){$J_1\cap S_2$};
              \node at(-1, 1.9){$J_1\cap S_1$};
     
\end{tikzpicture}

    \caption{Deletion of the largest $j\in S_1\triangle S_2$ (checkered) does not increase the approximation ratio, as it does not affect the execution of Rule \ref{asp:rulefix}.}
    \label{fig:deletionlemma}
\end{figure}
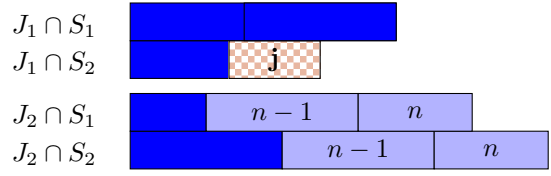
We may assume for the completion time of the largest job index $j\in S_1\triangle S_2$ that $C_j>C_{j-1}$, for otherwise, we may apply Lemma \ref{lemma:caseadeletion} successively to eliminate jobs $j\in S_1\triangle S_2$ that cause a violation of this inequality and analyze the resulting instances instead. 

Next, we observe that under certain circumstances, we may ``cut'' the double-jobs $j$ in that we replace them by other jobs $j',j''\in S_1\cap S_2$ such that $p_{j'}+p_{j''}=p_j$, while ensuring that the behavior of our algorithm remains the same way:

\begin{lemma}[Cutting lemma, cf.~Figure \ref{fig:cuttinglemma}] \label{lemma:cutting}
Let $t\in  S_1\cap S_2$ be a job index for an instance $\mathcal I=(n, (p_j)_{j\in [n]}, \{S'_k\}_{k\in \{1,2\}})$ of \textsc{OMSS(2,2)} such that $\mathrm{makespan}(t)>\mathrm{makespan}(t-1)$. Consider a schedule $\tau\colon [n]\to \{1,2\}$ of $n$ jobs, on which the algorithm admits a (proxy) competitive ratio of $\rho$. Then there is another instance $\mathcal{I}'=(n+1, (p'_j)_{j\in [n]}, \{S'_k\}_{k\in \{1,2\}})$ with $n+1$ jobs with processing times $p'_j, j\in [n+1]$ under scenarios $S_1', S_2'\subseteq [n+1]$ with the following properties:
\begin{enumerate}[(i)]
    \item $p'(J'_i\cap S'_k\cap [j])=p(J_i\cap S_k\cap [j])$ for $j\in [t-1]$, $i\in \{1,2\}$ and $k\in \{1,2\}$.
    \item The algorithm assigns the first $t$ jobs of $\mathcal I'$ so that \[\max_{k\in \{1,2\}}p(J'_1\cap S'_k\cap [t])=\max_{k\in \{1,2\}}p(J'_2\cap S'_k\cap [t]).\] 
    \item The algorithm admits a (proxy) competitive ratio $\rho'\geq \rho$ on the instance $\mathcal I'$.
\end{enumerate}
\end{lemma}
\begin{proof}
    We define the new instance $\mathcal I'$ as follows: we set   
    \[S'_k=(S_k\cap [t])\;\dot \cup \;[t+1]\;\dot \cup\; \{j\in \{t+2,\ldots,n+1\}\colon j-1\in S_k\}\] for $t\in [n]$ and $k\in \{1,2\}$. Further, we define $p'_j=p_j$ for $1\leq j<t$ and $p'_j=p_{j-1}$ for $t+2\leq j\leq n+1$. We determine $p'_t$ and $p'_{t+1}$ later. We will aim to satisfy $p'_{t}+p'_{t+1}=p_{t}$.
    
    We start by noticing that the instances $\mathcal I$ and $\mathcal I'$ coincide when restricted to the first $t-1$ jobs. Therefore, the algorithm will yield identical assignments until the assignment of the $t$-th job and the first property is satisfied.

    Now we consider the job indexed $t$. In both instances, we have $t\in S_1\cap S_2$ resp.~$t\in S_1'\cap S_2'$. For such a double-scenario job, the assignment only depends on the prior subschedule  $\tau|_{[t-1]}=\tau'|_{[t-1]}$ and not on the processing time $p_t$ resp.~$p_t'$. Therefore, we also have $\tau(t)=\tau(t')$ for any $p'_t\in (0, p_t)$. To obtain the second property, we pick 
    \begin{equation}\label{eq:minhere}
           p'_t=\mathrm{makespan}(t-1)
-\min_{i'\in \{1,2\}}\max_{k'\in \{1,2\}}p(J'_{i'}\cap S'_k\cap [t-1]),  
    \end{equation}

    and $p'_{t+1}=p_t-p'_t$. We guarantee that $p'_{t+1}>0$ because of $\mathrm{makespan}(t)>\mathrm{makespan}(t-1)$. Since the $t$-th job is assigned to the $i'$-th machine attaining the minimum in \eqref{eq:minhere}, we then have
    \begin{equation}\label{eq:formermakespan}
         \max_{k\in \{1,2\}}p'(J'_{i'}\cap S'_k\cap [t])=\max_{k\in \{1,2\}}p'(J'_{i'}\cap S'_k\cap [t-1])+p'_t=\mathrm{makespan}(t-1).
    \end{equation}
   
   Since the $t$-th job was not assigned to the $(3-i')$-th machine, we also have
    \begin{equation}\label{eq:newmakespan}
        \max_{k\in \{1,2\}}\{p'(J'_{3-i'}\cap S'_k\cap [t])\}= \max_{k\in \{1,2\}}\{p'(J'_{3-i'}\cap S'_k\cap [t-1])\}=\mathrm{makespan}(t-1).
    \end{equation}
    Comparing lines \eqref{eq:formermakespan} and \eqref{eq:newmakespan} reveals that the second property is satisfied.

    For the third property, we observe that both machines admit makespan prior to the insertion of the $(t+1)$-st job, so the algorithm will assign this job to the machine where it has assigned the previous job $t$ to. This yields
    \begin{align*}
        p'(J'_{i'}\cap S'_k\cap [t+1])&=p'(J'_{i'}\cap S'_k\cap [t-1])+p'_t+p'_{t+1}\\ 
        &=p(J_{i'}\cap S_k\cap [t-1])+p_t=p(J_{i'}\cap S_k\cap [t])
    \end{align*}
    for $k\in \{1,2\}$. 
    Since the $j$-th job in $\mathcal{I}$ is identical to the $(j+1)$-st job in $\mathcal I'$, the algorithm will output the same value in both cases. It is also straightforward to observe that the denominator of the proxy competitive ratio cannot have increased after the modification.
    
   In total, this implies
    $\rho'\geq \rho,$
    proving that the third property is satisfied.
 
\end{proof}

\begin{figure}
    \centering
    \begin{tikzpicture}
             \draw [fill=blue] (0,0) rectangle (2,0.5) node[pos=.5] {};
              \draw [fill=blue] (0,0.5) rectangle (1,1) node[pos=.5] {};
            \draw [fill=blue] (0,1.2) rectangle (1.3,1.7) node[pos=.5] {};
             \draw [fill=blue] (0,1.7) rectangle (4,2.2) node[pos=.5] {};
                \draw [fill=blue!30] (2,0) rectangle (3.5,0.5) node[pos=.5] {$n-1$};
                 \draw [fill=blue!30] (1,0.5) rectangle (2.5,1) node[pos=.5] {$n-1$};
                    \draw [fill=blue!30] (3.5,0) rectangle (5.5,0.5) node[pos=.5] {$n$};
                         \draw [fill=blue!30] (2.5,1) rectangle (4.5,0.5) node[pos=.5] {$n$};
           
              \node at(-1, 0.2){$J_2\cap S_2$};
              \node at(-1, 0.7){$J_2\cap S_1$};
              \node at(-1, 1.4){$J_1\cap S_2$};
              \node at(-1, 1.9){$J_1\cap S_1$};

                      \draw [fill=blue] (6.5,0) rectangle (8.5,0.5) node[pos=.5] {};
              \draw [fill=blue] (6.5,0.5) rectangle (7.5,1) node[pos=.5] {};
            \draw [fill=blue] (6.5,1.2) rectangle (7.8,1.7) node[pos=.5] {};
             \draw [fill=blue] (6.5,1.7) rectangle (10.5,2.2) node[pos=.5] {};
                \draw [fill=blue!30] (8.5,0) rectangle (10,0.5) node[pos=.5] {$n-1$};
                 \draw [fill=blue!30] (7.5,0.5) rectangle (9,1) node[pos=.5] {$n-1$};
                    \draw [fill=blue!30] (10,0) rectangle (10.5,0.5) node[pos=.5] {$n$};
                         \draw [fill=blue!30] (9,1) rectangle (9.5,0.5) node[pos=.5] {$n$};  
                               \draw [fill=blue!30] (10.5,0) rectangle (12,0.5) node[pos=.5] {$n+1$};
                         \draw [fill=blue!30] (9.5,1) rectangle (11,0.5) node[pos=.5] {$n+1$};  
                         
              \draw[dashed] (10.5,-0.5)--(10.5,3);
               \node at(11.5, 2.7){$C_j=C_{n}$};
\end{tikzpicture}

    \caption{Execution of the cutting lemma on job $n$, before (left) and after (right).}
    \label{fig:cuttinglemma}
\end{figure}
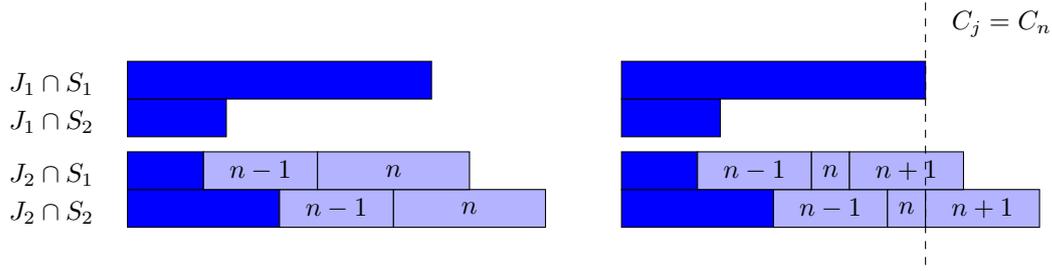

We observe that in order to analyze the proxy competitive ratio of instances whose last job $n$ satisfies $n\in S_1\cap S_2$, it suffices to consider those instances,
in which additionally the first $n-1$ jobs satisfy 
\begin{equation}\label{eq:cuttinglemmaeq}
 p(S_k\cap J_k\cap[n-1])=p(S_{3-k}\cap J_{3-k}\cap [n-1]).    
\end{equation}
Indeed, if we have an arbitrary instance with last job $n\in S_1\cap S_2$, we can apply Lemma \ref{lemma:cutting} with $t=n$. 

On the other hand, if \eqref{eq:cuttinglemmaeq} is satisfied, the adversarial last job $n$ is double-scenario and has processing time \[p_n=p(S_k\cap J_k\cap[n-1])+\max_{k\in \{1,2\}}p(S_k\cap J_{3-k}\cap [n-1])=\max_{k\in \{1,2\}}p(S_k\cap[n-1])\], so as to maximize the proxy competitive ratio
\begin{align}
\rho&=\frac{p(S_k\cap J_k\cap[n-1])+p_n}{\max\left\{\max_{k\in \{1,2\}}\left\{\frac{p(S_k\cap[n-1])+p_n}{2}\right\}, p_n\right\}}\\&=\frac{2p(S_k\cap J_k\cap [n-1])+\max_{k'\in \{1,2\}}p(S_{k'}\cap J_{3-k'}\cap [n-1])}{p(S_k\cap J_k\cap [n-1])+\max_{k'\in \{1,2\}}p(S_{k'}\cap J_{3-k'}\cap [n-1])}\\&\label{eq:adversarialp}=1+\frac{p(S_k\cap J_k\cap [n-1])}{\max_{k'\in \{1,2\}}p(S_{k'}\cap J_{3-k'}\cap [n-1])}.
\end{align}
Now let $j\in [n]$ be the highest-indexed job that is in $S_1\triangle S_2$. If $j=n$, we assume to be done, so let $j<n$. By Lemma \ref{lemma:caseadeletion}, there is a job $j'>j$ whose completion time is larger than that of $j$ in some scenario. We can pick the smallest-indexed $j'$ whose completion time in some scenario $S_k$ surpasses the completion time of $j$. For this index $j'$, we apply Lemma \ref{lemma:cutting} so that after the assignment of the earlier of the two jobs that replace $j'$, the completion time of the job $j$ still equals the overall makespan of the schedule. This implies:

\begin{obs}\label{obs:twocuts}
    Among the instances $\mathcal I=(n, (p_j)_{j\in [n]}, \{S_1,S_2\})$ with $n\in S_1\cap S_2$, there is a worst-case instance with the following properties: 
    \begin{enumerate}[(i)]
        \item     If $j\in [n]$ is the largest index with $j\in S_1\triangle S_2$, there is another job indexed $j'>j$ such that $C_j=C_{j'}$.
        \item Equation \eqref{eq:cuttinglemmaeq} holds, i.e.,  $p(S_k\cap J_k\cap[n-1])=p(S_{3-k}\cap J_{3-k}\cap [n-1])$.   
    \end{enumerate}
\end{obs}

We transform this instance further by deleting the jobs $j'+1,\ldots, n-1$ and changing $p_n$ to $p(S_k\cap J_k\cap [j'])+\max_{k'\in \{1,2\}}p(S_{k'}\cap J_{3-k'}\cap [j'])$. Since $j'$ is the penultimate job after the modification, the equation \eqref{eq:cuttinglemmaeq} holds for the modified instance as well.

If $\rho'$ denotes the new proxy competitive ratio after the modifications, then we have 

\begin{align}
\rho&= 1+\frac{p(S_k\cap J_k\cap [n-1])}{\max_{k'\in \{1,2\}}p(S_{k'}\cap J_{3-k'}\cap [n-1])}\\&= 1+\frac{p(S_k\cap J_k\cap [j'])+p(J_k\cap S_k\cap \{j'+1,\ldots, n-1\})}{\max_{k'\in \{1,2\}}(p(S_{k'}\cap J_{3-k'}\cap [j'])+p(S_{k'}\cap J_{3-k'}\cap \{j'+1\ldots, n-1\}))}
\\&=1+\frac{p(S_k\cap J_k\cap [j'])+p(J_k\cap \{j'+1,\ldots, n-1\})}{\max_{k'\in \{1,2\}}(p(S_{k'}\cap J_{3-k'}\cap [j'])+p(J_{3-k'}\cap \{j'+1\ldots, n-1\}))}
\\&\stackrel{(*)}\leq 1+\frac{p(S_k\cap J_k\cap [j']))}{\max_{k'\in \{1,2\}}p(S_{k'}\cap J_{3-k'}\cap [j'])}=\rho'.
\end{align}
Here, the inequality $(*)$ follows by Observation \ref{obs:twocuts}. Since $\rho$ is the proxy competitive ratio of a worst-case instance among those with $n\in S_1\cap S_2$, we conclude that the modified instance is a worst-case instance as well. We can therefore further strengthen the properties sketched in Observation \ref{obs:twocuts} by assuming that the jobs before the penultimate job $j'=n-1$ and after the last job $j\in S_1\triangle S_2$ had smaller completion time than the job $j$.
In summary, we can observe the following statement.

\begin{obs}[cf.~Figure \ref{fig:schedule}]\label{obs:bottleneck}
        Let $\mathcal A$ be an online algorithm for \textsc{OMSS$(2,2)$} obeying Rule \ref{asp:rulefix}. Assume that $\mathcal{A}$ has a proxy competitive ratio of at most $\frac{5}{3}$ on instances where
        \begin{enumerate}[(i)]
            \item $n\in S_1\cap S_2$, and
            \item $\mathcal A$ outputs a schedule with the following property:   If $j\in [n]$ is the largest index with $j\in S_1\triangle S_2$, then $C_j=C_{n-1}$.
        \end{enumerate}
        Then, $\mathcal A$ is $\nicefrac{5}{3}$-competitive.
\end{obs}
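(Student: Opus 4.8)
The idea is to show that any instance witnessing a proxy competitive ratio above $\nicefrac{5}{3}$ can be transformed into one that additionally satisfies properties (i) and (ii), without decreasing the proxy ratio; contrapositively, if $\mathcal A$ achieves proxy ratio $\leq \nicefrac{5}{3}$ on all instances of the restricted form, it achieves it on all instances, and then the fact that the proxy ratio upper-bounds the true competitive ratio (noted right after the definition) finishes the proof. So the real content is a pair of reductions on a putative worst-case instance $\mathcal I$ for which $\mathcal A$ has proxy ratio $\rho > \nicefrac53$.

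\textbf{Step 1: forcing $n \in S_1 \cap S_2$.} Take a worst-case instance and look at the job $j^\ast$ whose completion time $C_{j^\ast}$ realizes the makespan of $\mathcal A$'s output (break ties by taking the largest such index). Everything revealed after $j^\ast$ is irrelevant: truncating the instance to the jobs $[j^\ast]$ leaves $\mathcal A$'s decisions on $[j^\ast]$ unchanged (the algorithm is online, so it cannot have used future information), leaves the online makespan equal to $C_{j^\ast}$, and can only decrease $p(S_k)$ and hence can only decrease the denominator; the $\max_j p_j$ term in the denominator is also unaffected because $j^\ast$ survives. Hence the proxy ratio does not drop, and we may assume $n = j^\ast$, i.e.\ the makespan is attained by the last job. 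It remains to argue $n$ can be taken to be a double-scenario job. If $n \in S_1 \triangle S_2$, say $n \in S_1 \setminus S_2$, then append one dummy job $n+1$ with $p_{n+1} = 0$ and $n+1 \in S_1 \cap S_2$. This changes nothing numerically — neither the loads, nor the makespan, nor $\max_j p_j$ (a zero job never realizes $\max p_j$ once any positive job exists; the degenerate all-zero instance has ratio $0$ and is irrelevant), nor the completion-time structure — and Rule~\ref{asp:rulefix} places this zero job deterministically, so $\mathcal A$'s behavior is well-defined. Now (i) holds.

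\textbf{Step 2: forcing property (ii).} Let $j$ be the largest index in $S_1 \triangle S_2$ (if there is none, (ii) is vacuous and we are done). We want an instance where additionally $C_j = C_{n-1}$. Here I would exploit that Rule~\ref{asp:rulefix} pins down exactly where each double-scenario job goes: once the single-scenario jobs $S_1 \triangle S_2$ have been processed, the pair of machine loads in each scenario evolves completely deterministically as the double-scenario tail $j+1, \dots, n$ arrives (each such job goes onto the scenario-makespan-minimizing machine, with the stated tie-break). The plan is: among all instances of the form given by Step~1 with the same multiset of single-scenario jobs and the same prefix behavior, replace the double-scenario tail by a \emph{canonical} worst tail and then show that in the canonical configuration $C_{n-1}$ can be taken to equal $C_j$ — intuitively, the next double-scenario job after $j$ is forced (by Rule~\ref{asp:rulefix}) to land on whichever machine did \emph{not} carry the scenario makespan, and by choosing the processing time of job $n-1$ appropriately (or inserting a zero job $n-1 \in S_1 \cap S_2$ placed by the tie-break onto $j$'s machine) we arrange $C_{n-1} = C_j$. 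The cleanest incarnation is probably again an insertion argument: if $C_j < C_{n-1}$ in the current worst instance, slot in a fresh double-scenario job of processing time $C_{n-1} - C_j$ immediately — wait, that changes the makespan — so instead I would argue by picking, among worst-case instances, one that \emph{minimizes} the number of double-scenario jobs strictly after the last single-scenario job, and show that in a minimal such instance property (ii) must already hold, because otherwise one could delete or merge a tail job without lowering the proxy ratio.

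\textbf{Main obstacle.} Step~1 is routine bookkeeping. The crux is Step~2: making precise the claim that the double-scenario tail can be normalized so that $C_j = C_{n-1}$ while keeping $\rho > \nicefrac53$. The danger is that shortening/merging a tail job lowers both numerator and denominator, and one must check the ratio genuinely does not decrease — this is where Rule~\ref{asp:rulefix}'s determinism and the specific value $\nicefrac53$ are used, and where the argument has to track which of the two denominator terms ($\max_k p(S_k)/2$ versus $\max_j p_j$) is active. I expect the proof in the paper performs this via a short case analysis on the position of the makespan-realizing job relative to $j$ and on the tie-break in Rule~\ref{asp:rulefix}, together with the accompanying Figure~\ref{fig:schedule} describing the canonical schedule shape.
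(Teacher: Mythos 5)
Your reduction framework (normalize a putative bad instance into one satisfying (i)--(ii) without lowering the proxy ratio, then invoke ``proxy $\geq$ true'') is the right skeleton, and Step~1 is sound: truncating at the largest makespan-realizing index $j^\ast$ only shrinks $p(S_k)$ and the argmax-set, so the proxy ratio cannot drop, and appending a zero-weight job in $S_1\cap S_2$ when $j^\ast\in S_1\triangle S_2$ is harmless and, incidentally, makes (ii) vacuous in that case ($j=n-1$). The real content, as you yourself flag, is Step~2: the case where $j^\ast$ is a double-scenario job and at least two double-scenario jobs follow $j$. There you propose taking a counterexample minimizing the length of the double-scenario tail and claiming (ii) must hold, but you never argue why, and the natural moves do not obviously work.

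Concretely, suppose $C_{n-1}<C_j$. Since the makespan after $[n-1]$ is at least the makespan after $[j]\geq C_j>C_{n-1}$, the makespan at time $n-1$ is on the \emph{other} machine, so Rule~\ref{asp:rulefix} sends $n$ to the same machine as $n-1$. Merging $n-1$ and $n$ into a single job $p_{n-1}+p_n$ then leaves the schedule and the online makespan unchanged and leaves $\max_k p(S_k)/2$ unchanged, but replaces $p_n$ in the argmax-term of the denominator by the strictly larger $p_{n-1}+p_n$; if this term is (or becomes) the active one, the proxy ratio strictly \emph{decreases}, so a tail-minimal counterexample obtained this way could fail (ii). Deleting $n-1$ has the symmetric problem of shrinking the numerator. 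What is missing is exactly the case analysis you gesture at: a normalization of the double-scenario tail, guided by the deterministic ping-pong of Rule~\ref{asp:rulefix}, that simultaneously tracks (a) which machine carries the scenario makespan at each step, (b) which of the two denominator terms is active, and (c) whether $C_{n-1}$ over- or under-shoots $C_j$ --- showing in each case a modification of the tail's processing times (not merely its length) that enforces $C_{n-1}=C_j$ without decreasing the ratio. Without that, the proof is not complete; the paper defers precisely this argument to the full version, and the caption of Figure~\ref{fig:schedule} suggests it is carried out by exhibiting the explicit worst continuation ($p_{n-1}$ chosen so $C_{n-1}=C_j$, then $p_n=p(S_1\cap[n-1])$) rather than by a minimality argument.
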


\begin{figure}
    \centering
    \begin{tikzpicture}
             \draw [fill=blue] (0,0) rectangle (2,0.5) node[pos=.5] {};
              \draw [fill=blue] (0,0.5) rectangle (1,1) node[pos=.5] {};
            \draw [fill=blue] (0,1.2) rectangle (1,1.7) node[pos=.5] {};
             \draw [fill=blue] (0,1.7) rectangle (3.5,2.2) node[pos=.5] {};
                \draw [fill=blue] (1.5,1.7) rectangle (3.5,2.2) node[pos=.5] {\color{white}{$\mathbf{j}$}};
                \draw [fill=blue!30] (2,0) rectangle (3.5,0.5) node[pos=.5] {$n-1$};
                 \draw [fill=blue!30] (1,0.5) rectangle (2.5,1) node[pos=.5] {$n-1$};
                    \draw [myred] (3.5,0) rectangle (9.5,0.5) node[pos=.5] {$n$};
                         \draw [myred] (2.5,1) rectangle (8.5,0.5) node[pos=.5] {$n$};
           
              \node at(-1, 0.2){$J_2\cap S_2$};
              \node at(-1, 0.7){$J_2\cap S_1$};
              \node at(-1, 1.4){$J_1\cap S_2$};
              \node at(-1, 1.9){$J_1\cap S_1$};
              \draw[dashed] (3.5,-0.5)--(3.5,3);
               \node at(4.5, 2.7){$C_j=C_{n-1}$};
     
\end{tikzpicture}

    \caption{Illustrative description of Observation \ref{obs:bottleneck}. Given the dark blue schedule, a worst outcome turns out to be the light blue job $n-1$ with a processing time $p_{n-1}$ such that $C_{n-1}=C_j$, followed by the light red job $n$ with $p_n=p(S_1\cap [n-1])$. The anticipation measures the minimum discrepancy in either machine prior to the revelation of the final job $n$.}
    \label{fig:schedule}
\end{figure}
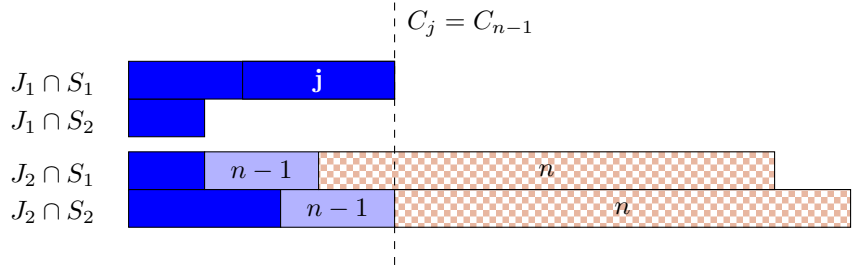

With Lemma \ref{lemma:caseadeletion} in mind, we can conclude: among instances with $n\in S_1\cap S_2$, at least one worst-case instance behaves as follows: The last single-scenario job $j\in S_1\triangle S_2$ admits the makespan at the time of its insertion. Then, the double-scenario jobs $j+1,\ldots, n-1$ are all assigned to the opposite machine. After the insertion of the $(n-1)$-st job, both machines have the same makespan; i.e., the schedule satisfies \eqref{eq:cuttinglemmaeq}. The last job $n$ has the adversarial processing time of \[p_n=\max_{k\in \{1,2\}}p(S_k\cap [n-1])=p(S_k\cap J_k\cap[n-1])+\max_{k\in \{1,2\}}p(S_k\cap J_{3-k}\cap [n-1]).\] Figure \ref{fig:schedule} illustrates a ``worst'' possible outcome, given a partial schedule.

\subparagraph{Anticipation.} In the sequel, we assume that the instance which we analyze satisfies the properties in Observation \ref{obs:bottleneck}. For such a special instance, a key parameter turns out to capture discrepancies within each machine.

\begin{defn}
    Consider a schedule where, without loss of generality, the makespan is attained at $S_1\cap J_1$. The \emph{anticipation} $\alpha$ of this schedule is given by the ratio \[\alpha\coloneqq\begin{cases}
        \frac{p(J_1\cap S_1)}{\max\left\{p(J_1\cap S_2),p(J_2\cap S_1)+p(J_1\cap S_1)-p(J_2\cap S_2)\right\}} &\text{ if }p(J_2\cap S_2)>p(J_2\cap S_1)\\
        0 &\text{ else.}
    \end{cases}\] By convention, we assume $\frac 0 0 =1$.
\end{defn}

In Figure \ref{fig:schedule}, we have $p(J_2\cap S_2\cap [n-2])>p(J_2\cap S_1\cap [n-2])$ and the second term of the denominator corresponds to the load $p(J_2\cap S_1\cap [n-1])$, i.e., the load of $J_2\cap S_1$ after the $(n-1)$-st job is assigned. To further motivate the concept of anticipation, let us bound the proxy competitive ratio in terms of it.
\begin{lemma}\label{motivateanticip}
    Let a schedule of $n-1$ jobs be given that satisfies \eqref{eq:cuttinglemmaeq} and anticipation $\alpha$. After any assignment of the $n$-th job, the resulting schedule has proxy competitive ratio at most $\frac{2\alpha+1}{\alpha+1}$.
\end{lemma}
\begin{proof}
   Let $P\coloneqq p(S_1\cap J_1\cap [n-1])$. Then if $p_n\leq P+\frac{P}{\alpha}$, we have 
   \[ \rho\leq \frac{2P+2p_n}{P+{\max\{p(S_1\cap J_1),p(S_2\cap J_2)\}+p_n}}\leq \frac{2P+2p_n}{P+\frac P\alpha+P\left(1+\frac 1 \alpha\right)}\leq \frac{2\alpha+1}{\alpha+1},\]
   and otherwise,
   \[\rho\leq \frac{P+p_n}{p_n}\leq\frac{P+\left(1+\frac{1}{\alpha}\right)P}{\left(1+\frac{1}{\alpha}\right)P}=\frac{2\alpha+1}{\alpha+1},\]
   showing the claim.
\end{proof}

\subsection{The Algorithm} In our algorithm, one of the goals is to keep $\alpha\leq 2$ in order to brace ourselves for upcoming double-scenario jobs in order to ensure a proxy competitive ratio of at most $\nicefrac{5}{3}$. The invariant which we maintain to this end is the following.

\begin{inv}\label{asp:induction}
  The schedule satisfies the following assertions:
    \begin{enumerate}[(i)]
        \item Its proxy competitive ratio is bounded by $\rho \leq \nicefrac{5}{3}$.
        \item The schedule has anticipation bounded by $\alpha \leq 2$.
        \item\label{item:ppty3}  Let $k,i$ be given so that $J_i\cap S_k$ admits the makespan in the schedule. If the schedule is \emph{dominated by the $i$-th machine}, i.e., $p(J_i\cap S_{k'})> p(J_{3-i}\cap S_{k''})$ for all $k',k''\in \{1,2\}$, then we have $p(J_i\cap S_k)\leq 2p(J_i\cap S_{3-k})$.
    \end{enumerate}
\end{inv}

The last property might appear somewhat unnatural at the first glance, though it is crucial for bounding the anticipation $\alpha \leq 2$ in the subsequent iterations.

Algorithm \ref{alg:53} shows how a single job $j$ is assigned given a schedule of the first $j-1$ jobs where Invariant \ref{asp:induction} is maintained. In the sequel, we prove that instances satisfying the properties in Observation \ref{obs:bottleneck} are always assigned so that the first $n-1$ jobs maintain Invariant \ref{asp:induction} and the last job results in a proxy competitive ratio of at most $\nicefrac{5}{3}$.

\begin{algorithm}
\caption{The assignment of a single job $j$ given a $\frac{5}{3}$-competitive schedule that satisfies Invariant \ref{asp:induction}.}
\label{alg:53}
\begin{algorithmic}[1]
\State Rename machines and scenarios so that $(J_1\cap S_1)$ is maximal
        \If{$j\in S_1\setminus S_2$}
            \State \textbf{return} 2
        \EndIf
        \If{$j\in S_2\setminus S_1$}
            \If{assigning $j$ to the first machine satisfies Invariant \ref{asp:induction}}
            \State \textbf{return} 1
            \Else
            \State \textbf{return} 2
            \EndIf
        \EndIf
        \If{$j\in S_1\cap S_2$} \State \textbf{return} $ \mathrm{argmin}_{i=1,2}\max_{k=1,2}\{p(J_i\cap S_k)\}$ 
        \EndIf        
\end{algorithmic}  
\end{algorithm}

\subsection{Analysis of Algorithm \ref{alg:53}}
It is evident that if Line 7 is executed, the assumption is maintained and we obtain a $\nicefrac{5}{3}$-competitive schedule. It is also easy to observe that the execution of Line 3 provides this: In the first scenario, the second machine is a least loaded machine, so that $\rho$ or $\alpha$ cannot increase.\footnote{This is straightforward to observe and left to the reader as an exercise.}
Moreover, if the schedule was previously dominated by the first machine, the third property is maintained as we do not increase $p(J_1\cap S_1)$. 

Finally, we deal with the execution of Line 13: the proxy competitive ratio does not exceed $\nicefrac{5}{3}$ after the last single-scenario job, which we prove below.

\begin{lemma}
    For an instance of OMSS(2,2) satisfying the properties in Observation \ref{obs:bottleneck}, let $j\in [n]$ be the largest index with $j\in S_1\triangle S_2$. Then, if Algorithm \ref{alg:53} satisfies Invariant \ref{asp:induction} on the first $j$ jobs, then it also satisfies Invariant \ref{asp:induction} on the first $\ell$ jobs for $\ell\in \{j+1,\ldots, n-1\}$.
\end{lemma}
\begin{proof}\begin{enumerate}[(i)]
    \item $\alpha\leq 2$: We claim that $\alpha$ stays constant after the insertion of the $j$-th job. Indeed, $p(J_2\cap S_2\cap [\ell])>p(J_2\cap S_1\cap [\ell])$ is maintained (as $p(J_2\cap S_2\cap [\ell])-p(J_2\cap S_1\cap [\ell])$ is constant). Similarly, both $p(J_1\cap S_2\cap [\ell])$ and $p(J_2\cap S_1\cap [\ell])+p(J_1\cap S_1\cap[\ell])-p(J_2\cap S_2\cap [\ell])$ are constant in $\ell\in \{j+1,\ldots, n\}$.
    \item $\rho\leq\nicefrac{5}{3}$: By Lemma \ref{motivateanticip}, it suffices to observe that the anticipation prior to the assignment of the jobs $\ell\in \{j+1,\ldots, n\}$ is constant.
    \item Third property: If there was no machine dominance after the insertion of the $j$-th job, machine dominance does not emerge since the jobs $\ell\in \{j+1,\ldots, n-1\}$ are all assigned to the second machine. Otherwise, the condition of the third property is maintained for the same reason.
\end{enumerate}
\end{proof}
This lemma does not cover the $n$-th job, however, the $n$-th job only needs to satisfy $\rho\leq \nicefrac{5}{3}$, which follows by the fact that $\alpha\leq 2$ prior to its assignment as we have seen, and Lemma \ref{motivateanticip}.

All that remains to analyze is that if Line 9 is executed, Invariant \ref{asp:induction} is maintained. 
Notice that machine dominance cannot occur as the prior makespan is still among the top two largest loads and is in the opposite machine of the makespan. Hence the third property does not need to be checked. 

We apply a case distinction. In each case, we either bound the proxy competitive ratio $\rho \leq \nicefrac{5}{3}$ and the anticipation $\alpha\leq 2$ by means of inequalities, or find inequalities that are necessary for a condition to fail and show that conjunction of these conditions has no solutions. As all inequalities are linear in $p(J_i\cap S_k\cap [j-1])$ and $p_j$, the reader might interpret the remainder of the proof as a sequence of cheerful linear inequalities which they are encouraged to trace. Accordingly, we ease the notation, and define:
\begin{align*}
    x_1&\coloneqq p(J_1\cap S_1\cap [j-1])\\
    x_2&\coloneqq p(J_1\cap S_2\cap [j-1])\\
    x_3&\coloneqq p(J_2\cap S_1\cap [j-1])\\
    x_4&\coloneqq p(J_2\cap S_2\cap [j-1])\\
    x_5&\coloneqq p_j
\end{align*}

\noindent We start with the case that $j$ has a large processing time.

 \subparagraph{Case 1: $x_5\geq 2x_1-x_2$.} In this case, the attempt to assign the job $j$ to the first machine failed because \[p(J_1\cap S_2\cap [j])=x_2+x_5\geq x_2+2x_1-x_2=2x_1,\] so that the assignment to the second machine yields a makespan of 
 \[p(J_2\cap S_2\cap [j])=x_4+x_5\geq x_4+2x_1-x_2\geq x_1.\]

\noindent The proxy competitive ratio is at most 
\[\rho\leq \max\left\{\frac{x_4+x_5}{x_5}, \frac{2x_4+2x_5}{x_2+x_4+x_5}\right\}\leq \frac 5 3.\]
Indeed, if $x_1\geq 2x_2$, then we have 
\[\frac{x_4+x_5}{x_5}\leq \frac{x_4+2x_1-x_2}{2x_1-x_2}\leq \frac{3x_1-x_2}{2x_1-x_2}\leq \frac{2.5x_1}{1.5x_1}=\frac{5}{3},\]
which is equivalent to $2x_5\geq 3x_4$. Otherwise we have 
\[\frac{2x_4+2x_5}{x_2+x_4+x_5}\leq \frac{5x_4}{x_2+2.5x_4}\leq\frac
{5x_1}{0.5x_1+2.5x_1}\leq \frac{5}{3}.\]

\noindent Finally, the anticipation is bounded by 
\[\alpha\leq\frac{x_4+x_5}{x_2+x_4+x_5-x_1}\leq \frac{x_5}{x_2+x_5-x_1}\leq \frac{2x_1-x_2}{x_2+2x_1-x_2-x_1}=\frac{2x_1-x_2}{x_1}\leq 2,\]
so that all properties in Invariant \ref{asp:induction} are shown.

In the sequel, we assume that $x_5\leq 2x_1-x_2$ and distinguish some cases based on how $x_1,x_2,x_3,x_4$ relate to each other. 

\subparagraph{Case 2: $x_1\geq x_2\geq x_3$ and $x_2\geq x_4$.} The $j$-th job was assigned to the second machine. This cannot be because of properties (ii) or (iii), as these could only be violated if $x_5\geq 2x_1-x_2$, which we have already excluded. Hence, assigning the $j$-th job to the first machine must have created a proxy competitive ratio of larger than $\frac 5 3$, i.e.,

\begin{equation}
    \frac{2x_2+2x_5}{x_2+x_4+x_5}>\frac 5 3 \Leftrightarrow x_2+x_5>5x_4.
\end{equation}

If the makespan is still attained by $J_1\cap S_1$, then the proxy competitive ratio $\rho$ has not increased and is bounded by $\frac 5 3 $ by induction. Otherwise the new makespan is $x_4+x_5\geq x_1$ and the proxy ratio is bounded by $\rho \leq\frac{x_4+x_5}{\max\{x_5, \frac{x_2+x_4+x_5}{2}\}}\leq \frac 3 2$, similar to the proof of Graham's List Scheduling Algorithm since $j$ was assigned to a least loaded machine.

We only need to check the property (ii), i.e., that $\alpha \leq 2$. However, the anticipation is bounded by $\frac{x_1}{x_2}\leq 2$ since the initial schedule must satisfy the property (iii).

\subparagraph{Case 3: $x_1\geq x_3$ and $x_4\geq x_2$.} We claim that this case never occurs, that is, in this case the job $j$ is always assigned to the first machine. Since we assume $x_5\leq 2x_1-x_2$, the property (iii) cannot be the reason the assignment to the first machine has failed. Similarly, if $x_5+x_2\geq x_1$, the anticipation is bounded by $\frac{x_5+x_2}{x_1}\leq \frac{2x_1-x_2+x_2}{x_1}=2$, and otherwise the anticipation cannot increase, either; so the property (ii) is also maintained while assigning to the first machine. Finally, we note that by $x_4\geq x_2$, the first machine is least loaded in the second scenario, therefore by the proof of Graham's List Scheduling, the proxy ratio either decreases or is bounded by $\frac 3 2 $.
\subparagraph{Case 4: $x_1\geq x_3\geq x_2\geq x_4$, proof by LP solver.} Exactly as above, the assignment cannot have failed because of the properties (ii) and (iii). Therefore, assigning $j$ to the first machine must yield a proxy competitive ratio larger than $\frac 5 3$, which implies both $2x_5\leq 3x_2$ and $x_5+x_2\leq 5x_4$. 

As $x_4\leq x_2$, the second machine is least loaded in the second scenario and therefore $\rho\leq \frac 5 3$ is secured if we place $j$ to the second machine. By the above observation, we can further bound the anticipation by
\[\alpha\leq \min\left\{\frac{x_4+x_5}{x_3+(x_4+x_5-x_1)}, \frac{x_1}{x_2+(x_1-x_4-x_5)}\right\}.\]
If $x_4+x_5\geq x_1$, the first term is valid, which means
\begin{equation}
 \frac{x_4+x_5}{x_3+(x_4+x_5-x_1)}>2 \Leftrightarrow 2(x_1-x_3)>x_4+x_5.
\end{equation}
Gathering all constraints, we obtain system of inequalities 
\begin{align}
    x_1&\geq x_3\\
    x_3&\geq x_2\\
    x_2&\geq x_4\\
    2x_5&\leq 3x_2\\
    \label{eq:replace1}x_4+x_5&\geq x_1\\
    \label{eq:replace2}2x_1-2x_2&\geq x_5+x_4\\
    x_4+x_5&\geq 2x_3\\
    x_2+x_5&\geq 5x_4
\end{align}
The only feasible solutions to this system of inequalities are of the type $(4x_4, 2x_4, 2x_4, x_4, 3x_4)$. For these solutions, it is immediate that $\alpha\leq 2$, from which the claim follows. 

Similarly if $x_4+x_5\geq x_1$, the inequality \eqref{eq:replace1} is reversed and the inequality \eqref{eq:replace2} is replaced by $\frac{x_1}{x_3+(x_1-x_4-x_5)}>2\Leftrightarrow2x_4+2x_5\geq 2x_3+x_1$, so that all schedules to be inspected are the solutions of the following system: 

\begin{align}
    x_1&\geq x_3\\
    x_3&\geq x_2\\
    x_2&\geq x_4\\
    2x_5&\leq 3x_2\\
    x_4+x_5&\leq x_1\\
    2x_4+2x_5&\geq 2x_3+x_1\\
    x_4+x_5&\geq 2x_3\\
    x_2+x_5&\geq 5x_4
\end{align}

This system of inequalities has the same set of solutions as well, so that the anticipation is bounded by $\alpha \leq 2$.

As all cases are covered, we conclude that Algorithm \ref{alg:53} maintains the properties in Invariant \ref{asp:induction} and is therefore $\nicefrac{5}{3}$-competitive.

\subsection{Lower Bounds}
In spite of the restrictions that arise from both the fixed double-scenario rule and how we evaluate lower bounds, we find out that our algorithm in the previous subsection is not far from being best possible.

\begin{theorem}\label{thm:53lb}
    For $m=2$ and $K=2$, there exists no deterministic algorithm with competitive ratio strictly smaller than $\frac{9+\sqrt{17}}{8}\approx 1.640$.
\end{theorem}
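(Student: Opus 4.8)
The plan is to construct an explicit adversarial instance of \textsc{OMSS}$(2,2)$ and play against an arbitrary deterministic online algorithm $\mathcal A$, forcing it into one of several bad cases via a decision tree of job revelations. Since we only ever need lower bounds on the offline optimum of the form ``a single job whose completion time is the makespan'' or ``the average load $p(S_k)/2$ of a scenario,'' we work with the proxy competitive ratio, which bounds the competitive ratio from below from the adversary's perspective in the sense that a large proxy ratio forces a large competitive ratio. The quantity $\frac{9+\sqrt{17}}{8}$ should emerge as the solution of a small optimization: we pick processing times as functions of a parameter $x$ so that the two (or three) branches the algorithm can take yield proxy ratios $f_1(x)$, $f_2(x)$, and the adversary chooses $x$ to maximize $\min_i f_i(x)$; the optimum is attained where two of these curves cross, giving a quadratic whose root is $\frac{9+\sqrt{17}}{8}$.

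Concretely, I would start the instance with a small batch of single-scenario jobs — say one job in $S_1\setminus S_2$ and one in $S_2\setminus S_1$, or a couple of each with carefully chosen sizes — together with at least one double-scenario job in $S_1\cap S_2$. The first branching point is whether $\mathcal A$ places the two single-scenario ``opener'' jobs on the same machine or on different machines, mirroring the dichotomy already exploited in Observation~\ref{thm:lb2K}. In each branch the adversary continues with a tailored follow-up: if the openers are split, reveal a large double-scenario job that now necessarily piles onto one machine already carrying load in both scenarios, driving up $\max_{i,k} p(J_i\cap S_k)$ while both scenario averages and all individual job sizes stay small; if the openers are together, reveal further single-scenario jobs on the ``empty'' side in one scenario so that, whichever way $\mathcal A$ assigns them, some machine accumulates too much load in one scenario relative to the average of the other. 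One then tracks, symbolically in the size parameter, the resulting proxy ratio in every leaf of the decision tree and takes the worst (smallest) over leaves, then the best (largest) over the adversary's free parameter.

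The main obstacle I expect is getting the case analysis genuinely \emph{tight} rather than merely proving some bound above $3/2$: the algorithm has real freedom (it can rebalance with each new job, and the proxy ratio's denominator has two competing terms — a scenario average and a max job size — so the adversary cannot simultaneously make both small without shrinking its own leverage). Pinning down $\frac{9+\sqrt{17}}{8}$ requires choosing the job sizes so that in the ``good for the algorithm'' branch the binding constraint switches from one denominator term to the other exactly at the optimal $x$; this is where the quadratic $8x^2 - 18x + 8 = 0$ (or an equivalent normalization) should appear. A secondary subtlety is ensuring the constructed instances really do satisfy the structural hypotheses under which the proxy ratio is a valid lower bound — i.e., that the witnessing job's completion time equals the makespan, or that the chosen scenario's load is what we claim — so I would verify, leaf by leaf, that the offline optimum used in the proxy denominator is attainable (exhibit the offline assignment) and not an underestimate.

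Finally, I would package the computation: define the instance family $\mathcal I(x)$, state the two or three possible algorithm responses after the key branching job, compute the proxy ratio $\rho_i(x)$ in each, observe $\rho(x) \geq \min_i \rho_i(x)$, and conclude by evaluating $\max_x \min_i \rho_i(x) = \frac{9+\sqrt{17}}{8}$ at the crossing point, with a short check that $\frac{9+\sqrt{17}}{8} \approx 1.640$. Because Theorem~\ref{thm:53ub} gives a $\tfrac53$-competitive algorithm and $\tfrac53 \approx 1.667 > 1.640$, there is a genuine gap, so no rounding slack is available and the size parameters must be chosen to the exact optimum; I would double-check the arithmetic of the quadratic's discriminant ($18^2 - 4\cdot 8\cdot 8 = 324 - 256 = 68 = 4\cdot 17$) to confirm the closed form.
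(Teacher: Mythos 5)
Your outline captures the right strategy --- an adversarial decision tree whose job sizes are tuned so that two branches become equally bad, landing on the quadratic $8x^2-18x+8=0$ with discriminant $68=4\cdot 17$ --- and this is indeed the spirit of the paper's argument. But what you have written is a plan, not a proof: you never commit to a concrete job sequence, so there is nothing to verify. The actual construction matters and is not obvious from the outline. The paper opens with \emph{four} unit jobs, two in $S_1\setminus S_2$ and two in $S_2\setminus S_1$, which forces any $(2-\varepsilon)$-competitive algorithm into a perfectly balanced position (each machine load $1$ in each scenario); then a large job $5\in S_1\setminus S_2$ of size $X=3+\sqrt{17}$, placed WLOG on machine~$1$; the genuine branching point is job $6\in S_2\setminus S_1$ of size $X/2$; in each branch a seventh job (single- or double-scenario, size $X$ resp.\ $X/2$) either kills the algorithm outright or re-balances the schedule to $p(S_1\cap J_1)=p(S_2\cap J_2)=1+X$; and finally a double-scenario closer of size $2+\tfrac{3X}{2}$ seals the bound. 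Your proposal conflates the opener phase with the branching phase, and your ``one opener per side'' variant would in fact give the algorithm free room (two single-scenario jobs in different scenarios can be co-located with no makespan cost, so there is no dichotomy to exploit at that stage).

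The more substantive issue is the role you assign to the proxy competitive ratio. You write that it ``bounds the competitive ratio from below from the adversary's perspective in the sense that a large proxy ratio forces a large competitive ratio.'' That is backwards: the proxy ratio is an \emph{upper} bound on the competitive ratio, precisely because its denominator is only a lower bound on the offline optimum. A large proxy ratio on its own proves nothing against the algorithm. For the lower-bound direction the adversary needs the \emph{true} offline optimum, which means exhibiting a feasible offline schedule that matches the denominator. You do flag this at the end as a ``secondary subtlety,'' but it is not secondary --- it is the entire content of a lower bound. The paper does this explicitly (e.g.\ the offline schedule $\{5,7\}$ vs.\ $\{1,2,3,4,6\}$ realising $\mathrm{OPT}=X$ after job $7$ in case~(i), and the partition $\{1,\dots,7\}\,\dot\cup\,\{8\}$ realising $\mathrm{OPT}=2+\tfrac{3X}{2}$ at the end). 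Until you write down the instance and the witnessing offline schedules leaf by leaf, the bound $\tfrac{9+\sqrt{17}}{8}$ remains a guess, however well-motivated.
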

\begin{proof}
    To ease the notation once again, let $X\coloneqq 3+\sqrt{17}$. We demonstrate our family of instances by means of a case distinction, that is, our instances mostly agree with each other in earlier jobs and their later jobs are revealed depending on the schedule. 

The first four jobs are given by $1,2\in S_1\setminus S_2$, $3,4\in S_2\setminus S_1$, $p_1=p_2=p_3=p_4$. Any algorithm with competitive ratio better than $2$ must assign these jobs so that the makespan equals $1$ at the end; otherwise, we may stop revealing further jobs so that the offline optimum equals $1$, while the makespan equals $2$.

The fifth job fulfills $5\in S_1\setminus S_2$ and $p_5=X$. Without loss of generality, it is assigned to the first machine.

The sixth job is given by $6\in S_2\setminus S_1$, $p_6=\frac X 2$. Now there are two options:
\begin{enumerate}[(i)]
    \item If the sixth job is also assigned to the first machine, we reveal the seventh job with $7\in S_2\setminus S_1$ and $p_7=X$. This job cannot be assigned to the first machine, as the makespan would then be $1+\frac{3X}{2}$, while the offline optimum is $X$. Indeed, this would imply a ratio of at least
    \[\frac{1+\frac{9+3\sqrt{17}}{2}}{3+\sqrt{17}}=\frac{9+\sqrt{17}}{8}.\]
    In particular, we obtain $p(S_1\cap J_1\cap [7])=p(S_2\cap J_2\cap [7])=1+X$.
    \item If the sixth job is assigned to the second machine, we reveal the seventh job as $7\in S_1\cap S_2$ and $p_7=\frac X 2$. Due to similar reasoning to the other case, the seventh job must be assigned to the second machine and we again have $p(S_1\cap J_1\cap [7])=p(S_2\cap J_2\cap [7])=1+X$.
\end{enumerate}
See Figure \ref{fig:53lb} for the first seven jobs in each of the cases.

In both of these cases, we reveal the eighth job with $8\in S_1\cap S_2$ and $p_8=2+\frac{3X}{2}$. No matter where this job is assigned, the makespan is $3+\frac{5X}{2}$, whereas the optimal solution $\{1,\ldots, 7\}\dot\cup\{8\}$ achieves a makespan of $2+\frac{3X}{2}$. Therefore, the competitive ratio is bound to be at least
\[\frac{3+\frac{5X}{2}}{2+\frac{3X}{2}}=\frac{6+5(3+\sqrt{17})}{4+3(3+\sqrt{17})}=\frac{9+\sqrt{17}}{8},\]
as desired.
\end{proof}
\begin{figure}
    \centering
    \begin{tikzpicture}
             \draw [fill=blue!30] (0,0) rectangle (0.5,0.5) node[pos=.5] {$4$};
              \draw [fill=blue!30] (0,0.5) rectangle (0.5,1) node[pos=.5] {$3$};
            \draw [fill=blue!30] (0,1.2) rectangle (0.5,1.7) node[pos=.5] {$2$};
             \draw [fill=blue!30] (0,1.7) rectangle (0.5,2.2) node[pos=.5] {$1$};
                \draw [fill=blue!30] (0.5,1.7) rectangle (4,2.2) node[pos=.5] {$5$};
                \draw [fill=blue!30] (0.5,1.2) rectangle (2.25,1.7) node[pos=.5] {$6$};
                  \draw [fill=blue!30] (0.5,0) rectangle (4,0.5) node[pos=.5] {$7$};
                
                        \draw [fill=blue!30] (6,0) rectangle (6.5,0.5) node[pos=.5] {$4$};
              \draw [fill=blue!30] (6,0.5) rectangle (6.5,1) node[pos=.5] {$3$};
            \draw [fill=blue!30] (6,1.2) rectangle (6.5,1.7) node[pos=.5] {$2$};
                 \draw [fill=blue!30] (6,1.7) rectangle (6.5,2.2) node[pos=.5] {$1$};
                \draw [fill=blue!30] (6.5,1.7) rectangle (10,2.2) node[pos=.5] {$5$};
                \draw [fill=blue!30] (6.5,0) rectangle (8.25,0.5) node[pos=.5] {$6$};
                 \draw [fill=blue!30] (6.5,0.5) rectangle (8.25,1) node[pos=.5] {$7$};
                  \draw [fill=blue!30] (8.25,0) rectangle (10,0.5) node[pos=.5] {$7$};

              \node at(-1, 0.2){$J_2\cap S_2$};
              \node at(-1, 0.7){$J_2\cap S_1$};
              \node at(-1, 1.4){$J_1\cap S_2$};
              \node at(-1, 1.9){$J_1\cap S_1$};
     
\end{tikzpicture}

    \caption{The first seven jobs in cases (i) and (ii) of the proof of Theorem \ref{thm:53lb}, respectively. In both cases, an eighth job $8\in S_1\cap S_2$ forces the schedules to the desired lower bound of competitiveness.}
    \label{fig:53lb}
\end{figure}
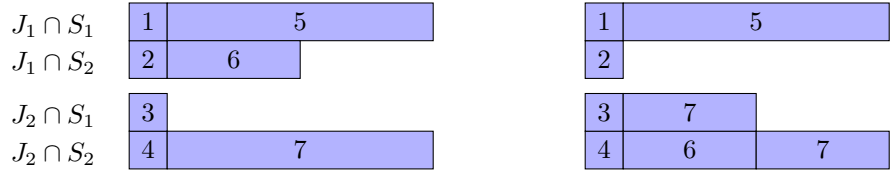


One might wonder how the gap between the upper bound in Theorem \ref{thm:53ub} and the lower bound in Theorem \ref{thm:53lb} can be narrowed down. We can observe through a simple family of instances that, if that is possible at all by means of improving the upper bound, then dropping Rule \ref{asp:rulefix} is necessary.

\begin{obs}\label{obs:lbobs}
    Let $r<\nicefrac{5}{3}$ be arbitrary. There exists no deterministic $r$-competitive algorithm for \textsc{OMSS(2,2)} which satisfies Rule \ref{asp:rulefix}.
\end{obs}
\begin{proof}
   Consider an instance of $n=5$ jobs with processing times $p_1=p_2=p_4=1$, $p_3=2$, $p_5=3$. The scenarios are given by $S_1=\{1,2,4,5\}$, $S_2=\{3,4,5\}$.

   The first two jobs must be assigned to opposite machines; for otherwise one could \emph{truncate} the instance, i.e., consider the restricted instance of the first $n=2$ jobs with $p_j\equiv 1$, $S_1=\{1,2\}$, $S_2=\emptyset$ instead. On this restricted instance, the algorithm would be no better than $2$-competitive. Hence, we may assume without loss of generality that $1\in J_1$ and $2\in J_2$.

   The third job is assigned to the first machine without loss of generality. Then, as $4\in S_1\cap S_2$ and the first machine uniquely admits makespan, we must assign the fourth job to the second machine, and similarly, the fifth job is assigned to the second machine as well.

   In total, the makespan is given by
   \[p(J_1\cap S_2)=1+1+3=5,\]
    although an optimal solution is given by $J_1=\{1,2,3,4\}$ and $J_2=\{5\}$, which has a makespan of $3$.
\end{proof}

\begin{figure}
    \centering
    \begin{tikzpicture}
             \draw [fill=blue!30] (0,0) rectangle (1,0.5) node[pos=.5] {$1$};
            \draw [fill=blue!30] (0,1.2) rectangle (1,1.7) node[pos=.5] {$2$};
       \draw [fill=blue!30] (0,0.5) rectangle (1,1) node[pos=.5] {$4$};
       \draw [fill=blue!30] (1,0.5) rectangle (4,1) node[pos=.5] {$5$};
        \draw [fill=blue!30] (1,0) rectangle (2,0.5) node[pos=.5] {$4$};
             \draw [fill=blue!30] (2,0) rectangle (5,0.5) node[pos=.5] {$5$};
             \draw [fill=blue!30] (0,1.7) rectangle (2,2.2) node[pos=.5] {$3$};

    \draw [fill=blue!30] (7,1.7) rectangle (9,2.2) node[pos=.5] {$3$};
     \draw [fill=blue!30] (9,1.7) rectangle (10,2.2) node[pos=.5] {$4$};
      \draw [fill=blue!30] (7,1.2) rectangle (8,1.7) node[pos=.5] {$1$};
        \draw [fill=blue!30] (8,1.2) rectangle (9,1.7) node[pos=.5] {$2$};
          \draw [fill=blue!30] (9,1.2) rectangle (10,1.7) node[pos=.5] {$4$};
               \draw [fill=blue!30] (7,0) rectangle (10,0.5) node[pos=.5] {$5$};
                     \draw [fill=blue!30] (7,0.5) rectangle (10,1) node[pos=.5] {$5$};

              \node at(-1, 0.2){$J_2\cap S_2$};
              \node at(-1, 0.7){$J_2\cap S_1$};
              \node at(-1, 1.4){$J_1\cap S_2$};
              \node at(-1, 1.9){$J_1\cap S_1$};
     
\end{tikzpicture}

    \caption{The online schedule in the proof of Observation \ref{obs:lbobs} and the offline optimum.}
    \label{fig:53lb2}
\end{figure}
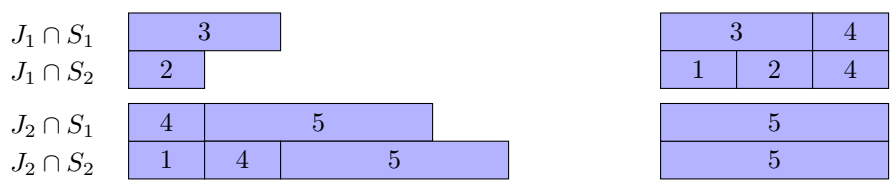

In the above setting, the right course of action would have indeed been to violate Rule \ref{asp:rulefix} and assign $4\in J_1$. Hence, it is justified to suspect that the algorithm can be improved by altering this rule according to the sizes of jobs $j\in S_1\cap S_2$, although as seen in Theorem \ref{thm:53lb}, it can only improve the ratio by less than $0.027$.

\section{Tight Bounds for $m=3$ and Sufficiently Many Scenarios}\label{sec:hypergraphs}

One of our main results (Theorem \ref{thm:general}) states that for any number $m\in \mathbb N$ of machines, there exists a number $K$ of scenarios such that no deterministic algorithm  for \textsc{OMSS$(m,K)$} can have a competitive ratio less than $m$, even for unit processing times. 

\begin{theorem}\label{thm:nobetterthan3}
    Let $K\geq 233$. Then for \textsc{Online Makespan Hypergraph Coloring}$(m)$, there exists no $r$-competitive online algorithm for any $r<3$, even when the problem is restricted to $3$-uniform instances with at most $K$ hyperedges.
\end{theorem}

The problem \textsc{Online Makespan Hypergraph Coloring}$(m)$, which we formulated in Section \ref{sec:prelim}, is equivalent to the unit processing time case of \textsc{OMSS$(m,K)$} with $K$ part of the input, so that the lower bounds are immediately implied for the latter as well for $K\geq 233$.

 To prove Theorem \ref{thm:nobetterthan3}, we provide a family of instances on which any online algorithm is forced to output a monochromatic hyperedge of size $3$. For each of the instances, we also maintain (offline) colorings $c\colon [j]\to \{\offlinecolors\}$ to be \emph{feasible}, i.e., with the property that if two nodes $v,v'$ are included in a common hyperedge $e$, we have $c(v)\neq c(v')$. The latter coloring suggests an offline optimum of value $1$ for each of the instances, implying the competitive ratio of $3$ immediately. The main difference between the online and offline colorings is, as expected, that we may change the offline coloring completely at all times as long as it remains feasible. Among other modifications, we often employ a permutation 
$\pi\colon \{\mathrm{red, blue, yellow}\}\to \{\mathrm{red, blue, yellow}\}$
to realize this possibility. We must trace the offline coloring simultaneously because the hyperedges are often defined adversarially according to both colorings of the current hypergraph.  

In the following, let $J_i\subseteq V$ denote the subset of nodes assigned to the online color $i$ for $i\in\{1,2,3\}$. Right before the insertion of a node $j$, we implicitly mean $J_i\cap [j-1]$ whenever we mention $J_i$ for some $i\in \{1,2,3\}$. Inspired by our original scheduling problem, we call the objective value $\max_{e\in E}\max_{i\in \{1,2,3\}}|e\cap J_i|$ of an instance its \emph{makespan}. Without loss of generality, we can truncate an instance as soon as it obtains a makespan of $3$.

\subsection{Our gadget: Seven-node subgraph $\mathcal S$}
Initially, we study a subgraph $\mathcal S$ that we use as a building block throughout our construction. The hypergraph $\mathcal S$ is given by the nodes $V(\mathcal S)=\{1,\ldots, 7\}$ and hyperedges 
\begin{align*}
    E(\mathcal S)=\big\{&\{1,2\},\{1,3\},\{2,3\},\{1,2,3\},\{2,4\},\{4,5\},\{2,5\},\{2,4,5\},\\&\{4,6\}, \{6,7\},\{4,7\}, \{4,6,7\},\{3,6\}, \{3,7\}\big\}.
\end{align*}
One can verify that the hyperedges have size at most $3$ and the hyperedge set $E(\mathcal S)$ is closed under taking subsets of size at least $2$, i.e., if $e\in E(\mathcal S)$, $|e|=3$ and $e'\subseteq e$ with $|e'|=2$, then $e'\in E(\mathcal S)$. The hypergraph $\mathcal S$ together with the maximal hyperedges $e_1,\ldots, e_5$ is given in Figure \ref{fig:hypergraphs}.

  \begin{figure}[h!]
    \centering
    \begin{tikzpicture}
     \coordinate (C) at (0.95,0.45);
      \coordinate(A) at (0.45,1.45);
     \coordinate(B) at (1.45,1.45);
     \coordinate(E) at (1.95,0.45);
     \coordinate(D) at (2.45,1.45);
     \coordinate(F) at (2.95,0.45);
         \coordinate(G) at (3.95,0.45);

         \draw[fill=gray!30, thick] (A) -- (B) -- (C) -- cycle;
\node at (barycentric cs:B=1,A=1,C=1) {$e_1$};
  \draw[fill=gray!30, thick] (B) -- (D) -- (E) -- cycle;
\node at (barycentric cs:D=1,B=1,E=1) {$e_2$};
  \draw[fill=gray!30, thick] (G) -- (D) -- (F) -- cycle;
\node at (barycentric cs:G=1,D=0.7,F=0.7) {$e_3$};
    \draw[thick] (G) to [bend left=70] node[below]{$e_4$} (C) ;
     \draw[thick] (F) to [bend left=50] node[below right]{$e_5$} (C) ;
     \draw  node [neutral](3) at (0.95,0.45) {$3$};
     \draw  node [neutral](1) at (0.45,1.45) {$1$};
     \draw  node [neutral](2) at (1.45,1.45) {$2$};
     \draw  node [neutral](5) at (1.95,0.45) {$5$};
     \draw  node [neutral](4) at (2.45,1.45) {$4$};
      \draw  node [neutral](6) at (3.95,0.45) {$6$};
         \draw  node [neutral](7) at (2.95,0.45) {$7$};
\end{tikzpicture}
    \caption{The hypergraph $\mathcal S$ with the maximal hyperedges $e_1,\ldots, e_5$.}
    \label{fig:hypergraphs}
\end{figure}
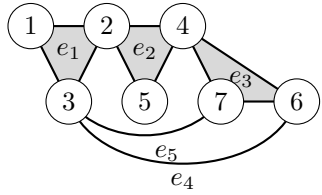

We first assume that the seven nodes are assigned to exactly two out of three online colors. 
Since the online colors are initially symmetric, let us assume that we use the online colors $1$ and $2$.

\begin{asp}\label{asp:twomachs}
$e_{k}\cap J_3=\emptyset$ for all $k\in \{1,2,3\}$.
\end{asp}

Moreover, for brevity, we only mention the revelation of inclusion-wise maximal hyperedges whenever we reveal a copy of this gadget, although the subsets of these hyperedges of size $2$ are being introduced as well. 

First, we describe a case distinction about the introduction of the gadget, always satisfying Assumption \ref{asp:twomachs}. We reveal the first three nodes that appear in the common hyperedge $e_{1}=\{1,2,3\}$. If all three nodes were assigned to a common online color, we would obtain a makespan of $3$, which ultimately proves the theorem. Therefore, we assume that the nodes were partitioned in two online colors; as the nodes are symmetric so far, we may assume without loss of generality that the first two nodes were assigned to the first color and the third job to the second, see Figure \ref{fig:firstthree}. 

\begin{figure}[h!]
    \centering
    \begin{tikzpicture}
    \coordinate (A) at (0.45,0.45);
    \coordinate (B) at (0.45,1.45);
    \coordinate (C) at (1.45,1.45);
    \draw[fill=gray!30, thick] (A) -- (B) -- (C) -- cycle;
\node at (barycentric cs:A=1,B=0.7,C=1) {$e_1$};

      \draw  node [myred] (3) at (0.45,0.45) {$3$};
       \draw  node [myblue] (1) at (0.45,1.45) {$1$};
      \draw  node [myyellow] (2) at (1.45,1.45) {$2$};
     
        \draw  node [] at (-1.45,1.45) {$J_1$};
       \draw  node [] at (-1.45,0.45) {$J_2$};
   \end{tikzpicture}
    \caption{First three nodes.}
    \label{fig:firstthree}
\end{figure}

Now we introduce two more nodes $4,5$ in this order one by one and reveal that 
\[e_{2}=\{2,4,5\}.\]

If both nodes $4$ and $5$ are assigned to the second online color, we may assign the remaining two nodes $6$ and $7$ arbtirarily and ignore them from now on, see Figure \ref{fig:allfive}. 
    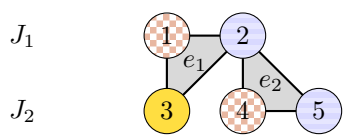
\begin{figure}[h!]
    \centering
    \begin{tikzpicture}
         \coordinate (C) at (0.45,0.45);
      \coordinate(A) at (0.45,1.45);
     \coordinate(B) at (1.45,1.45);
     \coordinate(D) at (1.45,0.45);
     \coordinate(E) at (2.45,0.45);
        \draw[fill=gray!30, thick] (A) -- (B) -- (C) -- cycle;
\node at (barycentric cs:B=1,A=0.7,C=1) {$e_1$};

   \draw[fill=gray!30, thick] (B) -- (D) -- (E) -- cycle;
\node at (barycentric cs:B=1,D=0.7,E=1) {$e_2$};
    
     \draw  node [myyellow] (3) at (0.45,0.45) {$3$};
      \draw  node [myred](1) at (0.45,1.45) {$1$};
     \draw  node [myblue](2) at (1.45,1.45) {$2$};
     \draw  node [myred](4) at (1.45,0.45) {$4$};
     \draw  node [myblue](5) at (2.45,0.45) {$5$};
     
        \draw  node [] at (-1.45,1.45) {$J_1$};
       \draw  node [] at (-1.45,0.45) {$J_2$};
\end{tikzpicture}
    \caption{The five relevant nodes if $\tau(4)=\tau(5)=2$.}
    \label{fig:allfive}
\end{figure}

Otherwise, one of the nodes $4$ and $5$, without loss of generality, the node $4$, is assigned to the first color. We reveal the sixth node along with the hyperedges
\[e_3\supseteq\{4,6\}, e_4\supseteq\{3,6\}.\]
If the sixth node is assigned to the second color, we may ignore the seventh node similar to above, see Figure \ref{fig:allsix}.
    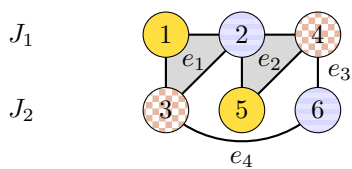
\begin{figure}[h!]
    \centering
    \begin{tikzpicture}
      \coordinate (C) at (0.45,0.45);
      \coordinate(A) at (0.45,1.45);
     \coordinate(B) at (1.45,1.45);
     \coordinate(E) at (1.45,0.45);
     \coordinate(D) at (2.45,1.45);
     \coordinate(F) at (2.45,0.45);
        \draw[fill=gray!30, thick] (A) -- (B) -- (C) -- cycle;
\node at (barycentric cs:B=1,A=0.7,C=1) {$e_1$};
  \draw[fill=gray!30, thick] (B) -- (D) -- (E) -- cycle;
\node at (barycentric cs:D=1,B=0.7,E=1) {$e_2$};
    \draw[thick](F)--(D) node [midway, right]{$e_3$};
    \draw[thick] (F) to [bend left=45] node[below]{$e_4$} (C) ;

     \draw  node [myred](3) at (0.45,0.45) {$3$};
     \draw  node [myyellow](1) at (0.45,1.45) {$1$};
    \draw  node [myblue](2) at (1.45,1.45) {$2$};
      \draw  node [myyellow](5) at (1.45,0.45) {$5$};
    \draw  node [myred](4) at (2.45,1.45) {$4$};
     \draw  node [myblue](6) at (2.45,0.45) {$6$};
     
        \draw  node [] at (-1.45,1.45) {$J_1$};
       \draw  node [] at (-1.45,0.45) {$J_2$};
\end{tikzpicture}
    \caption{The six relevant nodes if $\tau(6)=2$.}
    \label{fig:allsix}
\end{figure}

If the sixth node is assigned to the first online color, we reveal the seventh node along with the hyperedges 
\[e_3=\{4,6,7\}, e_5\supseteq \{3,7\}.\]

Throughout the remainder of this section, we let $\mathcal S$ denote the resulting hypergraph with seven nodes and fourteen hyperedges (including subsets).

Since both the sixth and the seventh nodes are assigned to the first online color, the seventh node is forced to be assigned to the second online color. Our final assignment looks as in Figure \ref{fig:allseven}.

    \begin{figure}[h!]
    \centering
    \begin{tikzpicture}
         \coordinate(G) at (3.45,1.45);

         \draw[fill=gray!30, thick] (A) -- (B) -- (C) -- cycle;
\node at (barycentric cs:B=1,A=0.7,C=1) {$e_1$};
  \draw[fill=gray!30, thick] (B) -- (D) -- (E) -- cycle;
\node at (barycentric cs:D=1,B=0.7,E=1) {$e_2$};
  \draw[fill=gray!30, thick] (G) -- (D) -- (F) -- cycle;
\node at (barycentric cs:G=1,D=0.7,F=1) {$e_3$};
    \draw[thick] (G) to [out=340, in=270, looseness=2] node[below]{$e_4$} (C) ;
     \draw[thick] (F) to [bend left=50] node[below]{$e_5$} (C) ;
     \draw  node [myred](3) at (0.45,0.45) {$3$};
     \draw  node [myyellow](1) at (0.45,1.45) {$1$};
     \draw  node [myblue](2) at (1.45,1.45) {$2$};
     \draw  node [myyellow](5) at (1.45,0.45) {$5$};
     \draw  node [myred](4) at (2.45,1.45) {$4$};
      \draw  node [myyellow](6) at (3.45,1.45) {$6$};
         \draw  node [myblue](7) at (2.45,0.45) {$7$};
         \draw  node [] at (-1.45,1.45) {$J_1$};
       \draw  node [] at (-1.45,0.45) {$J_2$};
\end{tikzpicture}
    \caption{The hypergraph $\mathcal S$ together with the online and offline colorings, under the assumption that $\tau(6)=1$.}
    \label{fig:allseven}
\end{figure}
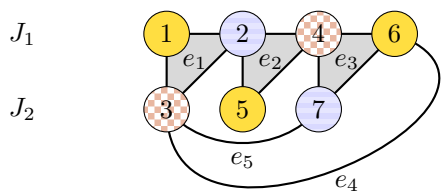

\begin{prop}\label{prop:bicolorededges}
    For any partitioning $V(\mathcal S)=J_1\dot\cup J_2$ of the nodes that admits a makespan of at most $2$, there exists a node coloring $c\colon V(\mathcal S)\to \{\offlinecolors\}$ such that there are two hyperedges $e\in E(\mathcal S[J_1]), e'\in E(\mathcal S[J_2])$ with $c(e)=c(e')=\{\mathrm{red},\mathrm{blue}\}$.
\end{prop}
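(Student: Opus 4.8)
The plan is to strip the makespan hypothesis down to a statement about the three triangles, observe how little freedom the offline colouring has, and then run the case distinction the statement alludes to. Since $E(\mathcal S)$ is closed under taking subsets of size $\ge 2$, a partition $V(\mathcal S)=J_1\dot\cup J_2$ has makespan at most $2$ exactly when none of the three size-$3$ hyperedges $e_1=\{1,2,3\}$, $e_2=\{2,4,5\}$, $e_3=\{4,6,7\}$ is monochromatic; equivalently, each of these three triangles is split between $J_1$ and $J_2$, with two of its vertices on one side and one on the other. In particular both sides have at least two vertices and one of them has at least four; since the graph on $V(\mathcal S)$ formed by the pair-hyperedges has independence number $3$, that larger side contains a pair-hyperedge, and each triangle contributes a further one (the pair of its two vertices on the majority side). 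Note also that the $e,e'$ sought in the conclusion must be pair-hyperedges: a size-$3$ hyperedge lying inside a single $J_i$ would itself violate makespan $\le 2$, and any pair-hyperedge is automatically properly coloured by $c$.

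Next I would record that a proper $3$-colouring $c$ of $\mathcal S$ is almost completely forced: $\{1,2,3\}$ must be rainbow; then $e_4=\{3,6\}$, $e_5=\{3,7\}$ and the triangle $e_3$ force $\{c(6),c(7)\}=\{c(1),c(2)\}$ and hence $c(4)=c(3)$, and the triangle $e_2$ then forces $c(5)=c(1)$. So, up to a permutation of the colours and a single binary swap of the colours of $6$ and $7$, the colour classes are $\{3,4\}$, $\{1,5\}\cup\{6\text{ or }7\}$ and $\{2\}\cup\{7\text{ or }6\}$. Consequently ``$e\cup e'$ uses only two colours'' means precisely ``$e$ and $e'$ miss the same colour'', and the only lever to arrange this is that single swap.

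Finally I would case-split on which edge of each triangle is monochromatic, using that $e_1$ and $e_2$ share node $2$ while $e_2$ and $e_3$ share node $4$, so the side containing these two vertices propagates and collapses the number of genuinely different patterns. For each pattern I would exhibit $e\subseteq J_1$ and $e'\subseteq J_2$ — a monochromatic triangle edge on each side, or a cross-edge $\{3,6\}$ or $\{3,7\}$, or an edge inside the $\ge 4$-element side — and then fix the swap so that $e$ and $e'$ miss the same colour; reading off $c$ on $e\cup e'$ then gives the claim. The main obstacle is precisely the rigidity from the previous step: for several patterns the first choices of $e$ and $e'$ miss different colours under both settings of the swap (for instance $e=\{1,2\}$ and $e'=\{4,5\}$ can never be matched, since the colour-pair of $\{1,2\}$ omits $c(3)$ while that of $\{4,5\}$ contains it), so one has to re-route to another pair-hyperedge inside one of the two sides; and for the most lopsided patterns one additionally has to invoke how $\mathcal S$ is presented in the surrounding construction to see that such a partition cannot be produced online in the first place. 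Once the patterns and the re-routings are tabulated, each individual case reduces to a one-line check against the two candidate colourings.
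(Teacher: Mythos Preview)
Your overall strategy---reducing the makespan-$\le 2$ hypothesis to ``each of the three triangles is split $2$:$1$'', pinning down the proper $3$-colourings of $\mathcal S$ up to the single $6\leftrightarrow 7$ swap, and then case-splitting on which pair from each triangle lies on the same side---is precisely the ``careful case distinction'' the paper invokes, and your structural analysis of the offline colourings (forcing $c(4)=c(3)$, $c(5)=c(1)$, $\{c(6),c(7)\}=\{c(1),c(2)\}$) is correct.

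The gap is in your handling of the lopsided patterns. You propose to dispose of them by arguing that ``such a partition cannot be produced online in the first place'', but this is not true. Take $J_1=\{1,2,5,6,7\}$, $J_2=\{3,4\}$: each of $e_1,e_2,e_3$ is split, so the makespan is $2$, yet $\{3,4\}\notin E(\mathcal S)$ and hence $E(\mathcal S[J_2])=\emptyset$---there is simply no candidate for $e'$. And this partition \emph{is} reachable online without the makespan ever exceeding $2$: assign $1,2$ to $J_1$, then $3,4$ to $J_2$, then $5,6,7$ to $J_1$. The same phenomenon occurs for $J_2\in\{\{1,4\},\{2,6\},\{2,7\}\}$, which are exactly the two-element transversals of the three triangles that happen not to be hyperedges. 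In the paper's construction the copies $\mathcal S(t)$ are revealed as standalone subhypergraphs before any cross-edges appear, so the surrounding construction does not constrain the algorithm's behaviour on them either. Your proposal therefore does not close these cases; whatever the full version does here---most plausibly, it handles these degenerate two-colour partitions by a separate argument inside the larger construction rather than through the proposition as stated---you have not supplied that piece, and the appeal to the online order cannot substitute for it.
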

\begin{proof}
    See Figures \ref{fig:allfive}, \ref{fig:allsix} and \ref{fig:allseven} for the offline coloring in the respective cases. The hyperedges $(e,e')$ in question are $(\{1,2\},\{4,5\})$; $(\{2,4\}, \{3,6\})$ and $(\{2,4\},\{3,7\})$, respectively.
\end{proof}

In other words, we have induced monochromatic hyperedges of both colors with respect to the online coloring. Applying this repeatedly, we can force the algorithm to a monochromatic hyperedge of size $3$, which we show next.

\subsection{Pigeonholing the two-color case}\label{subs:pigeon7}

In our construction, we would like to create \emph{copies} of the graph $\mathcal S$ successively, which we denote by $\mathcal S(t)$ for incrementing $t\in \mathbb N$. 

A straightforward application of the pigeonhole principle yields that, if we have seven subgraphs $\mathcal S(1),\ldots,\mathcal S(7)$ that are each partitioned onto two online colors, at least three of them (without loss of generality $\mathcal S(1)$, $\mathcal S(2)$, $\mathcal S(3)$) must be partitioned onto the same two online colors. Again without loss of generality, these two online colors are $1$ and $2$. Considering Proposition \ref{prop:bicolorededges}, we can color the three copies $\mathcal S(1)$, $\mathcal S(2)$ and $\mathcal S(3)$ so as to satisfy the following properties:

\begin{itemize}
    \item There exist two hyperedges $e_i^1=\{j_i^1, j_i^2\}\in  E(\mathcal S(1))$ for $i\in \{1,2\}$ with the property that $c(e_i^1)=\{\mathrm{red,blue}\}$,
     \item there exist two hyperedges $e_i^2=\{j_i^3, j_i^4\}\in  E(\mathcal S(2))$ for $i\in \{1,2\}$ with the property that $c(e_i^1)=\{\mathrm{blue, yellow}\}$,
    \item there exist two hyperedges $e_i^3=\{j_i^5, j_i^6\}\in  E(\mathcal S(3))$ for $i\in \{1,2\}$ with the property that $c(e_i^3)=\{\mathrm{yellow, red}\}$.
\end{itemize}

Moving forward, we restrict our subgraph to one with the nodes $j_i^t$ for $i\in \{1,2\}, t\in \{1,\ldots,6\}$ and the hyperedges $k_i^t$ for $i\in \{1,2\}, t\in \{1,2,3\}$. The remaining nodes in $V(\mathcal S(1))\dot\cup V(\mathcal S(2))\dot \cup V(\mathcal S(3))$ will not be connected to other nodes anymore, nor will their offline coloring be altered. Thus, they can be ignored from now on. 

    \begin{figure}[h!]
    \centering
    \begin{tikzpicture}
     
   \draw  node [myred](2) at (0.45,0.45) {$j_2^1$};
         \draw  node [myred](1) at (0.45,1.45) {$j_1^1$};
     \draw  node [myblue](4) at (1.45,0.45) {$j_2^2$};
     \draw  node [myblue](3) at (1.45,1.45) {$j_1^2$};
      \draw[thick](3)--(1) node [midway, above]{$e_1^1$};
  \draw[thick](4)--(2) node [midway, above]{$e_2^1$};
 
      \draw  node [myblue](6) at (3.45,0.45) {$j_2^3$};
      \draw  node [myblue](5) at (3.45,1.45) {$j_1^3$};
      \draw  node [myyellow](8) at (4.45,0.45) {$j_2^4$};
      \draw  node [myyellow](7) at (4.45,1.45) {$j_1^4$};
           \draw[thick](5)--(7) node [midway, above]{$e_1^2$};
  \draw[thick](6)--(8) node [midway, above]{$e_2^2$};

          \draw  node [myyellow](10) at (6.45,0.45) {$j_2^5$};
       \draw  node [myyellow](9) at (6.45,1.45) {$j_1^5$};
       \draw  node [myred](12) at (7.45,0.45) {$j_2^6$};
       \draw  node [myred](11) at (7.45,1.45) {$j_1^6$};
          \draw[thick](9)--(11) node [midway, above]{$e_1^3$};
  \draw[thick](10)--(12) node [midway, above]{$e_2^3$};

        \draw  node [] at (-1.45,1.45) {$J_1$};
       \draw  node [] at (-1.45,0.45) {$J_2$};
\end{tikzpicture}
    \caption{The nodes $j_i^t$.}
    \label{fig:sevenpigeon}
\end{figure}

Now we introduce the thirteenth node $x_1$, revealing the full extent of the hyperedges $e_1^1=\{j_1^1,j_1^2,x_1\}$ and $e_2^1=\{j_2^1,j_2^2,x_1\}$. Unless a competitive ratio of $3$ is attained, the node $x_1$ must be assigned to the third online color. Thereafter, the fourteenth node $x_2$ is revealed together with hyperedges $e_1^2=\{j_1^3,j_1^4,x_2\}$, $e_2^2=\{j_2^3,j_2^4,x_2\}$, $f_1\supseteq \{x_1,x_2\}$. Analogously, the fourteenth node is assigned to the third online color as well. Notice that we can feasibly extend the offline coloring by $c(x_1)=\mathrm{yellow}$ and $c(x_2)=\mathrm{red}$, as depicted in Figure \ref{fig:14jobs}.

Finally, we would like to introduce the last node $x_3$ with the hyperedges $e_1^3=\{j_1^5,j_1^6,x_3\}$, $e_2^3=\{j_2^5,j_2^6,x_3\}$ and $f=\{x_1,x_2,x_3\}$. For each online color, there exists a hyperedge $e\in \{e_1^3,e_2^3, f\}$ with $x_3\in e$ that already attains a makespan of $2$. Therefore, the node $x_3$ cannot be placed without increasing the makespan to $3$. On the other hand, extending the coloring $c$ by $c(v_n)=\mathrm{blue}$, as suggested by Figure \ref{fig:14jobs}, yields a feasible coloring of the nodes, suggesting that the offline partitioning given by the colors attains a makespan of $1$ in every scenario.

As a consequence, we obtain:

\begin{lemma}\label{lemma:twocolor}
    Any algorithm that assigns at least $7$ copies of $\mathcal S$ to at most two online colors each is no better than $3$-competitive.
\end{lemma}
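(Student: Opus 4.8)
The plan is to chain the two-color gadget result (Proposition~\ref{prop:bicolorededges}) with a pigeonhole step and a three-node "capstone" argument. Suppose an algorithm $\mathcal A$ assigns each of $7$ copies $\mathcal S(1),\dots,\mathcal S(7)$ of $\mathcal S$ to at most two online colors while keeping makespan $\le 2$ (otherwise $\mathcal A$ already has makespan $3$ on a hypergraph with optimum $1$, and we are done immediately). First I would apply the pigeonhole principle twice: among the $\binom{3}{2}=3$ possible color pairs, at least $\lceil 7/3\rceil=3$ copies, say $\mathcal S(1),\mathcal S(2),\mathcal S(3)$, use the \emph{same} pair of online colors, WLOG $\{1,2\}$. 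By Proposition~\ref{prop:bicolorededges}, for each such copy there is an offline coloring $c_t\colon V(\mathcal S(t))\to\{\offlinecolors\}$ of makespan $1$ and online-monochromatic hyperedges $e_1^t\in E(\mathcal S(t)[J_1])$, $e_2^t\in E(\mathcal S(t)[J_2])$ whose nodes avoid one offline color; post-composing $c_t$ with a permutation $\pi_t$ of $\{\offlinecolors\}$ on that connected component, I can arrange that the three avoided colors are \emph{pairwise distinct}, i.e.\ copy $\mathcal S(t)$ avoids offline color $u_t$ with $\{u_1,u_2,u_3\}=\{\offlinecolors\}$.

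Next I would introduce the capstone nodes $x_1,x_2,x_3$ one at a time, as in Figure~\ref{fig:14jobs}. Node $x_1$ receives size-$2$ hyperedges to one node of $e_1^1$ and one node of $e_2^1$; since both of those are already online-monochromatic of size $2$, extending either to size $3$ by using online color $1$ (resp.\ $2$) would give makespan $3$, so $x_1$ must take online color $3$. The same construction with copy $\mathcal S(2)$ forces $x_2$ to online color $3$ as well. Finally $x_3$ is joined by hyperedges to $x_1$, to $x_2$, and to an appropriate pair of nodes in $\mathcal S(3)$: the first two edges forbid online color $3$, and the remaining edges (lying in the online-monochromatic size-$2$ edges $e_1^3,e_2^3$) forbid online colors $1$ and $2$. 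Hence $x_3$ cannot be colored without creating a monochromatic hyperedge of size $3$, so $\mathcal A$'s makespan is $3$ on this instance. Simultaneously I would exhibit an offline coloring of makespan $1$: inside each component $\mathcal S(t)$ use $c_t$ (which already avoids $u_t$), and then set $c(x_i)\coloneqq u_i$ for $i\in\{1,2,3\}$; because $x_1,x_2,x_3$ are pairwise nonadjacent in the final hypergraph and each $x_i$ is adjacent only to nodes whose $c_t$-color differs from $u_i$ (that is exactly what "avoiding $u_i$" buys us), no hyperedge is monochromatic offline. Therefore the optimum is $1$, the online makespan is $3$, and no $r<3$ is achievable — proving Lemma~\ref{lemma:twocolor}.

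The step I expect to be the main obstacle is the bookkeeping in the capstone argument: verifying that the specific incidences drawn in Figure~\ref{fig:14jobs} really do (a) force $x_1,x_2$ to the third online color using only the two monochromatic size-$2$ edges per copy, and (b) leave $x_3$ with no legal online color, \emph{while} the designated offline colors $u_1,u_2,u_3$ remain consistent — in particular that $x_3$'s hyperedges into $\mathcal S(3)$ land on nodes whose offline $c_3$-color is not $u_3$, so that setting $c(x_3)=u_3$ is safe. Everything else (the two pigeonhole applications, the permutation trick to diversify the avoided colors, and the reduction "makespan $\ge 3$ on optimum-$1$ instances $\Rightarrow$ not $r$-competitive for $r<3$") is routine once Proposition~\ref{prop:bicolorededges} is in hand. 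One should also note that the total node count stays bounded — seven copies of a $7$-node gadget plus three capstone nodes gives $52$ nodes and a constant number of hyperedges — which is what feeds the hyperedge bound in Theorem~\ref{thm:nobetterthan3}.
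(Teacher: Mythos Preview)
Your overall strategy matches the paper's: pigeonhole to three copies on the same online color pair, apply Proposition~\ref{prop:bicolorededges}, permute offline colors so that the three avoided colors are pairwise distinct, then attach three capstone nodes. The gap is in the capstone incidences. You give $x_1$ size-$2$ hyperedges to \emph{one} node of $e_1^1$ and \emph{one} node of $e_2^1$ and then claim that assigning $x_1$ online color~$1$ ``extends $e_1^1$ to size~$3$''. It does not: $e_1^1$ is a fixed hyperedge of $\mathcal S(1)$, and your new edge $\{x_1,a\}$ contributes only $|\{x_1,a\}\cap J_1|=2$ to the makespan. The same problem recurs at $x_3$: the size-$2$ edges $\{x_3,x_1\}$ and $\{x_3,x_2\}$ (which you add, despite later calling $x_1,x_2,x_3$ ``pairwise nonadjacent'') do not forbid color~$3$, since each would contain only two $J_3$-nodes. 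With your incidences the online algorithm can color every $x_t$ arbitrarily and still keep makespan~$2$, so nothing is forced.

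What Figure~\ref{fig:14jobs} actually does is introduce size-$3$ hyperedges: for each $t\in\{1,2,3\}$ and each $i\in\{1,2\}$ the new hyperedge is $e_i^t\cup\{x_t\}$ (containing \emph{both} endpoints of the online-monochromatic pair), and the three capstones lie together in a single edge $f=\{x_1,x_2,x_3\}$. Then assigning $x_t$ online color $i\in\{1,2\}$ makes $e_i^t\cup\{x_t\}\subseteq J_i$ monochromatic of size~$3$, which forces $\varphi(x_1)=\varphi(x_2)=3$; after that, any choice for $x_3$ yields a monochromatic triple (either some $e_i^3\cup\{x_3\}$ or $f$). Your proposed offline assignment $c(x_t)=u_t$ is exactly right for these enlarged edges, since the two nodes of $e_i^t$ are adjacent in $\mathcal S(t)$ and both avoid $u_t$, hence carry the two colors $\neq u_t$, and $f$ then carries all three offline colors. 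With this single correction your argument coincides with the paper's.
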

 \begin{figure}
    \centering
    \begin{tikzpicture}
    \clip (-3,-1.5) rectangle + (12.2,3.9);
    \draw[fill=gray!30, thick] (3.45,0.45) -- (4.45,0.45) -- (3.45,-0.55) --cycle;
        \draw[fill=gray!30, thick] (0.45,0.45) -- (1.45,0.45) -- (0.45,-0.55) --cycle;

                \draw[fill=gray!30, thick] (6.45,0.45) -- (7.45,0.45) -- (6.45,-0.55) --cycle;
       \draw[draw=black, thick, fill=gray!30] (0.45,1.45) to [out=0, in=180] (1.45,1.45) .. controls (0,3.5) and (-1.5,1) .. node[below right]{$e_1^1$}(0.45,-0.55) to [bend left=40](0.45,1.45);
       
        \draw[draw=black, thick, fill=gray!30] (3.45,1.45) to [out=0, in=180] (4.45,1.45) .. controls (3,3.5) and (1.5,1) .. node[below right]{$e_1^2$}(3.45,-0.55) to [bend left=40](3.45,1.45);
        \draw[draw=black, thick, fill=gray!30] (6.45,1.45) to [out=0, in=180] (7.45,1.45) .. controls (6,3.5) and (4.5,1) .. node[below right]{$e_1^3$}(6.45,-0.55) to [bend left=40](6.45,1.45);
         \draw[draw=black, thick, fill=gray!30] (0.45,-0.55) to [out=0, in=180] (3.45,-0.55) to (6.45,-0.55) .. controls (4.45,-1.5) and (2.45,-1.5) .. node[above right]{$f$}(0.45,-0.55);
        \draw  node [myred](2) at (0.45,0.45) {$j_2^1$};
         \draw  node [myred](1) at (0.45,1.45) {$j_1^1$};
     \draw  node [myblue](4) at (1.45,0.45) {$j_2^2$};
     \draw  node [myblue](3) at (1.45,1.45) {$j_1^2$};
 
      \draw  node [myblue](6) at (3.45,0.45) {$j_2^3$};
      \draw  node [myblue](5) at (3.45,1.45) {$j_1^3$};
      \draw  node [myyellow](8) at (4.45,0.45) {$j_2^4$};
      \draw  node [myyellow](7) at (4.45,1.45) {$j_1^4$};

          \draw  node [myyellow](10) at (6.45,0.45) {$j_2^5$};
       \draw  node [myyellow](9) at (6.45,1.45) {$j_1^5$};
       \draw  node [myred](12) at (7.45,0.45) {$j_2^6$};
       \draw  node [myred](11) at (7.45,1.45) {$j_1^6$};

       \draw  node [myyellow](13) at (0.45,-0.55) {$x_1$};
       \draw  node [myred](14) at (3.45,-0.55) {$x_2$};
        \draw  node [myblue](15) at (6.45,-0.55) {$v_n$};
       \draw  node [] at (-2.45,1.45) {$J_1$};
       \draw  node [] at (-2.45,0.45) {$J_2$};      \draw  node [] at (-2.45,-0.55) {$J_3$};
  
\node at (barycentric cs:4=1,2=0.7,13=1) {$e_2^1$};
\node at (barycentric cs:8=1,6=0.7,14=1) {$e_2^2$};
\node at (barycentric cs:12=1,10=0.7,15=1) {$e_2^3$};
\end{tikzpicture}
    \caption{All fifteen relevant nodes together with the  hyperedges within them. Notice that any assignment of the node $v_n$ leads to a makespan of $3$.} 
    \label{fig:14jobs}
\end{figure}
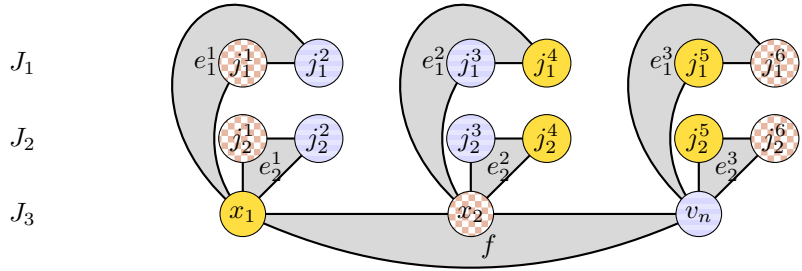

\subsection{Subgraph $\mathcal S$ on Three Online Colors: The Palettes}
  It remains to consider the case where copies of the subhypergraph $\mathcal S$ are partitioned among the three colors, i.e., its nodes $\{1,\ldots,7\}$ are partitioned in nonempty sets $ J_1\dot\cup  J_2\dot\cup  J_3$. 

\begin{lemma}\label{lemma:paletteexists}
    For every online coloring $J_1\dot\cup J_2\dot \cup J_3$ of the nodes $V(\mathcal S)=\{v_1,\ldots,v_7\}$ for nonempty sets $ J_1\dot\cup  J_2\dot\cup  J_3$, there is an offline coloring $c\colon V(\mathcal S)\to \{\mathrm{red}, \mathrm{blue},\mathrm{yellow}\}$ such that two nodes exist with different offline colors but the same online color. 
\end{lemma}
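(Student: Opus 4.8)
The plan is to reduce the statement to a short computation in the partition lattice. First I would note that, since $E(\mathcal S)$ is closed under taking $2$-element subsets, an offline coloring (a coloring of makespan $1$) is exactly a proper $3$-coloring of the graph $G$ on $\{1,\dots,7\}$ whose edges are the pairs occurring in some hyperedge of $\mathcal S$, namely the three triangles on $\{1,2,3\}$, $\{2,4,5\}$ and $\{4,6,7\}$ together with the two extra edges $\{3,6\}$ and $\{3,7\}$. So it suffices to understand the proper $3$-colorings of $G$.

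Next I would enumerate them. Fixing the colors on the triangle $\{1,2,3\}$, the triangle $\{2,4,5\}$ forces $\{c(4),c(5)\}$ to be the two colors different from $c(2)$; of the two orderings, the branch $c(4)=c(1)$ makes the triangle $\{4,6,7\}$ together with the edges $\{3,6\},\{3,7\}$ unsatisfiable, since $c(6),c(7)$ would both be forced onto the single colour different from $c(1)$ and $c(3)$, contradicting the edge $\{6,7\}$. Hence $c(4)=c(3)$, $c(5)=c(1)$, and then $\{c(6),c(7)\}$ consists of the two colors different from $c(3)$, in either order. Therefore, up to permuting the names of the colors, the color classes of any proper $3$-coloring of $G$ form one of the two partitions
\[
P_A=\bigl\{\{1,5,6\},\{2,7\},\{3,4\}\bigr\},\qquad
P_B=\bigl\{\{1,5,7\},\{2,6\},\{3,4\}\bigr\}.
\]

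Now fix an online coloring $V(\mathcal S)=J_1\dot\cup J_2\dot\cup J_3$ into nonempty parts and suppose, for contradiction, that the conclusion fails, i.e.\ every offline coloring $c$ gives equal color only to vertices lying in a common part $J_i$. Equivalently, the color-class partition of every proper $3$-coloring of $G$ refines $\{J_1,J_2,J_3\}$. In particular $\{J_1,J_2,J_3\}$ is a common coarsening of $P_A$ and $P_B$, hence a coarsening of their join $P_A\vee P_B$ in the partition lattice. A direct check gives $P_A\vee P_B=\bigl\{\{1,2,5,6,7\},\{3,4\}\bigr\}$, which has only two blocks, so $\{J_1,J_2,J_3\}$ would have at most two nonempty parts — a contradiction. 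Consequently some proper $3$-coloring $c$ of $G$ does not refine $\{J_1,J_2,J_3\}$, so one of its color classes meets two distinct parts $J_i$ and $J_{i'}$; any two such vertices have the same offline color but different online colors, as desired.

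The only genuine work is the finite enumeration of proper $3$-colorings of $G$ and the verification that $P_A\vee P_B$ has two blocks; the main point to state carefully is the equivalence ``offline coloring $=$ proper $3$-coloring of $G$'' (using closure of $E(\mathcal S)$ under $2$-subsets) together with the reduction to the two color-class patterns $P_A,P_B$. I expect the branch elimination $c(4)=c(1)$ in the enumeration to be the step most in need of care.
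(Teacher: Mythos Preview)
Your proof is correct. The reduction of makespan-$1$ offline colorings to proper $3$-colorings of the underlying graph $G$ is valid precisely because $E(\mathcal S)$ is closed under $2$-element subsets; the enumeration yielding exactly the two color-class partitions $P_A=\{\{1,5,6\},\{2,7\},\{3,4\}\}$ and $P_B=\{\{1,5,7\},\{2,6\},\{3,4\}\}$ is accurate (including the elimination of the branch $c(4)=c(1)$, where the edges $\{3,6\},\{3,7\}$ together with the triangle $\{4,6,7\}$ force $c(6)=c(7)$); and the partition-lattice step is sound, since any common coarsening of $P_A$ and $P_B$ coarsens $P_A\vee P_B=\{\{1,2,5,6,7\},\{3,4\}\}$ and hence has at most two nonempty blocks.

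The paper does not include a proof of this lemma in the present version; the surrounding text suggests a direct case distinction over the online partition (in the spirit of the ``careful case distinction'' cited for Proposition~\ref{prop:bicolorededges}), deferred to the full version. Your route is genuinely different and somewhat more conceptual: instead of casing on the many possible online $3$-partitions of seven nodes, you case on the offline side, where only two partitions arise, and let the join in the partition lattice replace the case analysis. What this buys you is brevity and a clear explanation of \emph{why} the gadget $\mathcal S$ works --- its hyperedge structure constrains the proper $3$-colorings so tightly that the two surviving patterns already have a $2$-block join. A direct case analysis, by contrast, would more naturally produce explicit witnesses (the two vertices and the offline coloring) for each online partition, which is convenient downstream when the adversary actually has to name the palette nodes; your existence argument still suffices for that, since once one knows a suitable $c$ exists among the twelve colorings with pattern $P_A$ or $P_B$, one can simply pick it.
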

\begin{proof}
    We fix the offline coloring in Figure \ref{fig:allseven}, i.e., nodes $3,4$ are red, nodes $2,7$ are blue and nodes $1,5,6$ are yellow. The only case where the claim does not hold for the coloring is when each $J_i$ consists of nodes of precisely one offline color. In other words, up to a permutation of the colors, we have $J_1=c^{-1}(\{\mathrm{red}\})$, $J_2=c^{-1}(\{\mathrm{blue}\})$, $J_3=c^{-1}(\{\mathrm{yellow}\})$. In this case, we alter the offline coloring as follows: $1$ and $5$ are blue instead of yellow, and $2$ is yellow instead of blue. It is straightforward to check that this is a feasible coloring, see Figure \ref{fig:proofpalette}.
        \begin{figure}
    \centering
    \begin{tikzpicture}
          \coordinate (C) at (0.45,2.45);
      \coordinate(A) at (0.45,0.45);
     \coordinate(B) at (0.45,1.45) ;
     \coordinate(E) at  (1.45,0.45);
     \coordinate(D) at (1.45,2.45);
     \coordinate(F) at (2.45,0.45);
         \coordinate(G) at  (1.45,1.45);

         \draw[fill=gray!30, thick] (A) to (B) to (C) to[bend right=50](A) -- cycle;
\node at (barycentric cs:B=1,A=0.7,C=1) {};
  \draw[fill=gray!30, thick] (B) to (D) to [bend right=50](E) -- cycle;
\node at (barycentric cs:D=1,B=0.7,E=1) {};
  \draw[fill=gray!30, thick] (G) -- (D) -- (F) -- cycle;
\node at (barycentric cs:G=1,D=0.7,F=1) {};
    \draw[thick] (G) to node[below]{} (C) ;
     \draw[thick] (F) to [out=60, in=60, looseness=2] node[below]{} (C) ;
     \draw  node [myred](3) at (0.45,2.45) {$3$};
     \draw  node [myblue](1) at (0.45,0.45) {$1$};
     \draw  node [myyellow](2) at (0.45,1.45) {$2$};
     \draw  node [myblue](5) at (1.45,0.45) {$5$};
     \draw  node [myred](4) at (1.45,2.45) {$4$};
      \draw  node [myyellow](6) at (2.45,0.45) {$6$};
         \draw  node [myblue](7) at (1.45,1.45) {$7$};
\end{tikzpicture}
    \caption{The second case in the proof of Lemma \ref{lemma:paletteexists}. Names of hyperedges are left out for brevity.}
    \label{fig:proofpalette}
\end{figure}
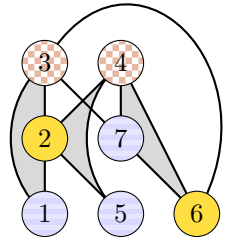
\end{proof}

 In our construction, we exploit this property as follows: Let us consider a copy $\mathcal S(t)$ and fix two of its  nodes $v_i(t)\in J_i$, $v_{i'}(t)\in J_{i'}$ provided by Lemma \ref{lemma:paletteexists} as well as a third node $v_{i''}(t)\in J_{i''}$, where $i''\notin \{i,i'\}$. By their choice, the nodes $v_i(t)$ ($i\in \{1,2,3\}$) admit at most two offline colors and avoid a third offline color $C\in \{\mathrm{blue},\mathrm{red},\mathrm{yellow}\}$. 
 
In this case, the subhypergraph $\mathcal S(t)$ is called a \emph{$ C$ palette} (i.e., \emph{yellow palette}, \emph{blue palette} or \emph{red palette}) and the corresponding nodes $v_i(t)$ are called \emph{palette nodes}. In particular, $c(v_i(t))\neq C$. The choice of the name is motivated by the fact that, if we connect a new node $w$ to e.g.~yellow palette nodes, we are still allowed to extend the offline coloring by $c(w)=\mathrm{yellow}$, while maintaining that the makespan of the offline coloring is $1$.

 \subsection{The Construction}
 
 Below is a summary of our construction for the case where Lemma \ref{lemma:twocolor} is not applicable. Essentially, we create copies of the subgraph $\mathcal S$ which we consecutively connect with the existing hypergraph in varying ways.
 Below is a summary of our construction for the case where Lemma \ref{lemma:twocolor} is not applicable. Essentially, we create copies of the subhypergraph $\mathcal S$ which we connect with the existing hypergraph. The online color of the node $v$ is denoted by $\varphi(v)$ and its (current) offline color by $c(v)$.
 \begin{enumerate}[(i)]
     \item \label{item:one}We create up to $13$ copies of the subhypergraph $\mathcal S$. By Lemma \ref{lemma:twocolor}, we may assume that at least $7$ of the copies are assigned to all three online colors (we truncate this step as soon as $7$ such copies can be found). We handpick $7$ such copies of $\mathcal S$, calling them $\mathcal S(1),\ldots, \mathcal S(7)$. 

     \item \label{item:two}We permute the offline coloring on each copy of $\mathcal S$ so that $\mathcal S(1)$, $\mathcal S(4)$ and $\mathcal S(7)$ are yellow palettes,
 $\mathcal S(2)$ and $\mathcal S(5)$ are blue palettes,
and $\mathcal S(3)$ and $\mathcal S(6)$ are red palettes.

     \item \label{item:three}We reveal a node $w_\mathcal A$ which is connected to palette nodes $v_i(1)\in V(\mathcal S(1))\cap J_i$ for $i\in \{1,2,3\}$. The assignment to the online color $\varphi(w_\mathcal A)$ leads to a monochromatic hyperedge $\{v_{\varphi(w_\mathcal A)}(1), w_\mathcal A\}$ of online color $\varphi(w_\mathcal A)$. We define $v_\mathcal A\coloneqq v_{\varphi(w_\mathcal A)}(1)$.

     \item \label{item:four} We reveal three nodes $w_{\mathcal B1}$,
$w_{\mathcal B2}$ and $w_{\mathcal B3}$ that are connected by the hyperedge $\{w_{\mathcal B1},w_{\mathcal B2},w_{\mathcal B3}\}$ to each other. Moreover, for each $s\in \{1,2,3\}$, the node $w_{\mathcal Bs}$ is connected to the palette nodes $v_i(s+1)$ for $i\in \{1,2,3\}$. 
There is an $i\neq \varphi(w_\mathcal A)$ such that $\{v_i(s+1), w_{\mathcal Bs}\}$ is a monochromatic edge of online color $i$ for a suitable copy $\mathcal S (s)$, $s\in \{1,2,3\}$. For this $s$, we define $v_\mathcal B\coloneqq v_i(s+1)$ and $w_\mathcal B\coloneqq w_{\mathcal Bs}$.

\item \label{item:five}We reveal further copies of $\mathcal S$ together with edges described below, until a copy $\mathcal C$ is assigned to all three online colors. In the copy $\mathcal C$, we denote the nodes $v_i(t)$ by $w_{\mathcal Ct}$ for clarity.
The nodes $w_{\mathcal Ct}$ of the copy $\mathcal C$ are connected to the palettes $\mathcal S(5)$, $\mathcal S(6)$, $\mathcal S(7)$ by edges depending on their offline colors, in that nodes $w_{\mathcal Ct}$ with $c(w_{\mathcal Ct})=C$ are connected to the palette nodes $v_i(s)$ for all $i\in \{1,2,3\}$ and  $s\in \{5,6,7\}$ such that $\mathcal S(s)$ is a $C$ palette, see Figure \ref{fig:csketch}.
As the nodes $w_{\mathcal Ct}$ are assigned to all three online colors, a makespan of $2$ is also achieved in the online color $i\notin \{\varphi(w_\mathcal A), \varphi(w_\mathcal B)\}$. More precisely, there exists a monochromatic edge $\{v_i(s), w_{\mathcal Ct}\}$ of the online color $i$. We rename $v_\mathcal C\coloneqq v_i(s)$ and $w_\mathcal C\coloneqq w_{\mathcal Ct}$. 

\item \label{item:six} We permute the offline coloring in each of the connected components created in \eqref{item:three}, \eqref{item:four} and \eqref{item:five} again, so that $c(v_j)=\mathrm{blue}$ and $c(w_j)=\mathrm{yellow}$ for $j\in \{\mathcal A, \mathcal B, \mathcal C\}$.

\item \label{item:seven} We add a final node $v_n$ together with hyperedges $\{v_n, v_j, w_j\}$ for each $j\in \{\mathcal A, \mathcal B, \mathcal C\}$. Now, the online color $\varphi(v_n)$ that this node is assigned to admits a makespan of $3$, while extending the offline coloring by $c(v_n)=\mathrm{red}$ maintains an offline coloring with makespan $1$. 
 \end{enumerate}

  \begin{figure}[h!]
    \centering
    \begin{tikzpicture}
      \draw[rounded corners, fill=gray!10] (0, -1) rectangle (4, 2);
    \draw[draw=black, thick, fill=gray!30] (3,1.45) to [out=0, in=180] (4.45,1.45) to [bend left=40] node[above]{$e_1$}(6,0.45) to  [out=0, in=180] (6,0.45) .. controls (6,2) and (5,3) ..
    (3,1.45);

        \draw[draw=black, thick, fill=gray!30] (3,0.45) to [out=0, in=180] (4.45,0.45) to node[above]{$e_2$}(6,0.45) to  [out=0, in=180] (6,0.45) .. controls (6,0.5) and (5,1.8) ..
    (3,0.45);

    \draw[draw=black, thick, fill=gray!30] (3,-0.55) to [out=0, in=180] (4.45,-0.55) to [bend right=40] node[below]{$e_3$}(6,0.45) to  [out=0, in=180] (6,0.45) .. controls (6,-1.1) and (5,-2.1) ..
    (3,-0.55);
 
        \draw  node [myblue](v2) at (3,0.45) {$v_2$};
      \draw  node [myblue](v1) at (3,1.45) {$v_1$};
       \draw  node [myyellow](w2) at (4.45,0.45) {$w_2$};
        \draw  node [myyellow](w1) at (4.45,1.45) {$w_1$};
           \draw  node [myyellow](w3) at (4.45,-0.55) {$w_3$};
\draw  node [myblue](v3) at (3,-0.55) {$v_3$};
\draw  node [myred](n) at (6,0.45) {$n$};
       \draw  node [] at (-2.45,1.45) {$J_1$};
       \draw  node [] at (-2.45,0.45) {$J_2$};      \draw  node [] at (-2.45,-0.55) {$J_3$};

\end{tikzpicture}
    \caption{The introduction of the last node $v_n$ leads to a makespan of $3$. Only the seven relevant nodes are shown, although the entire hypergraph has both an online coloring and a feasible offline coloring.} 
    \label{fig:highlevel}
\end{figure}
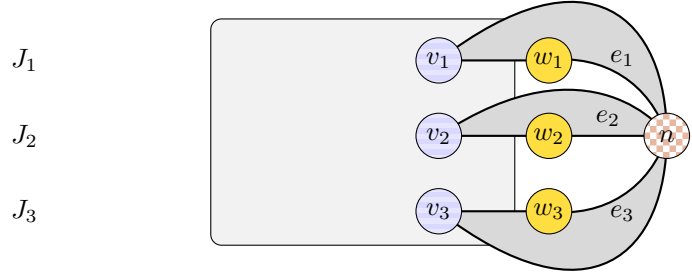

Step \eqref{item:one} is straightforward. To achieve the desired colors of $\mathcal S(1),\ldots, \mathcal S(7)$ as in step \eqref{item:two}, we find out the color $C$ of each $\mathcal S(t)$ (as a palette), and postcompose the coloring $c|_{V(\mathcal S(t))}\colon V(\mathcal S(t))\to \{\offlinecolors\}$ with the transposition \[\tau\colon \{\offlinecolors\}\to\{\offlinecolors\}\]
that swaps the color $C$ with the designated color of $\mathcal S(t)$ as stated in Step \eqref{item:two}.

\subparagraph{Step \eqref{item:three}: the subinstance $\mathcal A$.} Here, it suffices to connect the yellow palette $\mathcal S(1)$ to a new node $w_{\mathcal A}$ with the hyperedges $e=\{w_1, v_i(1)\}$ for $t\in \{1,2,3\}$. By the definition of the palette nodes $v_i(1)$, they are not yellow. Since the new node $w_{\mathcal A}$ is only connected to non-yellow nodes, we can extend the offline coloring by $c(w_{\mathcal A})=\mathrm{yellow}$, maintaining its feasibility. For the $i_{\mathcal A}$ which holds $w_{\mathcal A}\in J_{i_{\mathcal{A}}}$, we have $e=\{w_{\mathcal A}, v_{i_{\mathcal A}}(1)\}\in E(G[J_i])$ in particular. 

\subparagraph{Step \eqref{item:four}: the subinstance $\mathcal B$.}
In this stage of the construction, we use the three palettes $\mathcal S(2)$, $\mathcal S(3)$, $\mathcal S(4)$ with palette nodes $v_i(t)$ for $i\in \{1,2,3\}$ for each copy $t\in \{2,3,4\}$. Note that we are utilizing precisely one palette of each color.

Now we introduce three new nodes $w_{\mathcal B1}$,
$w_{\mathcal B2}$ and $w_{\mathcal B3}$. These nodes are to be understood as \emph{candidates} to the node $w_{\mathrm B}$ that we would like to define. They are connected to each other as well as to the three $v_i(t)$ of their respective palette, i.e., there exists the hyperedge
$\{w_{\mathcal B1},w_{\mathcal B2},w_{\mathcal B3}\}\in E$
 as well as hyperedges $\{w_{\mathcal Bt}, v_i(t+1)\}$ for $t\in \{1,2,3\}$ and $i\in \{1,2,3\}$. Accordingly, we may tentatively color the nodes as $c(w_{\mathcal B1})=\mathrm{yellow}$, $c(w_{\mathcal B2})=\mathrm{blue}$ and $c(w_{\mathcal B3})=\mathrm{red}$. 
 See Figure~\ref{fig:candidatesb} for a sketch of the introduced subhypergraph.

   \begin{figure}[h!]
    \centering
    \begin{tikzpicture}
      \draw[rounded corners, fill=yellow!30] (0, -1) rectangle (4, 2);

          \draw[rounded corners, fill=red!30] (0, -5) rectangle (4, -2);

               \draw[rounded corners, fill=blue!30] (8, -3) rectangle (12, -0);

            \coordinate  (w1) at (5,0);
      \coordinate (w2) at (5,-3);
         \coordinate(w3) at (7,-2);
       
   \draw[fill=gray!30, thick] (w1) to (w2) to(w3) to (w1);
        \draw  node [myblue](v21) at (3.5,0.45) {\tiny$v_2(2)$};
      \draw  node [myblue](v11) at (3.5,1.45) {\tiny$v_1(2)$};
       \draw  node [myblue](v22) at (3.5,-3.55) {\tiny$v_2(3)$};
        \draw  node [myyellow](v12) at (3.5,-2.55) {\tiny$v_1(3)$};
           \draw  node [myblue](v32) at (3.5,-4.55) {\tiny$v_3(3)$};
\draw  node [myblue](v31) at (3.5,-0.55) {\tiny$v_3(2)$};
\draw  node [myred](v13) at (8.5,-0.55) {\tiny$v_1(4)$};
\draw  node [myyellow](v23) at (8.5,-1.55) {\tiny$v_2(4)$};
\draw  node [myyellow](v33) at (8.5,-2.55) {\tiny$v_3(4)$};
\draw  node [](p1) at (2,0.45) {$\mathcal S(2)$};
\draw  node [](p2) at (2,-3.55) {$\mathcal S(3)$};
\draw  node [](p3) at (10,-1.55) {$\mathcal S(4)$};

   \draw  node [myyellow](w1) at (5,0) {$w_{\mathcal B1}$};
      \draw  node [myred](w2) at (5,-3) {$w_{\mathcal B2}$};
         \draw  node [myblue](w3) at (7,-2) {$w_{\mathcal B3}$};

       \draw  node [] at (-0.45,1.45) {$J_1$};
       \draw  node [] at (-0.45,0.45) {$J_2$};      
       \draw  node [] at (-0.45,-0.55) {$J_3$};
              \draw  node [] at (-0.45,-2.55) {$J_1$};
       \draw  node [] at (-0.45,-3.55) {$J_2$};      
       \draw  node [] at (-0.45,-4.55) {$J_3$};
                    \draw  node [] at (7.55,-0.55) {$J_1$};
       \draw  node [] at (7.55,-1.55) {$J_2$};      
       \draw  node [] at (7.55,-2.55) {$J_3$};
       \foreach \i in {1,...,3} {
      \foreach \j in {1,...,3}{
           \draw[draw=black, thick] (v\j\i) to (w\i);
        }
       }
\end{tikzpicture}
\caption{An example constellation of the three palettes together with the nodes $w_{\mathcal Bt}$. The $y$-axis in the palettes denote the online colorings of the palette nodes respectively. The offline colors of the palette nodes might look differently, but they are never the same color as the palette, which is depicted as the color of the rectangles. The online coloring of the nodes $w_{\mathcal Bi}$ are not yet known.}
\label{fig:candidatesb}
\end{figure}
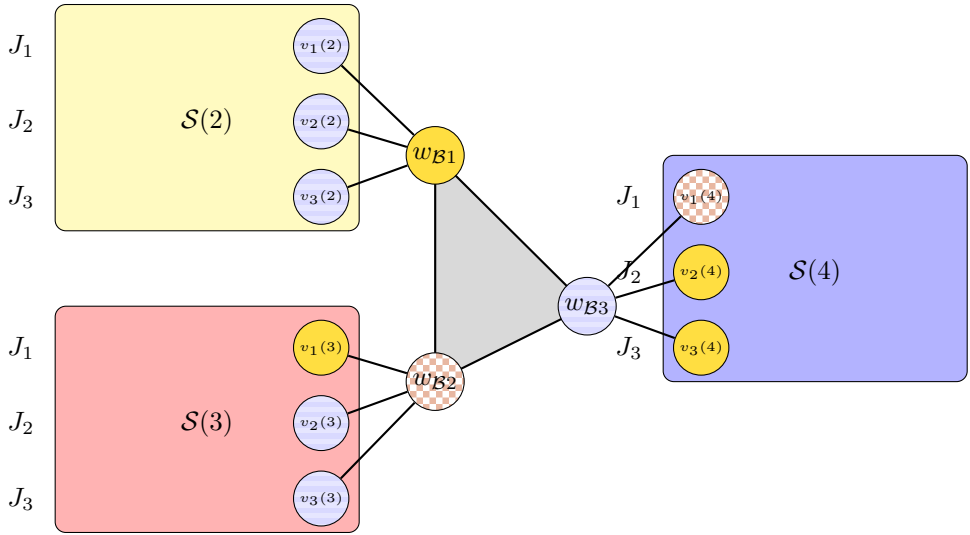

Now we would like to choose the node $w_{\mathcal B}$ ultimately. To this end, we consider any online assignment of the three candidates $w_{\mathcal Bt}$ ($t=1,2,3$). If $\varphi(w_{\mathcal B1})=\varphi(w_{\mathcal B2})=\varphi(w_{\mathcal B3})$, we obtain a makespan of $3$ and can immediately truncate the instance, as it no longer has a competitive ratio better than $3$. Hence, there are at least two online colors $i\in \{1,2,3\}$ that the candidate nodes $w_{\mathcal Bt}$ are assigned to, and at least one of them is not $i_A$. We pick such an online color and call it $i_B$. 

\subparagraph{Step \eqref{item:five}: the subinstance $\mathcal C$.}

To ensure that an edge $e_{i''}\in E(G[J_{i''}])$ is created in precisely the online color $i''\notin \{i,i'\}$, we introduce a number of candidate nodes $w_{\mathcal Cs}$ similar to the previous step. Here, it is integral that the nodes $w_{\mathcal Cs}$ are connected to each other in a way that is 
\begin{enumerate}[(i)]
    \item dense enough so that they are forced to utilize all three online colors, but also
    \item sparse enough that the offline coloring can be extended to the $w_{\mathcal Cs}$.
\end{enumerate}
Conveniently, we can recollect to our considerations about the subgraph $\mathcal S$: If we create seven copies of it, one of them is forced to attain all three online colors, or else we can complete the hypergraph in a way that no online coloring is better than $3$-competitive. 

Recalling that we may connect a node $w_j$ with the three nodes $v_i(t)$ of the palette in its offline color, we define the seven nodes as depicted in Figure \ref{fig:allseven}. More precisely, we create palettes $\mathcal S(5)$, $\mathcal S(6)$, $\mathcal S(7)$ and introduce nodes $w_{\mathcal C1},\ldots ,w_{\mathcal C7}$ together with the following hyperedges:
\begin{itemize}
    \item $\{w_{\mathcal C1}, w_{\mathcal C2},w_{\mathcal C3}\}$, $\{w_{\mathcal C2},w_{\mathcal C4}, w_{\mathcal C5}\}$, $\{w_{\mathcal C4}, w_{\mathcal C6}, w_{\mathcal C7}\}$, $\{w_{\mathcal C3}, w_{\mathcal C6}\}$, $\{w_{\mathcal C3}, w_{\mathcal C7}\}$, as well as their subsets of size $2$ (so that $G[\{w_{\mathcal C1},\ldots ,w_{\mathcal C7}\}]\cong\mathcal S$),
    \item $\{v_i(5), w_{\mathcal Cs}\}$ for $i\in \{1,2,3\}$, $s\in \{1,5,6\}$, i.e., between yellow palette nodes and new nodes colored yellow in Figure \ref{fig:allseven},
    \item $\{v_i(6), w_{\mathcal Cs}\}$ for $i\in \{1,2,3\}$, $s\in \{2,7\}$, i.e., between blue palette nodes and new nodes colored blue in Figure \ref{fig:allseven},
    \item $\{v_i(7), w_{\mathcal Cs}\}$ for $i\in \{1,2,3\}$, $s\in \{3,4\}$, i.e., between red palette nodes and new nodes colored red in Figure \ref{fig:allseven}.
\end{itemize}
An overview of the construction, without specifying the online colors of the nodes $w_{\mathcal Cs}$, is given in Figure \ref{fig:csketch}.

    \begin{figure}[h!]
    \centering
    \begin{tikzpicture}
          \coordinate (C) at (0.45,0.45);
      \coordinate(A) at (0.45,1.45);
     \coordinate(B) at (1.45,1.45);
     \coordinate(E) at (1.45,2.45);
     \coordinate(D) at (2.45,1.45);
     \coordinate(F) at (2.45,0.45);

         \coordinate(G) at (3.45,1.45);

               \draw[rounded corners, fill=yellow!30] (-4, 3) rectangle (0, 6);

          \draw[rounded corners, fill=red!30] (-4, -3) rectangle (0, 0);

               \draw[rounded corners, fill=blue!30] (4, -3) rectangle (8, 0);
\draw  node [myblue](v11) at (-0.5,5.45) {\tiny$v_1(1)$};
\draw  node [myblue](v21) at (-0.5,4.45) {\tiny$v_2(1)$};
\draw  node [myblue](v31) at (-0.5,3.45) {\tiny$v_3(1)$};

\draw  node [myyellow](v12) at (-0.5,-0.55) {\tiny$v_1(2)$};
\draw  node [myblue](v22) at (-0.5,-1.55) {\tiny$v_2(2)$};
\draw  node [myblue](v32) at (-0.5,-2.55) {\tiny$v_3(2)$};

\draw  node [myred](v13) at (4.5,-0.55) {\tiny$v_1(3)$};
\draw  node [myyellow](v23) at (4.5,-1.55) {\tiny$v_2(3)$};
\draw  node [myyellow](v33) at (4.5,-2.55) {\tiny$v_3(3)$};

         \draw[fill=gray!30, thick] (A) -- (B) -- (C) -- cycle;
\node at (barycentric cs:B=1,A=0.7,C=1) {};
  \draw[fill=gray!30, thick] (B) -- (D) -- (E) -- cycle;
\node at (barycentric cs:D=1,B=0.7,E=1) {};
  \draw[fill=gray!30, thick] (G) -- (D) -- (F) -- cycle;
\node at (barycentric cs:G=1,D=0.7,F=1) {};
    \draw[thick] (G) to [out=340, in=270, looseness=1.6] node[below]{} (C) ;
     \draw[thick] (F) to [bend left=50] node[below]{} (C) ;
     \draw  node [myred](3) at (0.45,0.45) {$w_{\mathcal C3}$};
     \draw  node [myyellow](1) at (0.45,1.45) {$w_{\mathcal C1}$};
     \draw  node [myblue](2) at (1.45,1.45) {$w_{\mathcal C2}$};
     \draw  node [myyellow](5) at (1.45,2.45) {$w_{\mathcal C5}$};
     \draw  node [myred](4) at (2.45,1.45) {$w_{\mathcal C4}$};
      \draw  node [myyellow](6) at (3.45,1.45) {$w_{\mathcal C6}$};
         \draw  node [myblue](7) at (2.45,0.45) {$w_{\mathcal C7}$};

     \foreach \j in {1,2,3}{
           \draw[draw=black, thick] (v\j1) to (1);
      }
    \foreach \j in {1,2,3}{
           \draw[draw=black, thick] (v\j1) to (5);
      }
      \foreach \j in {1,2,3}{
           \draw[draw=black, thick] (v\j1) to [out=0, in=100, looseness=1](6);
      }
          \foreach \j in {1,2,3}{
           \draw[draw=black, thick] (v\j2) to (3);
      }
                \foreach \j in {1,2,3}{
           \draw[draw=black, thick] (v\j2) to [out=0, in=250, looseness=0.2](4);
      }
                \foreach \j in {1,2,3}{
           \draw[draw=black, thick] (v\j3) to [out=180, in=270, looseness=1.3](2);
      }
                    \foreach \j in {1,2,3}{
           \draw[draw=black, thick] (v\j3) to(7);
      }

      \draw  node [](p1) at (-2,4.45) {$\mathcal S(5)$};
\draw  node [](p2) at (-2,-1.55) {$\mathcal S(7)$};
\draw  node [](p3) at (6,-1.55) {$\mathcal S(6)$};

       \draw  node [] at (-3.45,5.45) {$J_1$};
       \draw  node [] at (-3.45,4.45) {$J_2$};      
       \draw  node [] at (-3.45,3.45) {$J_3$};
              \draw  node [] at (-3.45,-0.55) {$J_1$};
       \draw  node [] at (-3.45,-1.55) {$J_2$};      
       \draw  node [] at (-3.45,-2.55) {$J_3$};
                    \draw  node [] at (7.55,-0.55) {$J_1$};
       \draw  node [] at (7.55,-1.55) {$J_2$};      
       \draw  node [] at (7.55,-2.55) {$J_3$};
\end{tikzpicture}
    \caption{Partial sketch of subinstance $\mathcal C$.}
    \label{fig:csketch}
\end{figure}
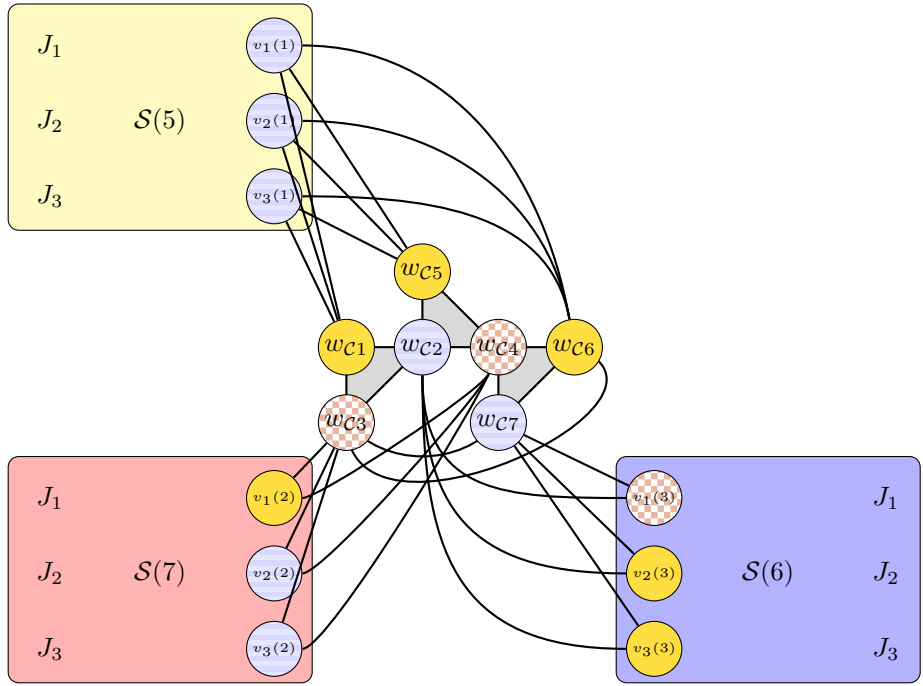

Recall that our goal is to find an edge of the form $\{v_{i''}(t), w_{\mathcal Cs}\}\subseteq J_{i''}$ for the $i''\notin \{i,i'\}$. If the nodes $w_{\mathcal Cs}$ for $s\in \{1,\ldots, 7\}$ are spread across the three online colors, there is an index $s\in \{1,\ldots, 7\}$ with the property that $w_{\mathcal Cs}\in J_{i''}$; for a suitable $t$ depending on the offline color of the node $w_{\mathcal Cs}$, we thus find the node $v_{i''}(t)$ and hence the aforementioned edge as desired.

In order to guarantee the existence of $w_{\mathcal Cs}\in J_{i''}$, we need to convince ourselves that the subgraph induced by $w_{\mathcal C1},\ldots, w_{\mathcal C7}$ is partitioned on all three online colors. Conveniently, we have seen in Subsection \ref{subs:pigeon7} that if we create seven copies of the subgraph $G[\{w_{\mathcal C1},\ldots ,w_{\mathcal C7}\}]$, one of the copies must admit all three online colors or else the assignment cannot be better than $3$-competitive by Lemma \ref{lemma:twocolor}. Therefore, we create seven copies of the subgraph induced by $w_{\mathcal C1},\ldots, w_{\mathcal C7}$, combining the nodes in each copy with the corresponding palette nodes.

In total, one of the seven copies of $w_{\mathcal C1},\ldots, w_{\mathcal C7}$ must admit all three online colors, in particular the online color $i''$. For a node $w_{\mathcal Cs}$ attaining this online color, we find a palette node $v_{i''}(t)$ such that $\{v_{i''}(t), w_{\mathcal Cs}\}\in J_{i''}$. Now we compose the offline coloring $c$ obtained so far with a permutation of the three offline colors such that $c(v_{i''}(t))=\mathrm{blue}$ and $c(w_{\mathcal Cs})=\mathrm{yellow}$.

The reader might wonder why we need all three components; indeed, we will have three hyperedges of the type $e_i$ already in the underlying hypergraph of $\mathcal C$ alone. However, we require that the hyperedges $e_i$ are each colored in blue and yellow endpoints, which is not possible when we take all of them from $\mathcal C$.

\subparagraph{Step \eqref{item:six}: Recoloring the connected components.}
Finally, we permute the colors in the three connected components created in Steps \eqref{item:three}, \eqref{item:four} and \eqref{item:five} (as shown in Figure \ref{fig:candidatesb}) such that $c(w_{\mathcal B{i_{\mathcal B}}})=\mathrm{yellow}$ and $c(v_{i_{\mathcal B}}(i_{\mathcal B}))=\mathrm{blue}$. 
See Figure \ref{fig:candidatesbperm} for an example of the permutation process.

   \begin{figure}[h!]
    \centering
    \begin{tikzpicture}

               \draw[rounded corners, fill=yellow!30] (8, -3) rectangle (12, -0);

\draw  node [myred](v13) at (8.5,-0.55) {\tiny$v_1(4)$};
\draw  node [myblue](v23) at (8.5,-1.55) {\tiny$v_2(4)$};
\draw  node [myblue](v33) at (8.5,-2.55) {\tiny$v_3(4)$};

\draw  node [](p3) at (10,-1.55) {$\mathcal S(4)$};

      \draw  node [myyellow](w3) at (7,-2.55) {$w_{\mathcal B3}$};

                    \draw  node [] at (7.55,-0.55) {$1$};
       \draw  node [] at (7.55,-1.55) {$2$};      
       \draw  node [] at (7.55,-2.55) {$3$};
       \foreach \i in {1,...,3} {
           \draw[draw=black, thick] (v\i3) to (w3);
        }
\end{tikzpicture}
\caption{Permuting the colors of the palette $\mathcal S(4)$. This permutation was implied by the assignment $\varphi(w_{\mathcal B3})=i_{\mathcal B}=3$. We have to ensure that $c(v_3(3))=\mathrm{blue}$ and $c(w_{\mathcal B3})=\mathrm{yellow}$, which is achieved by transposing the offline colors yellow and blue. Of course, this transposition is applied to the entire component, such that e.g. $c(w_{\mathcal B1})=\mathrm{blue}$, which is omitted here.}
\label{fig:candidatesbperm}
\end{figure}
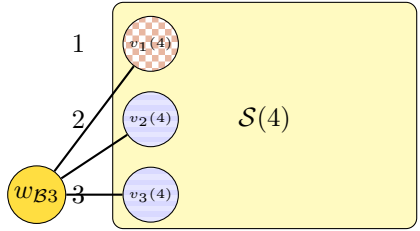 
We are now ready to execute Step \eqref{item:seven} and force instance to admit a makespan of $3$. To this end, we summarize our setting so far.

\begin{lemma}
    There exists a family $\mathcal I$ of instances over the node set $V$ together with feasible colorings $c\colon V\to \{\mathrm{blue}, \mathrm{red}, \mathrm{yellow}\}$ such that for every online assignment $\varphi\colon V\to \{1,2,3\}$, at least one instance $I\in \mathcal I$ satisfies one of the following properties:
    \begin{enumerate}[(i)]
        \item $\mathrm{makespan}(I)=3$, or
        \item there exist neighboring nodes $v_i(t),w_i\in J_i$ for each $i\in \{1,2,3\}$ such that $c(v_i(t))=\mathrm{blue}$ and $c(w_i)=\mathrm{yellow}$.
    \end{enumerate}    
\end{lemma}
\subparagraph{Step \eqref{item:seven}: the last red node}
As sketched in the introduction of this subsection (cf.~Figure \ref{fig:highlevel} in particular), we insert the very last node $n$ by combining it with the nodes $v_i(t)$ and $w_i$. More precisely, this last node is incident to the hyperedges $\{v_i(t), w_i, n\}$ for $i\in \{1,2,3\}$. No matter which online color it is assigned to, a monochromatic hyperedge of that particular online color emerges.

Notice that we have maintained a feasible offline coloring of the nodes so far up to the very last node, and the last node $n$ neighbors only blue and yellow nodes by construction and therefore the offline assignment $c(n)=\mathrm{red}$ extends the feasible coloring. In total, this yields that the offline optimal value is $1$, implying Theorem \ref{thm:nobetterthan3}. A careful enumeration of introduced hyperedges reveals that $233$ hyperedges suffice for any of the cases demonstrated, which we present next.

\subsection{Analysis}
The motivation for the aforementioned family of instances is that, already for a finite number $K$ of hyperedges, the competitive ratio is no better than the trivial bound of $3$. Although it is not necessarily the case that our construction has the minimal possible size, we would still like to analyze it so as to provide an upper bound to the smallest number of scenarios for which no algorithm with a non-trivial competitive ratio exists. To recollect our construction, we create seven copies of the subgraph $\mathcal S$ and apply a case distinction:

\begin{enumerate}[(A)]
    \item If every copy is partitioned among at most two online colors as in Subsection \ref{subs:pigeon7}, we insert three additional nodes $x_1$, $x_2$, $x_3$ and seven new hyperedges to obtain an adversarial instance. Together with the seven copies of $\mathcal S$, which have $7$ nodes and $14$ hyperedges each, this instance adds up to $7\cdot 7+3=52$ nodes and $7\cdot 14+7=105$ hyperedges.

    \item If the above case does not hold, the seven steps given previously are executed in order to create an adversarial instance. To be more precise:
    \begin{itemize}
        \item  Throughout the execution of the seven steps, at most $6$ of the copies are assumed to be partitioned onto exactly two online colors, which means up to $6\cdot 7=42$ nodes and $6\cdot 14=84$ hyperedges. 
        \item In addition to those copies, there are $7$ copies that are needed for Step \eqref{item:one}, i.e., $7\cdot 7=49$ nodes and $7\cdot 14=98$ hyperedges. 
        \item In Step \eqref{item:three}, we introduce one more node $w_\mathcal A$ with $3$ new hyperedges. 
        \item In Step \eqref{item:four}, we introduce $3$ new nodes $w_{\mathcal Bi}$ $(i\in \{1,2,3\})$. Each new node is connected to a palette by $3$ new hyperedges of size $2$ and to each other with one hyperedge of size $3$, meaning that Step \eqref{item:four} adds $3\cdot3+1=10$ new hyperedges. 
        \item Step \eqref{item:five} essentially adds an eighth copy of $\mathcal S$, i.e., $7$ nodes and $14$ hyperedges, and connects this copy to the existing instance using three hyperedges of size $2$ per inserted node, which adds up to $7\cdot3=21$ more hyperedges; i.e., $14+21=35$ for the entire step.
        \item Finally, there is the last node that leads to the competitive ratio of $3$, it is connected to the existing instance by $3$ hyperedges.
    \end{itemize}

\end{enumerate}
    In total, we have introduced up to $42+49+1+3+7+1=103$ nodes and up to $84+98+3+10+35+3=233$ hyperedges, as claimed in Theorem \ref{thm:nobetterthan3}.

\newpage
\section{Tight Bounds for Arbitrary $m$ and Sufficiently Many Scenarios}\label{sec:crazysec}

Our main goal in this section is to prove the following theorem.

\begin{theorem}\label{km_fullvers}
    Let $m\in \mathbb N_{\geq 1}$. There exists a number $N(m)$ such that there is no $(m-\varepsilon)$-competitive algorithm for any $\varepsilon>0$ for Online Makespan Hypergraph Coloring, even when restricted to hyperforests with $N(m)\leq (2m+1)\uparrow\uparrow 5$ (i.e., a tower of height $5$) nodes of dimension at most $m$.
\end{theorem}

In terms of scheduling, this translates to the following statement.

\begin{corollary}\label{thm:schedcor}
    Let $m\in \mathbb N_{\geq 1}$. There exists a number $K_m$ such that there is no $(m-\varepsilon)$-competitive algorithm for any $\varepsilon>0$, for Online Makespan Scheduling under $K$ Scenarios for $K\geq K_m$, even when restricted to unit processing times and $|S_k|\leq m$ for every scenario $S_k\in \{S_1,\ldots, S_{K_m}\}$.
\end{corollary}

One might wonder whether the number $K_m$ can be strengthened to a global $K$ for all possible numbers $m$ of machines. However, the generalization of Graham's List Scheduling Algorithm is $\max\{m,K+1\}$-competitive, even on varying processing times, answering this question negatively.

The reader might further notice the subtlety that Theorem \ref{thm:general} and Corollary \ref{thm:schedcor} are given in terms scenarios, while Corollary \ref{ackermann} and Theorem \ref{km_fullvers} are given in terms of nodes (which normally correspond to jobs). This is a deliberate choice; indeed, while our problem on scheduling parametrizes the makespan over the number $K$ of scenarios instead of the number $n$ of jobs, it is more common in hypergraph coloring literature to express the number $m$ of colors in terms of the number $n$ of nodes instead of the number $K$ of hyperedges. 

Moreover, as it is easier to count nodes in our construction, we do so throughout this section and converse our bounds to the number of hyperedges at the very end.

In an attempt to create a badly-behaving family of instances, let us turn the problem around: Now, we become the adversary and reveal for every new node $j\in [n]$ some hyperedges $e\subseteq [j]$ with the following properties:
\begin{enumerate}[(i)]
    \item $j\in e$,
    \item $e\setminus \{j\}$ has been revealed as a hyperedge before, unless $e=\{j\}$.
\end{enumerate}
The algorithm then assigns the new node to some online color, which we have no control over. 

As the adversary, we are once again interested in maintaining a feasible offline coloring in addition to the online coloring determined by the algorithm. Since we define our instance recursively, this is a much more complicated task. Therefore, we would like to maintain an \emph{active subhypergraph} $\mathcal G$ of our hypergraph $\mathcal H$ such that every connected component of $\mathcal G$ consists of exactly one maximal hyperedge, and these hyperedges are pairwise disconnected in the hypergraph $\mathcal H$. Accordingly, we also call the hyperedges $e\in E(\mathcal G)$ and their nodes $v\in e$ active, as well, as long as they are part of the active subhypergraph. 

Active hypergraphs serve a crucial purpose in our construction. If we reveal a new node $j\in [n]$ incident to multiple maximal hyperedges, at most one of these hyperedges can remain active, since they are no longer disconnected in $\mathcal H$. Throughout our construction, we always connect new nodes the existing hypergraph via extending active hyperedges. This way, we ensure that the entire hypergraph $\mathcal H$ remains a hyperforest, hence a feasible offline coloring with $d\leq m$ colors exists where $d$ is the maximum number of nodes of a hyperedge.

Inspired by the case distinction of hyperedges being split among two or three online colors in the previous construction, we describe a milestone which we intend to repeat to force larger monochromatic subgraphs. 

\begin{lemma}\label{lemma:xmd}
For \textsc{Online Makespan Hypergraph Coloring}$(m)$, there exists a family of instances $\mathcal I(m,d)$ with at most $X(m,d)$ nodes for a suitable number $X(m,d)$ each, on which the adversary achieves against any algorithm an online coloring $\varphi\colon [n]\to [m]$ with at least one of the following properties:
    \begin{enumerate}[(i)]
        \item For every online color $i\in [m]$, there is a node of color $i$, i.e., it holds that $\varphi^{-1}(\{i\})\neq \emptyset$. Moreover, there is at least one node of each online color that is active.
        \item For an online color $i\in [m]$, there exists an active monochromatic hyperedge of size $d$, i.e., $e\in E(\mathcal H)$, $|e|=d$ and $\varphi(j)=i$ for every $j\in e$.
    \end{enumerate}

    The number $X(m,d)$ is implicitly defined by the recursion 
    \begin{align*} 
    X(m,d)=\begin{cases}
        1,& d=1\text{ or }m=1\\X(m-1,d)\cdot((m-1)\cdot X(m,d-1)+1)+X(m,d-1), &\text{else}
    \end{cases}  
    \end{align*}
\end{lemma}
\begin{proof}
    We prove this statement by induction. Evidently for $m=1$, $X(1,d)=1$ node suffices to achieve the first property. Similarly for $d=1$, $X(m,1)=1$ node suffices to achieve the second property. 

    Now, we fix $(m,d)$ and assume that the statement holds for all $1\leq m'\leq m$, $1\leq d'\leq d$, $(m',d')\neq (m,d)$. First, we observe by induction that we know an instance $\mathcal I(m,d-1)$ with $X(m,d-1)$ nodes that either has a node of every online color, or a monochromatic $(d-1)$-dimensional hyperedge. If the former property holds, then the statement follows for $(m,d)$ as well. Therefore, we may assume without loss of generality that, the second property holds every time we reveal a copy of this instance (in the same order of nodes as revealed in the inductive assumption). 
    
    We execute $X(m-1,d)\cdot (m-1)+1$ copies of the instance $\mathcal I(m,d-1)$. Here, the number $X(m-1,d)$ is finite due to the inductive assumption. From each instance, we maintain solely the $(d-1)$-dimensional hyperedge (which exists by the second property of the lemma) as active. Each such hyperedge is assigned an online color. If there exists a hyperedge of every online color, then the first property holds and the statement is shown. Therefore, there are hyperedges of at most $(m-1)$ colors. By the pigeonhole principle, there is an online color (the online color $m$, without loss of generality) such that $\varphi^{-1}(\{m\})$ contains at least $X(m-1,d)+1$ pairwise node-disjoint hyperedges of size $d-1$.

    We have now secured $X(m-1,d)+1$ active hyperedges $e_1,\ldots, e_{X(m-1,d)+1}$ which we use to simulate \textsc{OMHC}$(m-1)$. To this end, we reveal a copy of the instance $\mathcal{I}(m-1,d)$. In addition to the hyperedges within $\mathcal I(m-1,d)$, we reveal for the $j$-th node the hyperedge $e_j\cup\{j\}$. If $j$ is assigned the online color $m$, the second property holds and our proof is completed. Therefore, we may assume that all nodes of the copy of $\mathcal I(m-1,d)$ are assigned one of the online colors $\{1,\ldots, m-1\}$. After every such assignment, exactly one hyperedge $e_j$ is deactivated, so that the $(d-1)$-dimensional hyperedge $e_{X(m-1,d)+1}$, which is monochromatic of color $m$, remains active. 
    
    Now we observe: Since the subhypergraph induced by the last $X(m-1,d)$ nodes is isomorphic to $\mathcal I(m-1,d)$ and the nodes must be partitioned among $m-1$ colors, one of the two properties hold; either there is a monochromatic $d$-dimensional hyperedge, or all online colors $1,\ldots, m-1$ are attained (as is $m$, due to the hyperedges). In either of the cases, the property follows for the entire instance, which we define as $\mathcal{I}(m,d)$.

    We create at most $X(m-1,d)\cdot (m-1)+1$ copies of at most $X(m,d-1)$ nodes each, followed by at most another $X(m-1,d)$ nodes. In total, this accounts to at most
  \begin{align*}
     & (X(m-1,d)\cdot (m-1)+1)\cdot X(m,d-1)+X(m-1,d)\\=&X(m-1,d)\cdot((m-1)\cdot X(m,d-1)+1)+X(m,d-1)
  \end{align*}
    nodes, as claimed. 
\end{proof}

Finding a closed form of the above recursion turns out to be challenging, though we provide a rather rough estimate for the curious reader's intuition as well as for later use.

\begin{lemma}\label{lemma:xsize}
    For $X(m,d)$ as in Lemma \ref{lemma:xmd}, we have $X(m,d)\leq \prod_{i=1}^{m+d}(m+d)^{2^i}$.
\end{lemma}
\begin{proof}
    We prove this by induction. For $m=1$ and $d=1$, the claim is vacuously true. Assuming that it holds for $1\leq d'\leq d$, $(m',d')\neq (m,d)$, we compute 
    \begin{align*}
        X(m,d)&=X(m-1,d)\cdot((m-1)\cdot X(m,d-1)+1)+X(m,d-1)\\
        &\leq (m+1)\cdot X(m-1,d)\cdot X(m,d-1)\\
        &\leq (m+1)\cdot \prod_{i=1}^{m+d-1}(m+d)^{2^i}\cdot \prod_{i=1}^{m+d-1}(m+d)^{2^i}\\
        &\leq (m+1)\cdot \prod_{i=1}^{m+d+1-i}\left((m+d)^{2^i}\right)^2\\
      &= (m+1)\cdot \prod_{i=1}^{m+d-1}(m+d)^{2^{i+1}}\\
            &\leq (m+1)\cdot \prod_{i=1}^{m+d}(m+d)^{2^{i}}
    \end{align*}
\end{proof}

 We can further bound in Lemma \ref{lemma:xsize} as 
 \[X(m,d)\leq  (m+1)\cdot \prod_{i=1}^{m+d}(m+d)^{2^{i}}\leq (m+1)(m+d)^{2^{m+d+1}}.\]
 In particular, 
 \begin{equation}\label{lastminutebound}
     X(m,m)\leq (m+1)\cdot (2m)^{2^{2m+1}}.
 \end{equation}

Both of the properties we force to exist in Lemma \ref{lemma:xmd} prove useful for our adversarial purposes. In particular, if we let $m=d$, the second property implies that the algorithm is no better than $m$-competitive, which we want to show. However, we cannot exclude the first property, which we also want to lead to an incompetitive demise in the sequel.

Having been granted a hypergraph $\mathcal H$ with a monochromatic active node of each online color, we proceed similarly as in the case of $m=3$, in that the many copies of $\mathcal H$ act as ``palettes''. 

However, now we must also \emph{grow the palettes}, i.e., the size of the palettes must increase throughout our construction. This is especially challenging because we can only extend active hyperedges (or in general, at most one hyperedge per connected component).


We try to increase monochromatic hyperedges in size by revealing sufficiently many copies of $\mathcal I(m,m)$.
\begin{lemma}\label{lemma:ymd}
For \textsc{OMHC}$(m)$, there exists a family of instances $\mathcal L(m,d)$ which are hyperforests with at most $Y(m,d)$ nodes each for a suitable number $Y(m,d)$, such that the adversary achieves on $\mathcal L(m,d)$ at least one of the following properties against any algorithm resulting in an online coloring $\varphi\colon [n]\to [m]$:
    \begin{enumerate}[(i)]
        \item For every online color $i\in [m]$, there are $d$-dimensional hyperedges $e_1,\ldots, e_m$ whose nodes are all active and the nodes in $e_i$ have the online color $i$. Furthermore, these hyperedges are pairwise disconnected in $\mathcal{L}(m,d)$.
        \item For an online color $i\in [m]$, there exists an active monochromatic hyperedge of size $m$, i.e., $e\in E(\mathcal L(m,d))$, $|e|=m$ and $\varphi(j)=i$ for every $j\in e$.
    \end{enumerate}

    The number $Y(m,d)$ is implicitly defined by the recursion 
    \begin{align*} 
    \hspace{-0.3cm}
    Y(m,d)=\begin{cases}
        1,&m=1\\
        X(m,m)&d=1\\m^2\cdot  (Y(m-1,d)^{m}+1)\cdot(Y(m,d-1)+1)+ m\cdot Y(m-1,d)^{m+1}+1, &\text{else}.
    \end{cases}  
    \end{align*}
\end{lemma}
\begin{proof}
      As in the previous proof, we prove the statement by induction. For $d=1$, we can set $\mathcal L(m,1)=\mathcal I(m,m)$ and $Y(m,1)=X(m,m)$. For $m=1$, $Y(1,d)=1$ nodes suffices to satisfy the second property. 

      Let us assume that the supposed instances $\mathcal L(m',d')$ with $Y(m',d')$ nodes each exist for every $m'\leq m, d'\leq d, (m',d')\neq (m,d)$.

      We first release 
         \begin{equation}
        a_1\coloneqq m^2\cdot Y(m-1,d)^{m}+1 
      \end{equation} 
      disjoint copies of $\mathcal L(m,d-1)$. If any of the copies satisfy the second property, the statement follows immediately. Therefore, we may assume that each copy satisfies the first property, i.e., we obtain $a_1$ hyperedges of each color. By virtue of the first property, the hyperedges within a copy are pairwise disconnected, so that we can mark all of them as active. 

      Next, we release $a_1$ singleton nodes. These extend $m$ hyperedges per node; one in each color. Therefore, they extend precisely one monochromatic hyperedge of size $d-1$ to a monochromatic hyperedge of size $d$ per node. We mark the new monochromatic hyperedges as active and all others as inactive.
      
     Unfortunately, the online algorithm assigns the singleton nodes, which we have no control over. However, we can guarantee that in one of the colors, in the $m$-th color without loss of generality, there are at least 
     \begin{equation}
         a_2\coloneqq m\cdot Y(m-1,d)^{m}+1
     \end{equation}
     active hyperedges $e_1,\ldots, e_{a_2}$. 

    Then, we release $m\cdot Y(m-1,d)^{m-1}$ copies of $\mathcal L(m-1,d)$. We connect these copies to our existing active hyperedges as follows: In addition to the hyperedges within $\mathcal L(m-1,d)$, each node in the copy extends one of the hyperedges $e_k$, e.g., one with the smallest size that is active, excluding $e_{a_2}$. This way, every node extends one active hyperedge of size $(d-1)$ of colors $1,\ldots, m-1$ each.
    
    Let us call a copy of $\mathcal L(m-1,d)$ \emph{successful} if the algorithm assigns no node $v$ in this copy the color $m$, and \emph{unsuccessful} otherwise. 
    
    \begin{claim}\label{claim:unsuccessful}
        There can be at most $m\cdot Y(m-1,d)^m-1$ unsuccessful copies unless the second property is satisfied.
    \end{claim} 
    \begin{proof}[Proof of Claim \ref{claim:unsuccessful}]
     For the sake of contradiction, let us assume that there are $m\cdot Y(m-1,d)^m$ unsuccessful copies. Consider the first $m\cdot Y(m-1,d)^m$ of these copies. For every unsuccessful copy, at least one active hyperedge of size $d$ must be extended to size $d+1$ and hence may stay active, and in return, $Y(m-1,d)-1$ hyperedges must become inactive. Since we always pick the smallest-size hyperedges to extend, the first $Y(m-1,d)^{m-1}$ copies exhaust all hyperedges $e_k$ (except $e_{a_2}$ which we ignore until the very end of our construction). In the process, $Y(m-1,d)^{m-1}$ hyperedges stay active, which are used once again for the next $m\cdot Y(m-1,d)^m$ copies and so on.
     
     After the $d'$-th run of this procedure for $d'\leq m-d$, we have $Y(m-1,d)^{d+m-d'}$ active hyperedges of size $d+d'$. At the end, for $d'=m-d$, we have one hyperedge of size $m$, satisfying the second property.  
    \end{proof}
    
    By the claim above, at least one copy must be successful, i.e., partitioned among colors $1,\ldots, m-1$. Once we find a successful copy, the inductive assumption on $\mathcal L(m-1,d)$ delivers one $d$-dimensional hyperedge of colors $1,\ldots, m-1$ each, which are pairwise disconnected. Together with $e_{a_2}$, this yields the first property overall.

    It remains to count the numbers of nodes and hyperedges.

\begin{enumerate}[(i)]
    \item    We first added $a_1$ copies of $\mathcal L(m,d-1)$, which accounts to $a_1\cdot Y(m,d-1)$ nodes. 
    \item Then, we added $a_1$ singleton nodes.
    \item Lastly, we released $a_2-1$ copies of $\mathcal L(m-1,d)$ together with a very last node, which accounts to $a_2\cdot Y(m-1,d)+1$ nodes. 
\end{enumerate}

In total, we utilized
\begin{align}
    &a_1\cdot(Y(m,d-1)+1)+a_2\cdot Y(m-1,d)+1\\=&m^2\cdot  (Y(m-1,d)^{m}+1)\cdot(Y(m,d-1)+1)+ m\cdot Y(m-1,d)^{m+1}+1
\end{align}
nodes, as claimed.    
\end{proof}

It is worth noting that the terms in the above proof are not tight: For the sake of simplicity, we have rounded up some terms without changing the asymptotics of the result. The previous lemma almost immediately implies Theorem \ref{km_fullvers}, which we ultimately wanted to prove.

\begin{proof}[Proof of Theorem \ref{km_fullvers}]
   We choose $N(m)\coloneqq Y(m,m-1)+1$, where $Y\colon \mathbb N\times \mathbb N\to \mathbb N$ is defined as in Lemma \ref{lemma:ymd}. Indeed, we could build the instance $\mathcal L(m,m-1)$, which provides pairwise disconnected hyperedges $e_1,\ldots, e_m$ of each color $1,\ldots, m$ by Lemma \ref{lemma:ymd}. Now a last node $v$ can be introduced so that it extends $e_1,\ldots, e_m$. For the color $i$ that the node $v$ is assigned, $e_i\cup\{v\}$ is a monochromatic edge of color $i$.

   By the recursion in Lemma \ref{lemma:ymd}, it follows that 
\[Y(m,d)\leq (Y(m-1,d)\cdot Y(m,d-1))^{m+2}\leq \max\{Y(m-1,d)\cdot Y(m,d-1)\}^{2m+4}.\]
Extending the terms $Y(m,d-1)$ and $Y(m-1,d)$ and iterating this process, we obtain
\[Y(m,m-1)\leq X(m,m)^{(m+2)^{m-1}}\leq \left((m+1)\cdot {(2m)^{2^{(2m+1)}}}\right)^{(m+2)^{m-1}}\]
by \eqref{lastminutebound}.  

Very roughly, we can estimate $N(m)\leq (2m+1)\uparrow\uparrow 5$ (i.e., a tower of height $5$), as claimed in Theorem \ref{km_fullvers}.

\end{proof}

While we are still parameterizing over nodes, let us first deduce Corollary \ref{ackermann}. Theorem \ref{km_fullvers} reveals that we cannot color hyperforests with with $(2m+1)\uparrow\uparrow 5$ nodes in $m$ colors without creating a monochromatic hyperedge. In other words, if we want a feasible coloring of a hyperforest with $2n+1$ nodes, we need at least $\mathrm{slog_5(n)}$ colors, as claimed.

Throughout this section, we have counted nodes instead of hyperedges as it is stated in Theorem \ref{thm:general}. However, hyperforests with hyperedges of size at most $m$ with $n$ nodes have $O(mn)$ hyperedges, as $mn$ is the number of incidences and there cannot be more hyperedges than incidences. As both Theorem \ref{thm:general} and Corollary \ref{thm:schedcor} only state the finiteness of the number $K_m$ of scenarios, they follow immediately from Theorem \ref{km_fullvers} as well.

\section{Tight Bounds For $K=3$ and Unit Processing Times}
In the previous two sections, we have established in two ways that for large enough $K$, there is no better-than-$3$-competitive algorithm for \textsc{OMSS$(3,K)$}, even for unit processing times. Let us reverse the parameters and ask: How competitively can we solve \textsc{OMSS$(m,3)$}? 
This is not an easy question, indeed, even for \textsc{OMSS$(m,1)$}, a tight bound is not known for most $m$. Restricting to the unit processing time case $(p_j=1)$, we achieve a tight bound, which is the smallest possible ratio considering the lower bound of $\rho\geq 2$ provided by Theorem \ref{thm:lb2K}. We present a round robin algorithm that is $2$-competitive in the special case of $p_j\equiv 1$ for an arbitrary number $m$ of machines.

\begin{theorem}\label{thm:bingo}
For all $m\in \mathbb N$, there is a $2$-competitive algorithm for the special case of \textsc{OMSS$(m,3)$} where the jobs have unit processing times.
\end{theorem}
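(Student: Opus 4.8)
The plan is to work in the equivalent language of \textsc{OMHC}$(m)$ with three hyperedges $S_1,S_2,S_3$: every arriving node $j$ comes with its \emph{type} $T_j=\{k\in[3]:j\in S_k\}$, nodes of type $\emptyset$ are irrelevant, so only seven types matter, and with $\ell_i^k:=|J_i\cap S_k|$ and $n_k:=|S_k|$ the goal is $\max_{i,k}\ell_i^k\le 2\,\mathrm{OPT}$. The lower bounds on $\mathrm{OPT}$ I would use are the average-load bound $\mathrm{OPT}\ge\lceil n_k/m\rceil$ for each $k$, together with the remark that deleting the $d$ nodes of type $\{1,2,3\}$ cannot increase the optimum while $n_k\ge d$ for every $k$, so $\mathrm{OPT}\ge\lceil d/m\rceil$; where needed I would also invoke the incidence count $3d+2A+S\le 3m\,\mathrm{OPT}$ (with $A$, $S$ the numbers of double- and single-type nodes), and the purely combinatorial fact that any two nodes of type-size $\ge 2$ share a scenario, which forces $\mathrm{OPT}\ge2$ once $m+1$ such nodes appear.

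The algorithm I would analyse places a node $j$ of type-size at most $2$ by the scenario-restricted greedy rule, i.e.\ on a machine $i$ minimizing $\max_{k\in T_j}\ell_i^k$, and places each node of type $\{1,2,3\}$ by a plain round robin over the $m$ machines. Round-robin triples contribute at most $\lceil d/m\rceil\le\mathrm{OPT}$ to every machine--scenario pair, and — crucially — the greedy among the size-$\le 2$ nodes then behaves exactly as if it were run on the sub-instance of those nodes alone. The heart of the argument is a claim about pure greedy on an instance all of whose node types have size $\le 2$: its makespan $M'$ admits a scenario $k$ with $n_k\ge\lceil m/2\rceil(M'-1)+1$, and hence $2\,\mathrm{OPT}'\ge M'$. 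This is proved by examining the last node $j'$ in the busiest scenario $k^\ast$ assigned to a busiest machine $i^\ast$: greediness forces $\max_{k\in T_{j'}}\ell_{i'}^k\ge M'-1$ for every machine $i'$ at that moment. If $|T_{j'}|=1$, every machine has $\ell_{i'}^{k^\ast}\ge M'-1$, giving even $\mathrm{OPT}'\ge M'$; if $|T_{j'}|=2$, a pigeonhole over the two scenarios of $T_{j'}$ puts at least $\lceil m/2\rceil$ machines at load $\ge M'-1$ in one common scenario, and adding the contribution of $j'$ itself to that scenario removes the additive constant.

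Combining the pieces, the load $\ell_{i^\ast}^{k^\ast}$ realizing the overall makespan $M$ splits into a triple part $\tau_{i^\ast}\le\lceil d/m\rceil$ and a non-triple part $\nu_{i^\ast}\le M_{\mathrm{nt}}$, where $M_{\mathrm{nt}}$ is the makespan of the size-$\le 2$ sub-instance; the claim supplies a scenario $k$ with $n_k^{\mathrm{nt}}\ge\lceil m/2\rceil(M_{\mathrm{nt}}-1)+1$, and since $n_k=n_k^{\mathrm{nt}}+d$ we obtain $\mathrm{OPT}\ge\lceil(\lceil m/2\rceil(M_{\mathrm{nt}}-1)+1+d)/m\rceil\ge (M_{\mathrm{nt}}-1)/2+(1+d)/m$, which for $d\ge m-1$ already yields $2\,\mathrm{OPT}\ge M_{\mathrm{nt}}+\lceil d/m\rceil\ge M$; for $d=0$ one has $M\le M_{\mathrm{nt}}\le 2\,\mathrm{OPT}$ directly.

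The main obstacle is the remaining \emph{triple-light} regime $1\le d<m-1$: then $\tau_{i^\ast}\le1$, so $M\le M_{\mathrm{nt}}+1$, and $M=M_{\mathrm{nt}}+1$ can only occur when $i^\ast$ is simultaneously a busiest machine of the non-triple sub-instance \emph{and} carries a triple job, and there the chain above is off by one in the ``wrong'' parity of $M_{\mathrm{nt}}$. Closing this gap requires either a slightly sharper lower bound on $\mathrm{OPT}$ in this regime (the incidence count $3d+2A+S\le 3m\,\mathrm{OPT}$, or the pairwise-conflict bound, which is exactly strong enough when many double-type nodes are present), or a small refinement making the triple round robin avoid the busiest non-triple machines — plus a direct check of the regimes $\mathrm{OPT}\in\{1,2\}$, where additive errors dominate: for $\mathrm{OPT}=1$ one argues that the hypergraph admits a strong $m$-colouring and verifies by inspection that the algorithm keeps every machine--scenario at $\le2$, and for $\mathrm{OPT}=2$ the pairwise-conflict bound caps the number of double/triple nodes and one again finishes by inspection. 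Assembling these cases with the clean non-triple analysis gives the claimed $2$-competitiveness.
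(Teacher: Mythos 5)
Your proposal takes a genuinely different route from the paper. The paper's algorithm pairs each singleton type $\{k\}$ with its complementary double type $[3]\setminus\{k\}$ — these never share a scenario, so each such pair can be fed to a single round-robin list; the three resulting lists are offset by roughly $m/3$ machines, and triple-type nodes are handled separately. You instead place every size-$\le 2$ node by scenario-restricted greedy and round-robin only the triple nodes, analysing the two layers independently. Your key lemma on greedy over size-$\le 2$ instances (the claim that the makespan $M'$ forces some scenario to have $n_k\ge\lceil m/2\rceil(M'-1)+1$, hence $2\,\mathrm{OPT}'\ge M'$) is correct as stated, and the case $d\ge m-1$ does follow from your combining inequality. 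One thing you should make explicit in the algorithm: for the size-$\le 2$ greedy to ``behave exactly as if it were run on the sub-instance of those nodes alone'', its rule must use the loads $\ell_i^k$ restricted to non-triple nodes, not the total loads. Otherwise the sub-instance claim does not apply.

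The genuine gap is exactly the one you flag yourself — the triple-light regime $1\le d < m-1$ — and the proposal does not close it. You correctly observe that there the triple contribution $\tau_{i^\ast}\le 1$ costs you an additive $+1$ relative to $M_{\mathrm{nt}}$ while the scenario-size bound only gains the term $(1+d)/m<1$, so the combining inequality can be off by one. You list three possible repairs (a sharper lower bound via the incidence count $3d+2A+S\le 3m\,\mathrm{OPT}$ or the pairwise-conflict bound, a modification steering the triple round robin away from busy non-triple machines, and a direct case check for $\mathrm{OPT}\in\{1,2\}$), but none of these is carried out, and the claims that the $\mathrm{OPT}=1$ and $\mathrm{OPT}=2$ cases ``finish by inspection'' are exactly where the work lies: it is not obvious that scenario-restricted greedy (operating blindly of the triple placements) keeps every machine--scenario pair at load $\le 2$ whenever $\mathrm{OPT}=1$, nor that the pairwise-conflict bound on double/triple nodes alone suffices when $\mathrm{OPT}=2$. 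Without an explicit argument for this regime the proof is incomplete; this is precisely the difficulty that the paper's offset-round-robin avoids by design, since there the triple nodes and the three paired lists are globally coordinated rather than analysed independently.
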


The algorithm is best visualized through an infinite sequence of ``bingo cards'' with $m$ rows and $3$ columns, each of which looks like Figure \ref{fig:bingocard}.
\begin{figure}[h!]
    \centering
    \begin{tikzpicture}
    \foreach \j in {1,...,8} {
        \foreach \i in {1,...,3} {
            \draw [very thick] (\i+1,\j) -- (\i,\j+1) node [above right] at (\i+0.5,\j+0.5) {\i};
        }
            \draw [very thick] (1,\j+1) -- (2,\j)  node [below left] at (1.5,\j+0.5) {$23$};
            \draw [very thick] (2,\j+1) -- (3,\j)  node [below left] at (2.5,\j+0.5) {$13$};
            \draw [very thick] (3,\j+1) -- (4,\j)  node [below left] at (3.5,\j+0.5) {$12$};
            \draw  node [right] at (0,9-\j+0.5) {\j};
    }
\draw [step=1.0, very thick] (1,1) grid (4,9);
\end{tikzpicture}
    \caption{Bingo card for $m=8$. Although it admittedly rather resembles a tombola card turned sideways, our analysis will be better suited for the game of bingo.}
    \label{fig:bingocard}
\end{figure}

Here, every row represents a machine, the above example illustrates $m=8$. Each of the triangles correspond to a single job, where jobs $j\in S_1\cap S_2\cap S_3$ are to be understood as an entire square. 

We assign jobs only by looking at the column they belong to, where we fill the columns in a round-robin procedure. For the first column, the counting index of the round robin starts in the first row, for the second, it starts in the $\left(\left\lceil\frac{m}{3}\right\rceil+1\right)$-st row, and for the third, it starts in the $\left(\left\lceil\frac{2m}{3}\right\rceil+1\right)$-st. We count downwards until the lowest row of the column is reached, then we proceed from the top downwards. See Figure \ref{fig:countingindex} for an example.

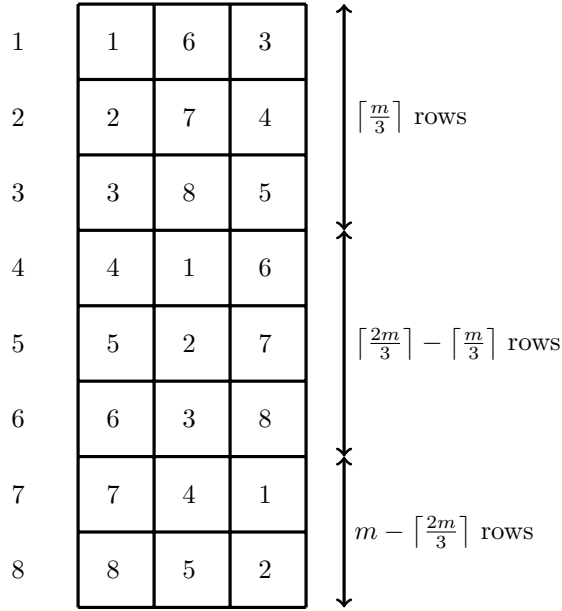
\begin{figure}[h!]
    \centering
    \begin{tikzpicture}
    \foreach \j in {1,...,8} {
            \draw  node [right] at (0,9-\j+0.5) {\j};
             \draw  node [right] at (1.25,9-\j+0.5) {\j};
    }
    \foreach \j in {1,...,5} {
             \draw  node [right] at (2.25,6-\j+0.5) {\j};
    }
      \foreach \j in {6,7,8} {
             \draw  node [right] at (2.25,14-\j+0.5) {\j};
    }
     \foreach \j in {1,...,2} {
             \draw  node [right] at (3.25,3-\j+0.5) {\j};
    }
         \foreach \j in {3,...,8} {
             \draw  node [right] at (3.25,11-\j+0.5) {\j};
    }
\draw [step=1.0, very thick] (1,1) grid (4,9);
   \draw[very thick, <->] (4.5,1) -- (4.5,3) node [right] at (4.5,2){$m-\left\lceil\frac{2m}{3}\right\rceil$ rows};
       \draw[very thick, <->] (4.5,3) -- (4.5,6) node [right] at (4.5,4.5){$\left\lceil\frac{2m}{3}\right\rceil-\left\lceil\frac{m}{3}\right\rceil$ rows};
       \draw[very thick, <->] (4.5,6) -- (4.5,9) node [right] at (4.5,7.5){$\left\lceil\frac{m}{3}\right\rceil$ rows};
\end{tikzpicture}
    \caption{Counting indices of the above bingo card.}
    \label{fig:countingindex}
\end{figure}

We place each job $j\in [n]\setminus(S_1\cap S_2\cap S_3)$ to the first triangle it fits into according to the counting index described above. If the job $j$ does not fit into any triangle, we create a new bingo card. 

We place a job $j\in S_1\cap S_2\cap S_3$ in an empty square with largest possible counting index. We break ties in favor of the rightmost column. If such a square does not exist, we create a new bingo card.

This procedure is formalized in Algorithm \ref{alg:bingo} for the special case that all jobs fit a single bingo card. The general procedure is that, we find the earliest-created bingo card, and apply the lines 9-19 of Algorithm \ref{alg:bingo} on it. Here, the counting indices must then be maintained for every bingo card separately.

We will first analyze instances where \ref{alg:bingo} does not assign two jobs onto the same placeholder (e.g., two jobs $j\in S_1\cup S_2\setminus S_3$ to the same triangle or a job $j\in S_1\cup S_2\cap S_3$ on top of a non-vacant triangle). At the end, we argue how we can reduce the general case to this special case.

Let us first elaborate on how Algorithm \ref{alg:bingo} exactly represents our informal description: The variables of the form $i_S$, $S\subsetneq \{1,2,3\}$ represent running indices for the triangles of \emph{type} $S$, i.e., placeholders for jobs $j\in S_k \iff k\in S$. They show the smallest-counting indexed vacant triangle. Similarly, the variables of the form $i_{\{1,2,3\}}^k$ for $k\in \{1,2,3\}$ are the backwards running indices for squares in column $k$. They show the largest-indexed vacant square (i.e., set of two corresponding triangles both of which are vacant) on column $k$. Lines 13 and 18 ensure that the indices are never out of bounds. Moreover, Line 15 assigns a job $j\in S_1\cap S_2\cap S_3$ to a highest-counting indexed square, breaking ties in favor of a largest column index, as desired.
\begin{algorithm}
\caption{The bingo card algorithm for $m$ arbitrary, $K=3$, $p_j\equiv1$, assuming that jobs fit in a bingo card.}
\label{alg:bingo}
\begin{algorithmic}[1]
\State $J_1,J_2,J_3\gets \emptyset$
\State  $i_{\{1\}}, i_{\{2,3\}} \gets 1$  
\State  $i_{\{2\}}, i_{\{1,3\}} \gets \left\lceil\frac{m}{3}\right\rceil+1$  
\State  $i_{\{3\}}, i_{\{1,2\}} \gets \left\lceil\frac{2m}{3}\right\rceil$  
\State  $i_{\{1,2,3\}}^1 \gets m$  
\State  $i_{\{1,2,3\}}^2 \gets \left\lceil\frac{m}{3}\right\rceil$ 
\State  $i_{\{1,2,3\}}^3 \gets \left\lceil\frac{2m}{3}\right\rceil$ 
\For{$j=1$ \textbf{to} $n$}
\State Let $S\subseteq [3]$ be the set of scenario indices such that $j\in S_k$
\If{$S\neq [3]$}
\State $J_{i_{S}}\gets J_{i_{S}}\cup \{j\}$
\State $i_{S}\gets i_{S}+1$
\State  $i_{S}\gets (i_{S} \mod m)$
\Else
\State $k\gets \mathrm{argmax}_{k\in [3]}\{i_{\{1,2,3\}}^k+\nicefrac{k}{4}\}$
\State $J_{i_{\{1,2,3\}}^k }\gets J_{i_{\{1,2,3\}}^k }\cup \{j\}$
\State $i_{\{1,2,3\}}^k \gets i_{\{1,2,3\}}^k -1$
\State $i_{\{1,2,3\}}^k\gets 
(i_{\{1,2,3\}}^k \mod m)$
\EndIf
\EndFor
\end{algorithmic}  
\end{algorithm}

\subparagraph{Analysis.} We are ready to analyze Algorithm \ref{alg:bingo} by defining our first type of bingo. Recall that the instance that we analyze consists of a single bingo card.
\begin{defn}
    A \emph{row bingo}, given a solution $J_1\dot\cup J_2\dot\cup J_3$ is a machine $i$, for which a scenario $k\in \{1,2,3\}$ with $p(J_i\cap S_k)=3$ exists.
\end{defn}

The following is an immediate consequence of this definition:
\begin{prop}
    Any schedule without a row bingo is $2$-competitive.
\end{prop}
\begin{proof}
    Any nonempty schedule without a row bingo has makespan at most $2$, and an offline optimum of at least $1$.
\end{proof}

Therefore, we may assume without loss of generality that the $m'$-th machine is a row bingo for some $m'=\left\lceil\frac{m_1\cdot m}{3}\right\rceil+m_2$ ($m_1\in \{0,1,2\}$, $m_2\in \{0,\ldots, \left\lceil\frac{m}{3}\right\rceil-1\}$ or $m_1=1, m_2=\left\lceil\frac{m}{3}\right\rceil, m\equiv 2\mod{3}$ or $m_1=2, m_2=\left\lceil\frac{m}{3}\right\rceil, m\equiv 0\mod{3}$).

Our goal is to show that $p(S_k)>m$, which implies that the offline optimum in the $k$-th scenario is at least $2$, bounding the competitive ratio by $\frac{3}{2}<2$. We apply a case distinction. To this end, we first observe zones into which a bingo card is decomposed by the round robin counting indices. For an example, see Figure \ref{fig:bingocardareas}.

\begin{figure}[h!]
    \centering
    \begin{tikzpicture}
        \draw [fill=yellow, opacity=0.5]
       (1,8) -- (4,8) -- (4,9) -- (1,9);
        \draw [fill=yellow, opacity=0.5]
       (2,1) -- (2,3) -- (4,3) -- (4,1);
       \draw [fill=yellow, opacity=0.5]
       (3,3) -- (2,3) -- (2,6) -- (3,6);   
        \draw [fill=blue, opacity=0.5]
       (4,3) -- (3,3) -- (3,4) -- (4,4);
       \draw[very thick, <->] (4.5,1) -- (4.5,3) node [right] at (4.5,2){$m-\left\lceil\frac{2m}{3}\right\rceil$ rows};
       \draw[very thick, <->] (4.5,3) -- (4.5,6) node [right] at (4.5,4.5){$\left\lceil\frac{2m}{3}\right\rceil-\left\lceil\frac{m}{3}\right\rceil$ rows};
       \draw[very thick, <->] (4.5,6) -- (4.5,9) node [right] at (4.5,7.5){$\left\lceil\frac{m}{3}\right\rceil$ rows};
    \foreach \j in {1,...,8} {
        \foreach \i in {1,...,3} {
            \draw [very thick] (\i+1,\j) -- (\i,\j+1) node [above right] at (\i+0.5,\j+0.5) {\i};
        }
            \draw [very thick] (1,\j+1) -- (2,\j)  node [below left] at (1.5,\j+0.5) {$23$};
            \draw [very thick] (2,\j+1) -- (3,\j)  node [below left] at (2.5,\j+0.5) {$13$};
            \draw [very thick] (3,\j+1) -- (4,\j)  node [below left] at (3.5,\j+0.5) {$12$};
            \draw  node [right] at (0,9-\j+0.5) {\j};
    }
\draw [step=1.0, very thick] (1,1) grid (4,9);
\end{tikzpicture}
    \caption{The same bingo card. The yellow colored squares were filled by at least one triangle, while the blue square has been filled by a square (i.e., a job $j\in S_1\cap S_2\cap S_3$). Notice that by the translated round-robin indexing, zones of nearly equal heights occur. If the first row is a row bingo for, say, the first scenario, the yellow areas must contain jobs $j\in S_1$.}
    \label{fig:bingocardareas}
\end{figure}
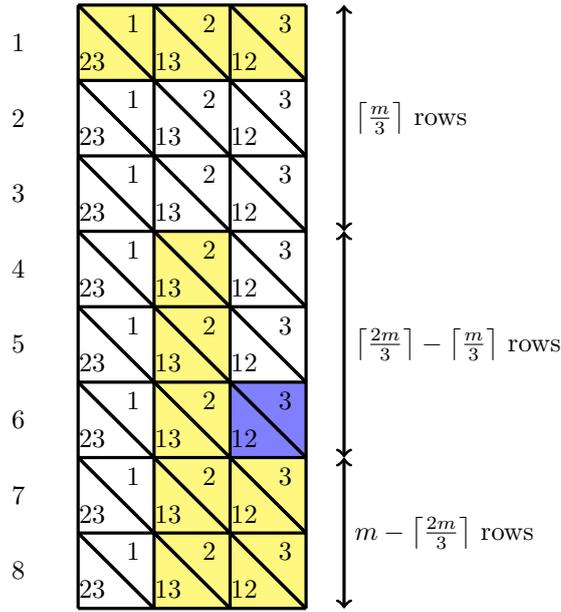

The near-equal partitioning of the rows depend on the value $m\!\!\mod 3$, which might be at times surprisingly aggravating to calculate with, which is why we have provided the reader Table \ref{tab:heights} as an aid.

\def\arraystretch{1.5}
\begin{table}[]
    \centering
  \begin{tabular}{ | c | c | c | c |} 
  \hline
   &$\left\lceil\frac{m}{3}\right\rceil$  &  $\left\lceil\frac{2m}{3}\right\rceil-\left\lceil\frac{m}{3}\right\rceil$& $m-\left\lceil\frac{2m}{3}\right\rceil$\\ 
  \hline
  0 & $\frac{m}{3}$ & $\frac{m}{3}$  &$\frac{m}{3}$\\ 
  \hline
  1 & $\frac{m+2}{3}$& $\frac{m-1}{3}$&$\frac{m-1}{3}$\\ 
  \hline
  2 & $\frac{m+1}{3}$&$\frac{m+1}{3}$&$\frac{m-2}{3}$\\
  \hline
\end{tabular}
    \caption{Numbers of rows in each partition, depending on $m\!\!\mod 3$.}
    \label{tab:heights}
\end{table}

\noindent\textbf{Case 1: There is no job $j\in S_1\cap S_2\cap S_3$ in the $m'$-th machine.} Let $k\in \{1,2,3\}$ be the scenario in the $m'$-th machine attaining the makespan of $3$. We use Table \ref{tab:heights} to find a lower bound. To be more precise, depending on $m\!\!\mod 3$, numbers from one of the three rows in this table are used to bound $S_k$ from below; it is at least two times one entry plus another entry (plus $3$ which comes from the $m'$-th row). Therefore we have 
\[p(S_k)\geq \min\left\{\frac{2m}{3}+\frac{m}{3}+3, \frac{2m-2}{3}+\frac{m-1}{3}+3, \frac{2m-4}{3}+\frac{m+1}{3}+3\right\}>m.\]

For the next cases, we observe the following:

\begin{obs}
    Fix a column $k$. Then in this column, the counting index of  a job $j\in S_1\cap S_2\cap S_3$ is always strictly larger than that of a job $j'\notin S_1\cap S_2\cap S_3$.
\end{obs}
\textbf{Case 2: There are at least two jobs $j,j'\in S_1\cap S_2\cap S_3$ in the $m'$-th machine.} 

If there are at least two jobs $j,j'\in J_{m'}\cap S_1\cap S_2\cap S_3$, then one of them must have counting index larger than $\frac{m-2}{3}$ by Table \ref{tab:heights}, therefore all three columns have at least $\frac{m-2}{3}$ jobs $j\in S_1\cap S_2\cap S_3$ not counting the $m'$-th row. In total, it must hold that 
\[S_k\geq 3\cdot\frac{m-2}{3}+3>m.\]
\textbf{Case 3: There is exactly one job $j\in S_1\cap S_2\cap S_3$ in the $m'$-th machine.}
If the job $j$ has counting index $s$ larger than $\frac{m-2}{3}$ (from bottom to top), then the case goes analogous to Case 2. Otherwise, we branch into two subcases:

\textit{Subcase 3.1: $j$ is not in the leftmost column.} Then the other two jobs not being in $S_1\cap S_2\cap S_3$ implies that there are at least 
\[\min\left\{
\left\lceil\frac{m}{3}\right\rceil+2\left(\left\lceil\frac{2m}{3}\right\rceil-\left\lceil\frac{m}{3}\right\rceil-s\right), 
m-\left\lceil\frac{2m}{3}\right\rceil+2\left(\left\lceil\frac{m}{3}\right\rceil-s\right)
\right\}\geq m-2s,\]
jobs that precede the $m'$-th row in the counting order in the other two columns; the inequality can be seen via Table \ref{tab:heights}. Together with the three jobs in $J_{m'}$ and the $3(s-1)$ jobs that precede $j$ in the backwards counting order, we obtain \[p(S_k)\geq m-2s+3+3s-3>m,\]as desired.

\textit{Subcase 3.2: $j$ is in the leftmost column.} Then the job $j'\in J_{m'}$ in the second column is preceded by  
\[\left\lceil\frac{2m}{3}\right\rceil-\left\lceil\frac{m}{3}\right\rceil+\left(m-\left\lceil\frac{2m}{3}\right\rceil-s\right)=m-\left\lceil\frac{m}{3}\right\rceil-s\] jobs in the second column, and the job $j''\in J_{m'}$ in the third column is preceded by $m-\left\lceil\frac{2m}{3}\right\rceil-s$ jobs in the third column, all belonging to $S_k$. Moreover, there are at least $3s-1$ jobs in $S_1\cap S_2\cap S_3$ preceding $j$ in the backwards counting order. Here, we used the fact that tiebreaks were broken in favor of the rightmost slots. Together with the three jobs in $J_{m'}$, we can bound
\[p(S_k)\geq 2m-\left\lceil\frac{2m}{3}\right\rceil-\left\lceil\frac{m}{3}\right\rceil-2s+3s+3>m.\]

\subparagraph{Multiple bingo cards.} Now we consider the general case where we have occupied multiple bingo cards. Recall that in the algorithm, the job is placed to the lowest-index bingo card where a vacant placeholder exists.

We call a column (consisting of triangles) of a bingo card \emph{column bingo} if there exists a scenario that is represented in every entry of this column. Then, by the minimality of the card index that we select, there can be only one non-empty and non-bingo copy of each of the three columns. 

{The bottleneck is achieved when there are no bingo columns because if we drop a bingo column, the makespan decreases by at most $1$ (exactly $1$ if and only if a scenario that achieves the bingo also achieves the makespan), while average load ratio output divided by average load becomes strictly larger. Here, notice that in our analysis, we always constructed a lower bound using the scenario that attains makespan. Therefore, there is a worst-case instance with only one card.}

\newpage
\section{Conclusion and Outlook}

Throughout, we have established several competitiveness results for Online Makespan Scheduling under Scenarios. The first main takeaway from our work is a competitiveness gap between two and at least three scenarios when we consider $m=2$ machines. This result draws a parallel to the known tractability results of several other problems, in which the line between easy and hard was drawn between two and three scenarios. Surely, the fact that the subsets of $\{1,2,3\}$ are not laminar, has been a deciding factor in our simple non-competitiveness result, as has been in several of the well-known results. It is worth mentioning that restricting to the proxy competitive ratio as well as fixing the method of assigning double-scenario jobs essentially reduced the existence problem to the description of a polyhedron. We believe that this technique, if understood better, might prove useful for other cases and related problems as well. 

We have further compared the behavior of competitiveness for increasing number $K$ of scenarios and increasing number $m$ of machines. In the special setting of unit processing times and $m=3$ machines, the contrast is evident: On one hand, there is a $2$-competitive algorithm for three scenarios and $m$ machines for all $m\in \mathbb N$, which matches the lower bound obtained readily for $m=2$. On the other hand, for $K$ sufficiently large, the trivial upper bound of $3$ cannot be beaten. That being said, $K+1$ is a strict upper bound as well for a problem on $K$ scenarios. To summarize, for large values of $m$ and small values of $K$, we would expect the possibilities for competitive ratios to be dominated by $K$.

In light of our results, we may raise several open questions: In addition to the obvious task of finding tight bounds for all possibilities $(m,K)$ of the numbers of machines and scenarios, exploring the competitiveness as $m$ approaches to infinity in the weighted processing time case remains interesting. However, these problems have not been resolved even for $K=1$, whence it is fair to assume challenges in further progress. Another interesting research direction is to introduce \emph{migration}, as seen in e.g.~\cite{migration}. In this model, we are allowed to revoke a bounded number of decisions in hindsight. Similarly, instead of minimizing the worst-case scenario, one could look into minimizing the average scenario or maximum regret.

\bibliography{onlinemakespan}

\begin{thebibliography}{10}

\bibitem{sched3}
Susanne Albers.
\newblock Better bounds for online scheduling.
\newblock {\em SIAM Journal on Computing}, 29(2):459--473, 1999.
\newblock \href {https://arxiv.org/abs/https://doi.org/10.1137/S0097539797324874} {\path{arXiv:https://doi.org/10.1137/S0097539797324874}}, \href {https://doi.org/10.1137/S0097539797324874} {\path{doi:10.1137/S0097539797324874}}.

\bibitem{sched1}
Y.~Bartal, A.~Fiat, H.~Karloff, and R.~Vohra.
\newblock New algorithms for an ancient scheduling problem.
\newblock {\em Journal of Computer and System Sciences}, 51(3):359--366, 1995.
\newblock URL: \url{https://www.sciencedirect.com/science/article/pii/S0022000085710744}, \href {https://doi.org/10.1006/jcss.1995.1074} {\path{doi:10.1006/jcss.1995.1074}}.

\bibitem{mst}
Dimitris Bertsimas, Patrick Jaillet, and Amedeo~R. Odoni.
\newblock A priori optimization.
\newblock {\em Operations Research}, 38(6):1019--1033, 1990.

\bibitem{stocscenmatching}
Danny Blom, Dylan Hyatt-Denesik, Afrouz~Jabal Amelia, and Bart Smeulders.
\newblock Approximation algorithms for k-scenario matching.
\newblock In Marcin Bie{\'{n}}kowski and Matthias Englert, editors, {\em Approximation and Online Algorithms}, pages 89--103, Cham, 2025. Springer Nature Switzerland.

\bibitem{binpacking2}
Yulle Borges, Vinicius Loti~de Lima, Flávio Miyazawa, Lehilton Pedrosa, Thiago Queiroz, and Rafael Schouery.
\newblock Algorithms for the bin packing problem with scenarios, 05 2023.
\newblock \href {https://doi.org/10.48550/arXiv.2305.15351} {\path{doi:10.48550/arXiv.2305.15351}}.

\bibitem{bosman2023total}
Thomas Bosman, Martijn van Ee, Ekin Ergen, Csan{\'a}d Imreh, Alberto Marchetti-Spaccamela, Martin Skutella, and Leen Stougie.
\newblock Total completion time scheduling under scenarios.
\newblock In {\em Proceedings of the 21st International Workshop on Approximation and Online Algorithms}, pages 104--118. Springer, 2023.

\bibitem{binpacking}
Attila Bódis and János Balogh.
\newblock {Bin packing problem with scenarios}.
\newblock {\em Central European Journal of Operations Research}, 27(2):377--395, June 2019.
\newblock URL: \url{https://ideas.repec.org/a/spr/cejnor/v27y2019i2d10.1007_s10100-018-0574-3.html}, \href {https://doi.org/10.1007/s10100-018-0574-3} {\path{doi:10.1007/s10100-018-0574-3}}.

\bibitem{tightdisc}
Moses Charikar, Alantha Newman, and Aleksandar Nikolov.
\newblock Tight hardness results for minimizing discrepancy.
\newblock pages 1607--1614, 01 2011.
\newblock \href {https://doi.org/10.1137/1.9781611973082.124} {\path{doi:10.1137/1.9781611973082.124}}.

\bibitem{grahamlb}
Ulrich Faigle, Walter Kern, and Gyorgy Turan.
\newblock On the performance of on-line algorithms for partition problems.
\newblock {\em Acta Cybern.}, 9:107--119, 01 1989.

\bibitem{feuerstein2016minimizing}
Esteban Feuerstein, Alberto Marchetti-Spaccamela, Frans Schalekamp, Ren{\'e} Sitters, Suzanne van~der Ster, Leen Stougie, and Anke van Zuylen.
\newblock Minimizing worst-case and average-case makespan over scenarios.
\newblock {\em Journal of Scheduling}, pages 1--11, 2016.

\bibitem{fleischerwahl}
Rudolf Fleischer and Michaela Wahl.
\newblock Online scheduling revisited.
\newblock In Mike~S. Paterson, editor, {\em Algorithms - ESA 2000}, pages 202--210, Berlin, Heidelberg, 2000. Springer Berlin Heidelberg.

\bibitem{galamboswoeginger}
G\'{a}bor Galambos and Gerhard~J. Woeginger.
\newblock An on-line scheduling heuristic with better worst-case ratio than {G}raham’s list scheduling.
\newblock {\em SIAM Journal on Computing}, 22(2):349--355, 1993.
\newblock \href {https://arxiv.org/abs/https://doi.org/10.1137/0222026} {\path{arXiv:https://doi.org/10.1137/0222026}}, \href {https://doi.org/10.1137/0222026} {\path{doi:10.1137/0222026}}.

\bibitem{graham}
R.~L. Graham.
\newblock Bounds for certain multiprocessing anomalies.
\newblock {\em The Bell System Technical Journal}, 45(9):1563--1581, 1966.
\newblock \href {https://doi.org/10.1002/j.1538-7305.1966.tb01709.x} {\path{doi:10.1002/j.1538-7305.1966.tb01709.x}}.

\bibitem{hypertreecoloring}
Yaqiao Li and Denis Pankratov.
\newblock Online vector bin packing and hypergraph coloring illuminated: Simpler proofs and new connections, 06 2023.
\newblock \href {https://doi.org/10.48550/arXiv.2306.11241} {\path{doi:10.48550/arXiv.2306.11241}}.

\bibitem{NAGYGYORGY200823}
J.~Nagy-György and Cs. Imreh.
\newblock Online hypergraph coloring.
\newblock {\em Information Processing Letters}, 109(1):23--26, 2008.
\newblock URL: \url{https://www.sciencedirect.com/science/article/pii/S0020019008002366}, \href {https://doi.org/10.1016/j.ipl.2008.08.009} {\path{doi:10.1016/j.ipl.2008.08.009}}.

\bibitem{exactmatching}
Christos~H. Papadimitriou and Mihalis Yannakakis.
\newblock The complexity of restricted spanning tree problems.
\newblock {\em J. ACM}, 29(2):285–309, April 1982.
\newblock \href {https://doi.org/10.1145/322307.322309} {\path{doi:10.1145/322307.322309}}.

\bibitem{sched2}
Karger~D. R., Phillips~S. J., and Torng E.
\newblock A better algorithm for an ancient scheduling problem.
\newblock {\em Journal of Algorithms}, 1996.

\bibitem{rudinphd}
John~F. Rudin.
\newblock {\em Improved bounds for the on-line scheduling problem.}
\newblock PhD thesis, The University of Texas at Dallas, 2001.

\bibitem{m4sqrt3}
John~F. Rudin and R.~Chandrasekaran.
\newblock Improved bounds for the online scheduling problem.
\newblock {\em SIAM Journal on Computing}, 32(3):717--735, 2003.
\newblock \href {https://arxiv.org/abs/https://doi.org/10.1137/S0097539702403438} {\path{arXiv:https://doi.org/10.1137/S0097539702403438}}, \href {https://doi.org/10.1137/S0097539702403438} {\path{doi:10.1137/S0097539702403438}}.

\bibitem{migration}
Peter Sanders, Naveen Sivadasan, and Martin Skutella.
\newblock Online scheduling with bounded migration.
\newblock In Josep D{\'i}az, Juhani Karhum{\"a}ki, Arto Lepist{\"o}, and Donald Sannella, editors, {\em Automata, Languages and Programming}, pages 1111--1122, Berlin, Heidelberg, 2004. Springer Berlin Heidelberg.

\bibitem{shabtay2022}
Dvir Shabtay and Miri Gilenson.
\newblock A state-of-the-art survey on multi-scenario scheduling.
\newblock {\em European Journal of Operational Research}, 2022.

\bibitem{onlinedisc}
Joel Spencer.
\newblock Balancing games.
\newblock {\em Journal of Combinatorial Theory, Series B}, 23(1):68--74, August 1977.
\newblock \href {https://doi.org/10.1016/0095-8956(77)90057-0} {\path{doi:10.1016/0095-8956(77)90057-0}}.

\bibitem{sixstddeviations}
Joel Spencer.
\newblock Six standard deviations suffice.
\newblock {\em Transactions of the American Mathematical Society}, 289(2):679--706, June 1985.
\newblock Copyright: Copyright 2016 Elsevier B.V., All rights reserved.
\newblock \href {https://doi.org/10.1090/S0002-9947-1985-0784009-0} {\path{doi:10.1090/S0002-9947-1985-0784009-0}}.

\bibitem{Ee2018priori}
Martijn {van Ee}, Leo {van Iersel}, Teun Janssen, and Ren{\'e} Sitters.
\newblock A priori {TSP} in the scenario model.
\newblock {\em Discrete Applied Mathematics}, 250:331--341, 2018.

\end{thebibliography}
\end{document}